\theoremstyle{plain}
\let\abs\relax
\DeclarePairedDelimiter\abs{\lvert}{\rvert}
\let\originalleft\left
\let\originalright\right
\def\left#1{\mathopen{}\originalleft#1}
\def\right#1{\originalright#1\mathclose{}}
\newcommand{\floor}[1]{\ensuremath{\lfloor #1 \rfloor}}
\newcommand{\ceil}[1]{\ensuremath{\lceil #1 \rceil}}
\newenvironment{smallbmatrix}{\bigl[\begin{smallmatrix}}{\end{smallmatrix}\bigr]}
\newcommand{\GenFac}{\textsc{GenFac}\xspace}
\newcommand{\CountGenFac}{\#\GenFac}
\newcommand{\Factor}[1]{\ensuremath{#1}-\textsc{Factor}\xspace}
\newcommand{\BFactor}{\Factor{B}}
\newcommand{\MinBFactor}{\textsc{Min}-\BFactor}
\newcommand{\MaxBFactor}{\textsc{Max}-\BFactor}
\newcommand{\CountFactor}[1]{\#\Factor{#1}}
\newcommand{\CountBFactor}{\CountFactor{B}}
\newcommand{\CountMaxBFactor}{\#\textsc{Max}-\BFactor}
\newcommand{\BFactorRelation}{$B$-\textsc{Factor with Relations}\xspace}
\newcommand{\BFR}{$B$-\textsc{Factor}$^{\mathcal R}$\xspace}
\newcommand{\CountBFR}{\#\BFR}
\newcommand{\SETH}{\textsc{Strong Exponential Time Hypothesis}\xspace}
\newcommand{\ETH}{\textsc{Exponential Time Hypothesis}\xspace}
\newcommand{\PerfMatch}{\textsc{PerfMatch}\xspace}
\newcommand{\newextmathcommand}[2]{%
  \newcommand{#1}{{\xspace\ensuremath{#2}\xspace}}
}
\newcommand{\deff}{\coloneqq}
\newextmathcommand{\from}{\leftarrow}
\newcommand{\EQ}[1]{\ensuremath{\mathtt{EQ}_{#1}}}
\newcommand{\HWin}[2][]{\ensuremath{\mathtt{HW}_{\in #2}%
\ifthenelse{\equal{#1}{}}{}{^{(#1)}}}}
\newcommand{\HWeq}[2][]{\ensuremath{\mathtt{HW}_{= #2}%
\ifthenelse{\equal{#1}{}}{}{^{(#1)}}}}
\newcommand{\Sig}{\textup{\texttt{SIG}}}
\newcommand{\hol}[1]{\ensuremath{\operatorname{\mathtt{Holant}}(#1)}}
\newcommand{\Hol}[1]{\ensuremath{\operatorname{Holant}(#1)}}
\DeclareMathOperator{\poly}{poly}
\DeclareMathOperator{\hw}{hw}
\DeclareMathOperator{\maxgap}{max-gap}
\newcommand{\dotcup}{\mathrel{\dot{\cup}}}
\newextmathcommand{\SetB}{\{0,1\}}
\newextmathcommand{\SetF}{\mathbb F}
\newextmathcommand{\SetN}{\mathbb N}
\newextmathcommand{\SetR}{\mathbb R}
\newextmathcommand{\SetZ}{\mathbb Z}
\newcommand{\sharpP}{\textup{\textsf{\#P}}\xspace}
\newcommand{\NP}{\textup{\textsf{NP}}\xspace}
\newcommand{\W}[1]{\textup{\textsf{W}$[#1]$}\xspace}
\renewcommand{\O}{\mathcal O}
\newcommand{\Ostar}[2][n]{\ensuremath{#2 #1^{\mathcal{O}(1)} }}
\newcommand{\tw}{{\operatorname{tw}}}
\newcommand{\pw}{{\operatorname{pw}}}
\newcommand{\cutw}{{\operatorname{cutw}}}
\author{D\'aniel Marx}
{CISPA Helmholtz Center for Information Security, Saarland Informatics Campus, Germany}
{marx@cispa.saarland}
{}
{}
\author{Govind S. Sankar}
{Indian Institute of Technology Madras, Chennai, India}
{govindbose@gmail.com}
{https://orcid.org/0000-0002-7443-9599}
{}
\author{Philipp Schepper}
{CISPA Helmholtz Center for Information Security, Saarland Informatics Campus, Germany \and
Saarbrücken Graduate School of Computer Science, Saarland Informatics Campus, Germany}
{philipp.schepper@cispa.saarland}
{https://orcid.org/0000-0002-5810-7949}
{}
\authorrunning{D.~Marx, G.S.~Sankar, P.~Schepper}
\title{Degrees and Gaps:
Tight Complexity Results of General Factor Problems Parameterized by Treewidth and Cutwidth}
\titlerunning{Tight Complexity Results of General Factor Problems}
\keywords{General Factor, General Matching, Treewidth, Cutwidth}
\begin{document}
\maketitle

\begin{abstract}
  \renewcommand{\GenFac}{\textsc{General Factor}\xspace}%
 In the \GenFac problem,
  we are given an undirected graph $G$
  and for each vertex $v\in V(G)$ a finite set $B_v$ of non-negative integers.
  The task is to decide if there is a subset $S\subseteq E(G)$
  such that $\deg_S(v)\in B_v$ for all vertices $v$ of $G$.
  Define the $\maxgap$ of a finite integer set $B$ to be the largest $d\ge 0$
  such that there is an $a\ge 0$ with $[a,a+d+1] \cap B = \{a,a+d+1\}$.
  Cornu\'ejols showed in 1988 that if the $\maxgap$ of all sets $B_v$ is at most 1,
then the decision version of \GenFac is polynomial-time solvable.
  This result was extended 2018 by Dudycz and Paluch
  for the optimization (i.e.\ minimization and maximization) versions.
  We present a general algorithm counting the number of solutions of a certain size in time $\Ostar{(M+1)^\tw}$,
  given a tree decomposition of width $\tw$,
  where $M$ is the maximum integer over all $B_v$.
  By using convolution techniques from van Rooij (2020),
  we improve upon the previous $\Ostar{(M+1)^{3\tw}}$ time algorithm by Arulselvan et al.\ from 2018.

  We prove that this algorithm is essentially optimal for all cases
  that are not trivial or polynomial time solvable
  for the decision, minimization or maximization versions.
  Our lower bounds show that such an improvement is not even possible for \BFactor,
  which is \GenFac on graphs where all sets $B_v$ agree with the fixed set $B$.
  We show that for every fixed $B$ where the problem is \NP-hard,
  our $\Ostar{(\max B+1)^{\tw}}$ algorithm cannot be significantly improved:
  assuming the \SETH (SETH),
  no algorithm can solve \BFactor 
  in time $\Ostar{(\max B+1-\epsilon)^\tw}$ for any $\epsilon>0$.
  We extend this bound to the counting version of \BFactor
  for arbitrary, non-trivial sets $B$,
  assuming \#SETH.

  We also investigate the parameterization of the problem by cutwidth.
  Unlike for treewidth, having a larger set $B$ does not appear to make the problem harder:
  we give a $\Ostar{2^\cutw}$ algorithm for any $B$
  and provide a matching lower bound that this is optimal
  for the \NP-hard cases.
\end{abstract}

\section{Introduction}\label{sec:intro}
Matching problems for graphs are widely studied in computer science
\cite{ArulselvanCGMM18,Cornuejols88,CurticapeanM16,DudyczP18,Edmonds65,HoffmannV04,KolisettyLVY19,Lovasz72,MicaliV80,Shiloach81}.
The most prominent ones are \textsc{Perfect~Matching} (\PerfMatch) and \textsc{Maximum-Weight Matching}.
Both problems have long known polynomial-time algorithms \cite{Edmonds65,MicaliV80}
and various generalizations were investigated in the graph-theory literature.
These range from simple extensions such as the $k$-factor problem for a positive integer $k$
(every vertex has to be incident to exactly $k$ edges) \cite{Berge73,KolisettyLVY19},
to more complex ones, where the vertices are assigned intervals \cite{Shiloach81}.
These problems are generally solved by a reduction to \PerfMatch
by replacing the vertices of the original instance with suitable gadgets. 
Lov\'asz introduced a general version of these problems which we call \textsc{General Factor} \cite{Lovasz72}:
\begin{definition}[\textsc{General Factor} (\GenFac)]
  \label{def:generalFactor}
  Let $G=(V,E)$ be an undirected node labeled graph
  where the label of a vertex $v$ is a set $B_v\subseteq \SetN$.
  We say $S \subseteq E$ is a \emph{solution} if
  $\deg_S(v) \in B_v$ for all $v \in V$.
  \GenFac is the problem of deciding whether $G$ has a solution.
\end{definition}
The minimization and maximization versions of \GenFac
are the problems of finding the size of the solution with smallest and largest cardinality, respectively.

\subparagraph*{Polynomial-Time Solvable Cases.}
For several cases (e.g.\ $k$-factor, sets are intervals) reductions to \PerfMatch are known,
leading directly to polynomial-time algorithms.
Cornu\'ejols analyzed the complexity of the general problem
to identify properties of the sets that make the problem easier to solve \cite{Cornuejols88}.
For this he introduced the \emph{gap} of a set:
A gap is a finite sequence of consecutive integers
not contained in the set
but whose boundaries are contained in the set (cf.\ \cref{def:gap}).
For example, the set $\{1,5,6,8\}$ has gaps of size $3$ and $1$.
For a set $S$, $\maxgap S$ denotes the size of its largest gap.
Cornu\'ejols showed that if the max-gaps of all sets are at most $1$,
then the problem is polynomial-time solvable.
Later this result was extended to the maximization and minimization (optimization) versions of \GenFac.
\begin{theorem}[\cite{Cornuejols88,DudyczP18}]
  The decision, maximization, and minimization version of \GenFac can be solved in polynomial time on arbitrary graphs
  if for all nodes $v$, $\maxgap B_v \le 1$.
\end{theorem}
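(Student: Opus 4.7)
The plan is to reduce \GenFac to \textsc{Perfect Matching}, following the classical gadget approach of Cornu\'ejols, and then to lift the reduction to the optimization versions via weighted perfect matching.

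First I would construct, for each vertex $v$, a gadget $H_v$ with external ``ports'' $p_1, \dots, p_{d_v}$ (one per incident edge of $v$, where $d_v = \deg_G(v)$), satisfying the following realizability property: for every subset $T \subseteq \{p_1, \dots, p_{d_v}\}$, the internal vertices of $H_v$ together with the ports in $\{p_1, \dots, p_{d_v}\} \setminus T$ admit a perfect matching if and only if $|T| \in B_v$. Gluing all such gadgets by subdividing each edge $uv$ of $G$ and identifying its two endpoints with the corresponding ports of $H_u$ and $H_v$ yields a graph $G'$ in which perfect matchings are in bijection with $B$-factors of $G$; polynomial-time decidability then follows from Edmonds's algorithm applied to $G'$.

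The main technical work is the gadget construction, and this is exactly where the hypothesis $\maxgap B_v \le 1$ is used. When $B_v$ is a single interval $[a, b]$, one can build $H_v$ from standard ``absorbing'' vertices attached to all ports, calibrated so that between $a$ and $b$ ports are left unmatched internally. A general set with max-gap at most $1$ can be written as an interval with a collection of isolated forbidden values $h$ (each satisfying $h-1, h+1 \in B_v$), and each such hole can be blocked by attaching a small subgadget with an odd-cycle structure that forbids exactly $|T| = h$ while leaving all neighbouring values admissible. The heart of the argument is that such a hole-blocker exists precisely when the hole is bordered on both sides by admissible values; this is what fails for max-gap $\ge 2$, and designing and verifying these subgadgets is the main obstacle.

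For the minimization and maximization versions, following Dudycz--Paluch I would upgrade the reduction to a weighted perfect matching instance: assign weight $1$ to a designated ``edge representative'' inside each subdivided edge and weight $0$ to all remaining edges of $G'$, then run the weighted Edmonds algorithm to compute a minimum (resp.\ maximum) weight perfect matching. Since the gadgets can be arranged so that the total weight of any perfect matching in $G'$ equals $|S|$ for the corresponding $B$-factor $S$, minimum (resp.\ maximum) weight perfect matchings correspond exactly to minimum (resp.\ maximum) cardinality $B$-factors, yielding polynomial-time algorithms for both optimization versions.
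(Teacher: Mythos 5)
The paper does not prove this theorem; it only cites Cornu\'ejols~\cite{Cornuejols88} and Dudycz--Paluch~\cite{DudyczP18}, so there is no internal proof to compare against. More importantly, your proposed proof has a genuine gap at its very first step.

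The realizability property you ask of the gadget $H_v$ cannot hold for any $B_v$ containing both even and odd integers. If $H_v$ has $k$ internal vertices and the internal vertices together with the $d_v-\abs{T}$ unselected ports must admit a perfect matching, then $k+d_v-\abs{T}$ must be even, so every feasible $\abs{T}$ lies in a single residue class modulo~$2$. In particular no such gadget exists already for $B_v=\{0,1\}$ (ordinary matching) or for any interval $[a,b]$ with $a<b$ --- which are the trivial max-gap-$\le 1$ cases. This parity obstruction is structural, not an artifact of a poorly chosen gadget: perfect-matching gates can only express signatures supported on a fixed parity. Your ``odd-cycle hole blocker'' observation is in fact a consequence of this very obstruction (excluding $h$ between $h-1$ and $h+1$ is free by parity when $h-1,h+1$ are both in $B_v$), but it also means the framework cannot handle mixed-parity sets at all, so the reduction collapses before gaps even enter the picture. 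The resolution in the literature is not a local gadget reduction to (weighted) perfect matching; Cornu\'ejols' algorithm is a direct augmenting-walk procedure generalizing Edmonds' blossom algorithm, and Dudycz--Paluch likewise extend the weighted matching machinery directly rather than via a weighted-perfect-matching reduction. Your intuition about why max-gap exactly $1$ is the boundary is reasonable, but the mechanism has to live in the alternating-path/blossom world (or in a reduction that adds \emph{global} parity-absorbing structure), not in per-vertex perfect-matching gadgets.
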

On the other side Cornu\'ejols proved \GenFac to be \NP-complete
if there are nodes with a gap of size two, namely $\{1\}$ and $\{0,3\}$,
by a reduction from \emph{exact $3$-cover}.
More generally, it can be deduced from the work of Feder~\cite{Feder01} that \GenFac becomes \NP-complete whenever every set $B_v$ is restricted to be the same fixed set $B$ having gap size at least two.

\subparagraph*{Treewidth.}
This paper is part of a long sequence of works studying problems parameterized by treewidth
and related metrics like cutwidth or cliquewidth.
Treewidth received significant attention as
many \NP-hard problems like \textsc{Colouring}, \textsc{Independent Set}, or \textsc{Dominating Set} (see \cite{BodlaenderK08} for a survey) are 
polynomial-time solvable on bounded-treewidth graphs.
Courcelle's Theorem \cite{Courcelle90,Courcelle92} shows that a large class of graph problems can 
be solved in linear time on graphs of bounded treewidth. 
Recent developments on the algorithmic side include various techniques such as
Cut \& Count \cite{CyganNPPRW11,Pilipczuk11},
rank-based dynamic programming \cite{BodlaenderCKN13,CyganKN18,FominLS14}
and fast subset convolution \cite{Rooij20,RooijBR09}.
On the negative side, there have been a large number of results showing lower bounds
based on complexity assumptions such as the \ETH (ETH) and the \SETH (SETH) \cite{CaiJ03,CyganDLMNOPSW16,LokshtanovMS11,LokshtanovMS18}.
For many such problems, 
their optimal algorithms utilize some form of dynamic programming, 
where a ``state'' is stored for every node in the tree decomposition.
The number of such states determines the running time of the algorithms, seemingly suggesting that
this number is a natural barrier to the running time of any algorithm.
Typically, the conditional lower bounds confirm this intuition
by showing that no algorithm can break this barrier.

\subparagraph*{New Faster Algorithms.}
One of the first algorithmic results for \GenFac parameterized by treewidth
was given by Arulselvan et al.\ \cite{ArulselvanCGMM18}.
They present an algorithm for a restricted version of the problem
where the sets contain zero and an interval of integers.
This algorithm can be easily extended to handle arbitrary instances
while preserving the running time of $\Ostar{(M+1)^{3\tw}}$
where $M$ is the maximum over all sets assigned to the vertices.
Their algorithm is based on the standard dynamic programming approach when parameterizing by treewidth,
i.e.\ it considers all possible states for each node of the tree decomposition.
The number of states in the dynamic programming is about $(M+1)^{\tw+1}$:
one needs to keep track of the degree of the partial solution at each of the at most $\tw+1$ vertices of a bag of the tree decomposition,
and this degree can be between 0 and $M$.
Therefore, a natural question is whether the algorithm can be improved
to obtain an $\Ostar{(M+1)^\tw}$ running time, matching the number of states.
Such improvements are known for other problems,
for example for \textsc{Dominating Set} and \#\PerfMatch in \cite{RooijBR09}.
We base our algorithm on the same dynamic programming idea, 
but instead of processing all combination of states at join nodes,
we make use of the technique of van Rooij~\cite{Rooij20} to compute fast convolutions,
avoiding this bottle-neck of the computation.
The algorithm can be easily generalized to the optimization and counting versions as well; to unify the results, we present the algorithm in a way that counts all solutions of a certain given size.
\begin{theorem}\label{thm:algo:treewidth:main}
  Given a \GenFac instance $G$ and a tree decomposition of width $\tw$.
  Let $M = \max_{v\in V(G)} \max B_v$.
  Then for all $s$, we can count the solutions of size exactly $s$
  in time $\Ostar{(M+1)^{\tw}}$.
\end{theorem}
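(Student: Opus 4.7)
The plan is to perform bottom-up dynamic programming on a nice tree decomposition, where the only non-trivial step is the join, which I resolve via the generic convolution algorithm of van Rooij~\cite{Rooij20}. After converting the given decomposition into a nice one (with leaf, introduce-vertex, introduce-edge, forget, and join nodes) of the same width, for every node $t$ with bag $X_t$ I maintain a table $A_t[f,k]$ indexed by a partial degree vector $f\colon X_t\to\{0,1,\dots,M\}$ and an edge count $k\in\{0,1,\dots,\abs{E(G)}\}$. The entry $A_t[f,k]$ counts the edge subsets $S$ of the edges introduced in the subtree rooted at $t$ with $\abs{S}=k$, $\deg_S(v)=f(v)$ for every $v\in X_t$, and $\deg_S(v)\in B_v$ for every already-forgotten vertex $v$. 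The answer for size $s$ is then read off at the root.

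The transitions for leaf, introduce-vertex, introduce-edge, and forget nodes are immediate: introducing $v$ initialises $f(v)=0$; introducing edge $\{u,v\}$ replaces each entry by the sum of its ``exclude'' copy and its ``include'' copy (which increments $f(u)$, $f(v)$, and $k$); forgetting $v$ sums $A_t[f,k]$ over $f(v)\in B_v$ and drops that coordinate. Each such transition runs in time proportional to the table size, namely $\Ostar{(M+1)^{\tw+1}}$. The bottleneck is the join node, where two children $t_1,t_2$ sharing the bag $X_t$ must be combined via
\[
A_t[f,k]\;=\;\sum_{\substack{f_1+f_2=f\\ k_1+k_2=k}}A_{t_1}[f_1,k_1]\cdot A_{t_2}[f_2,k_2],
\]
whose direct evaluation takes $\Ostar{(M+1)^{2(\tw+1)}}$, precisely the quadratic blow-up that has to be avoided.

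The crucial step is to view this join as a multi-dimensional convolution along $\tw+2$ independent axes: one per vertex of $X_t$ with range $\{0,\dots,M\}$ (any entry with $f(v)>M$ may be truncated, as such a vertex can never satisfy $\deg_S(v)\in B_v$ later on), together with one axis for the edge count $k$ of polynomial size. The generic convolution framework of van Rooij~\cite{Rooij20} applies a Yates-style axis-by-axis transform that evaluates such a convolution in time proportional to the total table size, up to polynomial factors per axis. This drives the per-join cost down to $\Ostar{(M+1)^{\tw+1}}$; summing over the polynomially many nodes of the nice tree decomposition and absorbing the leftover factor $M+1$ into the polynomial slack of the $\Ostar{\cdot}$ notation yields the claimed total running time $\Ostar{(M+1)^{\tw}}$.

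The main obstacle is precisely this efficient implementation of the join node---the non-join transitions are mechanical. Reducing the join to van Rooij's framework amounts to verifying that the mixed convolution, with several axes of bounded range together with one polynomial-range axis for $k$, fits his setting, which follows from the coordinate-wise nature of his transform. Correctness is then a standard induction on the tree decomposition against the definition of $A_t[f,k]$.
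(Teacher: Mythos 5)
Your proposal correctly sets up the dynamic program (same tables, same transitions), and it is right that the join is the bottleneck and that van Rooij's convolution framework is the tool. But the paragraph where you reduce the join to van Rooij's lemma is where the proof actually has to do work, and as written it glosses over the exact obstruction that makes the running time come out wrong. Van Rooij's result (Lemma~3 in \cite{Rooij20}, which the paper quotes as \cref{lem:cyclicNonCyclicConvolution}) computes a \emph{combined} cyclic/non-cyclic convolution in time $\O\bigl(RQ\,2^{l}(\log R + \log Q + l)\bigr)$, where $l$ is the number of \emph{non-cyclic} axes. The join you need is genuinely non-cyclic in every degree coordinate: $f_1(v)+f_2(v)=f(v)$ over the integers, not modulo $M+1$. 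If you feed all $\tw+1$ degree axes (each of range $\{0,\dots,M\}$) plus the size axis into the non-cyclic part, you pay the factor $2^{l}=2^{\tw+2}$, which gives $\Ostar{(2(M+1))^{\tw}}$ and not $\Ostar{(M+1)^{\tw}}$. Your claim that the transform ``evaluates such a convolution in time proportional to the total table size, up to polynomial factors per axis'' is precisely the thing that is false for non-cyclic axes: the per-axis overhead for non-cyclicity is a constant factor of $2$, which compounds across $\tw+1$ axes into an exponential loss.

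The missing idea, which is the crux of the paper's proof, is to treat the $\tw+1$ degree coordinates as \emph{cyclic} modulo $M+1$ (so they incur no $2^{l}$ overhead) and add a single auxiliary \emph{non-cyclic} coordinate storing the total weight $F=\norm{f}_{1}=\sum_{v\in X_t}f(v)\in[0,(k{+}1)M]$. One defines $a(f,F,s)$ (resp.\ $b$) to equal $c[t_1,f,s]$ (resp.\ $c[t_2,f,s]$) when $\norm{f}_{1}=F$ and $0$ otherwise, and applies the lemma with the degree axes cyclic and only $l=2$ non-cyclic axes ($F$ and $s$). Then from the cyclic convolution $f_1+f_2\equiv f\pmod{M+1}$ together with the integer constraint $\norm{f_1}_{1}+\norm{f_2}_{1}=\norm{f}_{1}$ one deduces that no coordinate actually wrapped around, so $f_1+f_2=f$ holds over $\SetZ$. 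After the convolution one reads off $c[t,f,s]=c(f,\norm{f}_{1},s)$. This gives $R=(M+1)^{k+1}$, $Q=\O(kMm)$, $l=2$, and the desired $\Ostar{(M+1)^{\tw}}$ per join. Without this filtering device your reduction does not yield the claimed bound, so the proof as stated has a genuine gap at the join step.
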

As we shall see, this algorithm is essentially optimal for every fixed $B$ where \BFactor is \NP-hard.
Note that in order to obtain this optimal running time,
we have to use a well-known, but non-trivial technique;
beyond that, our algorithm does not provide new insights into the problem.

\subparagraph*{Tight Lower Bounds for the Decision Version.}
To investigate how the properties of the sets $B_v$ influence the complexity of the problem,
we give conditional lower bounds based on SETH for the restrictive \BFactor problem,
where all sets have to be the same fixed set $B$.
By a careful design our 
lower bounds also hold for a parameterization by \emph{pathwidth}.
Note that if the set is not fixed,
Arulselvan et al.\ showed that \GenFac is \W1-hard when parameterizing only by 
treewidth \cite{ArulselvanCGMM18}.
Thus, it is reasonable to focus only on the cases
with fixed sets to prove tight lower bounds.
\begin{theorem}[Lower Bound for Decision Version]
  \label{thm:lower:dec}
  Let $B\subseteq \SetN$ be a fixed, finite set with $0 \notin B$ and $\maxgap B > 1$.
  If, given a path decomposition of width $\pw$, 
  \BFactor can be solved in time $\Ostar{(\max B+1-\epsilon)^{\pw}}$ for some $\epsilon>0$
  even on graphs with degree at most $2\max B$,
  then SETH is false.
\end{theorem}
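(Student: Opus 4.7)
The plan is a SETH-style reduction from $q$-\textsc{Sat}. Let $c \deff \max B$ and suppose, for contradiction, that some algorithm $\mathcal A$ solves \BFactor in time $\Ostar{(c+1-\epsilon)^\pw}$ on instances of degree at most $2c$. Given a $q$-\textsc{Sat} instance $\varphi$ with $n$ variables and $m$ clauses, I would construct in polynomial time an equivalent \BFactor instance $G_\varphi$ of maximum degree $2c$, together with a path decomposition of width
\[
  \pw(G_\varphi) \;\le\; n/\log_2(c+1) \,+\, o(n) \,+\, O_{q,c,\epsilon}(1).
\]
Running $\mathcal A$ on $G_\varphi$ then decides $\varphi$ in time $\Ostar{(c+1-\epsilon)^{n/\log_2(c+1)}} = \Ostar{2^{(\log_2(c+1-\epsilon)/\log_2(c+1))\,n}} = \Ostar{2^{(1-\delta)n}}$ for some $\delta=\delta(\epsilon,c)>0$; choosing $q=q(\delta)$ sufficiently large then contradicts SETH via the standard sparsification step.

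The reduction would follow the usual ``rows $\times$ columns'' layout of SETH lower bounds for pathwidth. I would use $N \deff \lceil n/\log_2(c+1)\rceil$ horizontal rows, each committing to one element of $\{0,1,\dots,c\}$ and thereby encoding $\log_2(c+1)$ Boolean variables. The central building block is a \emph{selector} gadget: a small $B$-labelled subgraph with exactly $c+1$ valid local factors, each exposing a distinct partial degree on a designated \emph{port} vertex. Chaining selectors along a path forms a \emph{row gadget} of constant pathwidth that forces the same chosen state at every column. For each of the $m$ clauses I would insert a constant-size \emph{clause gadget} (depending only on $q$) that reads the port vertices of the $q$ rows carrying its literals and forbids only the unique ``all literals false'' configuration. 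Stacking all $N$ rows in parallel and sweeping across the $m$ clauses then yields the claimed path decomposition, of width $N + O(q)$.

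The hardest step is building the selector gadget. The na\"ive choice ``state equals the degree of one vertex'' only gives $|B|$ states, which in general is strictly less than $c+1$ (for instance $B=\{1,4\}$ gives $|B|=2$ but $c+1=5$), so would only yield an $\Ostar{|B|^\pw}$ lower bound. To reach exactly $c+1$ states one has to combine the partial degrees of several auxiliary vertices and crucially exploit the gap hypothesis: since $\maxgap B>1$ there exist $a,a+d+1\in B$ with $d\ge 2$ and no element of $B$ strictly in between, so any partial degree lying in the forbidden interval $[a+1,a+d]$ can only be ``saved'' by a uniquely determined number of compensating incident edges. This is what enforces a base-$(c+1)$ rather than base-$|B|$ choice. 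Combined with $0\notin B$, which forces every vertex to be active in any solution, one can then enumerate exactly $c+1$ globally consistent configurations while keeping all degrees bounded by $2c$.

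A secondary subtlety is that $\log_2(c+1)$ need not be an integer: encoding only $\lfloor\log_2(c+1)\rfloor$ bits per row would lose a constant factor in the exponent and fail to match SETH tightly. The standard remedy, which I would use here, is to group the variables into large super-blocks of size $t\log_2(c+1)$ and encode each super-block across $t$ consecutive selectors by their base-$(c+1)$ representation. As $t\to\infty$ the amortised bits-per-row rate converges to $\log_2(c+1)$, absorbing the discrepancy into the $o(n)$ term of the pathwidth bound and recovering the tight $(c+1)^\pw$ lower bound.
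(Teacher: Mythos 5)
Your high-level architecture matches the paper's: rows that commit to one of $\max B+1$ states encoding a block of variables, columns/gadgets that check clauses, a base-$(\max B+1)$ packing argument to get the exponent right, and the super-block trick to absorb the non-integrality of $\log_2(\max B+1)$. However, the technical heart of the proof is missing, and the one place where you do sketch a mechanism, you attribute the key hypothesis to the wrong role. You say the gap $\maxgap B>1$ is "what enforces a base-$(c+1)$ rather than base-$|B|$ choice," i.e.\ that the gap is needed to build a selector with $c+1$ distinguishable states. That is not where the $c+1$ states come from, and it is not substantiated how a forbidden interval $[a+1,a+d]$ would yield exactly $c+1$ locally consistent configurations on a single port. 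The paper obtains the $M+1$ states essentially for free: a "simple" vertex has $M=\max B$ parallel edges to its left neighbour and $M$ to its right neighbour, and in any solution exactly $M$ of these $2M$ edges are selected, so the split $(k,M-k)$ ranges over $M+1$ values. What requires work is (i) forcing every such vertex to have degree exactly $M$ (rather than some other element of $B$) and forcing the state to be the \emph{same} at every column of the row --- the paper does this with a global counting argument over all rows of a layer, restricting to encodings of fixed total weight $gM/2$, which is exactly the consistency step your "chaining selectors forces the same chosen state" sentence asserts without proof; and (ii) realizing the checking relations by $B$-homogeneous gadgets of degree at most $\max B+2$.

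Step (ii) is where $0\notin B$ and $\maxgap B>1$ are actually consumed, and it is the part your proposal leaves entirely open. One needs concrete $B$-labelled graphs realizing forced edges (\HWeq[2]{2}, via a $(\min B+1)$-clique with a split edge, using $0\notin B$), equality relations $\EQ{d+1}$ and $\EQ{k}$ (using the gap $[a,a+d+1]\cap B=\{a,a+d+1\}$), and "exactly one" relations $\HWeq[k]{1}$, with a case analysis on the parity structure of $B$ (all even, all odd, mixed) because, e.g., when $B$ contains only even numbers one can only realize relations of even Hamming weight and must pair up odd-arity gadgets. From these primitives one then builds a realization of an arbitrary even relation (the Curticapean--Marx-style construction with one $\EQ{}$ node per accepted tuple), which is what lets a clause gadget "forbid" the bad combinations of row states. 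Your clause gadget is also imprecise as stated: since each row state encodes several Boolean variables, the set of state combinations making a clause false is not a unique configuration but a (constant-size) set, so the gadget must realize a genuine relation, not a single forbidden point. Without the realization lemmas, the reduction does not produce a \BFactor instance at all, so as written the argument has a genuine gap rather than being a complete alternative proof.
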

The same result immediately follows for treewidth
as for all graphs the pathwidth forms an upper-bound for the treewidth \cite{ParameterizedAlgos}.
Hence, our algorithm is optimal not only for \GenFac but also for \BFactor parameterized by treewidth and does not allow major improvements.

\subparagraph*{Tight Lower Bounds for the Optimization Version.}
It suffices to consider the maximization version with $\maxgap B>1$ and $0\in B$ for the optimization version.
The other cases are either polynomial-time solvable ($\maxgap B\le 1$ or $0\in B$ for \MinBFactor) \cite{DudyczP18}
or the hardness directly follows from the lower bound for the decision version.
Observe that the assumption $0\in B$ does not make the problem trivially solvable.
For these cases, we give essentially the same lower bound as for the decision version.
Again the bound rules out that we can improve the given algorithm substantially;
the running time is essentially optimal.
\begin{theorem}[Lower Bound for Maximization Version]
  \label{thm:lower:maximization}
  Let $B\subseteq \SetN$ be a fixed, finite set
  with $\maxgap B > 1$.
  If, given a path decomposition of width $\pw$, 
  \MaxBFactor can be solved in time $\Ostar{(\max B+1-\epsilon)^{\pw}}$ for some $\epsilon>0$
  even on graphs with degree at most $2\max B$,
  then SETH is false.
\end{theorem}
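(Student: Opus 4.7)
The argument splits into two cases according to whether $0 \in B$.

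\emph{Case $0 \notin B$.} The bound follows immediately from \cref{thm:lower:dec}. Any algorithm solving \MaxBFactor in time $\Ostar{(\max B+1-\epsilon)^{\pw}}$ also decides feasibility in the same time, since a $B$-factor exists if and only if the maximisation does not return ``infeasible''. An improved maximisation algorithm would therefore contradict SETH via the decision lower bound.

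\emph{Case $0 \in B$.} Here the decision version is trivially solvable by $S = \emptyset$, so \cref{thm:lower:dec} cannot be re-used. Instead, I would give a direct SETH-based reduction from $q$-CNF-SAT following the template underlying \cref{thm:lower:dec}. Given a $q$-CNF formula $\phi$ on $n$ variables, group the variables into blocks of size $\log_2(\max B + 1)$ and build a graph with one variable gadget per block arranged along a path, together with clause gadgets. Using the gap $\{a,a+d+1\}\subseteq B$ of size $d+1\geq 3$, each variable gadget should expose all $\max B + 1$ partial-degree states at its boundary vertex, forcing the natural DP to track all of them; the resulting path decomposition has width $n/\log_2(\max B+1) + O(1)$, yielding the desired $(\max B+1)^{\pw}$ SETH barrier.

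To make the maximisation objective bite despite $0 \in B$, I would attach to each variable gadget a \emph{filler} subgraph, also labelled $B$ and respecting the $2\max B$ degree cap. The filler is designed so that a fixed number $F$ of its edges can be absorbed into the solution precisely when the variable gadget settles into one of the designated ``encoding'' states (the degrees used to represent an assignment); when internal degrees degenerate to $0$, strictly fewer than $F$ filler edges can be absorbed. Combined with clause gadgets that additively reward satisfaction, a predetermined target $T$ is attained as the maximum size if and only if $\phi$ is satisfiable, so \MaxBFactor decides SAT and the claimed lower bound follows.

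The main obstacle is the filler construction. It must simultaneously (i) penalise the degenerate ``all-zero'' configurations through the maximisation objective alone, (ii) stay within the uniform degree cap $2\max B$, and (iii) add only an additive $O(1)$ to the width of each block. The construction naturally splits by the gap location: when $a>0$, both encoding degrees are strictly positive and a straightforward padding by auxiliary vertices that demand a non-zero degree from each gadget vertex suffices; when $a=0$, the state one wishes to penalise is itself one of the encoding options, and one must either re-encode the binary choice using $\{a+d+1,b\}$ for a second positive element $b\in B$ (which exists since $\maxgap B\geq 2$ forces $|B|\geq 2$), or introduce an ``anchor'' subgadget whose feasibility outright forbids degree $0$ at the variable vertex. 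Verifying that such constructions exist uniformly for every fixed $B$ with $\maxgap B\geq 2$ and integrating them into the pathwidth and degree bookkeeping is the principal technical work.
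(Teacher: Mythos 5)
Your high-level decomposition is sound: the case $0 \notin B$ does reduce to the decision lower bound exactly as you say, and for $0 \in B$ you correctly identify that the penalty must be enforced through the maximization objective rather than through infeasibility, since the empty set is always a solution. You also correctly pinpoint the filler construction as the crux. However, the specific mechanisms you sketch for the filler would not work, and the genuinely new idea the paper needs is absent.

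The trouble is that both of your proposed filler variants implicitly fall back on \emph{forcing} edges. In the $a>0$ sub-case you speak of ``auxiliary vertices that demand a non-zero degree'' and in the $a=0$ sub-case of an ``anchor subgadget whose feasibility outright forbids degree $0$.'' When $0 \in B$ a $B$-labelled vertex can always take degree $0$, so no $B$-homogeneous subgraph can make degree $0$ infeasible at a portal; the only lever is the solution \emph{size}. Re-encoding with two positive list elements does not help either: the vertex can still drop to degree $0$, and you need that to cost a lot, not merely to fail some local test. The paper's actual device (\cref{lem:opt:all}, via \cref{cor:opt:highGirth}) is to build the gadget from an $(a{+}d{+}1)$-regular graph of girth at least $\beta$. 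If a portal's dangling edge is unselected then the portal's degree drops into the gap $(a,a{+}d{+}1)$, so some other incident edge must also be missing; that missing edge propagates the same deficiency to its other endpoint, and because the girth is large this chain cannot close up for at least $\beta$ steps, losing at least $\beta$ edges relative to the ``all selected'' solution. That is the penalty. Nothing in your sketch produces this cascading loss, and it is the step that actually makes the maximization objective discriminate.

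Two smaller remarks. First, the paper factors the argument through the intermediate problem \BFR and a notion of ``realization with penalty'' (\cref{def:opt:realization}, \cref{thm:opt:realization}, \cref{lem:opt:targetValue}) so that the SAT-to-pathwidth bookkeeping from the decision case (\cref{thm:lower:satToBFR}) can be reused verbatim; your plan to redo the whole SAT reduction from scratch is not wrong, just needlessly duplicative. Second, the split you propose between $a>0$ and $a=0$ does not appear in the paper and is not needed once the high-girth penalty gadget is in place; the case analysis that does occur in the paper's realization lemmas (\cref{lem:opt:negation:notInTheSet}, \cref{lem:opt:negation:inTheSet}) is driven by whether $\max B - 1 \in B$, which controls parity and how to realize $\HWeq{1}$-type relations.
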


\subparagraph{Counting.}
It is well known that \PerfMatch can be solved in polynomial time \cite{Edmonds65}.
Surprisingly, Valiant showed in \cite{Valiant79_permanent}
that \emph{counting} the number of perfect matchings of a graph
is as hard as counting satisfying assignment of a boolean formula.
This is curious
as (presumably) no polynomial-time algorithm for the decision version of the latter problem exists.
The observation then led to the definition of the complexity class \sharpP
containing the counting problems
whose corresponding decision version lies in \NP.
Indeed, this feature that some structures are easy to find but hard to count appears in our work as well.
Apart from \#\PerfMatch, which itself is \CountFactor{\{1\}},
our results imply that \CountBFactor is \sharpP-hard for any finite, fixed $B$.
This contrasts with the decision version, where the problem is easy when $\maxgap B\leq 1$.
Over and above showing \sharpP-hardness, we show a tight lower bound for \CountBFactor, assuming \#SETH, the counting version of SETH.
There have been several results \cite{Curticapean16,CurticapeanM16,DellHMTW14} based on \#SETH and \#ETH.
Some of our constructions were inspired by one such work by Curticapean and Marx \cite{CurticapeanM16},
where they show a lower bound of $(2-\epsilon)^{\pw}n^{\O(1)}$ for \#\PerfMatch on graphs assuming \#SETH.
We prove a wide generalization of this result by providing a tight lower bound for every \CountBFactor problem.
As for the optimization and decision version,
our algorithm shows the tightness of this lower bound.

\begin{theorem}[Lower Bound for Counting Version]
  \label{thm:lower:counting}
  Let $B\subseteq\SetN$ be a nonempty, fixed, and finite set such that $B\neq \{0\}$.
  If, given a path decomposition of width $\pw$, 
  \CountBFactor can be solved in time $\Ostar{(\max B+1-\epsilon)^{\pw}}$ for some $\epsilon>0$
  even on graphs with degree at most $2\max B+6$,
  then \#SETH is false.
\end{theorem}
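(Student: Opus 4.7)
My plan is a reduction from \#CNF-SAT, which under \#SETH requires time $\Ostar{(2-\epsilon')^n}$. Set $c \coloneqq \max B + 1 \ge 2$. Given a $q$-CNF formula $\phi$ on $n$ variables, I would construct in polynomial time a graph $G_\phi$ with a path decomposition of width $\pw \le n/\log_2 c + \O(1)$ and maximum degree at most $2\max B + 6$, such that the number of $B$-factors of $G_\phi$, together with polynomially many similar counts, determines the number of satisfying assignments of $\phi$. An algorithm for \CountBFactor in time $\Ostar{(c-\epsilon)^{\pw}}$ would then yield a \#CNF-SAT algorithm of time $\Ostar{(c-\epsilon)^{n/\log_2 c}} = \Ostar{(2-\epsilon'')^n}$, contradicting \#SETH.

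The construction uses the standard SETH-reduction framework: partition the variables into $g \approx n/\log_2 c$ groups of size $\lfloor \log_2 c \rfloor$; encode each group as a horizontal \emph{track} whose state at each position ranges over $c$ values indexed by $B$; and attach a constant-size \emph{clause gadget} for each clause of $\phi$ to the tracks carrying that clause's variables. A left-to-right sweep through the clauses yields a path decomposition whose bags contain one boundary vertex per track plus the current clause gadget, giving width $g + \O(1)$. The degree bound $2\max B + 6$ is enforced by replacing any multi-edge or locally high-degree wiring by a constant-size subgraph with the same contribution to the count.

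Gadget design splits into two regimes. When $\maxgap B > 1$, I can adapt the reductions of \cref{thm:lower:dec,thm:lower:maximization}: the gap already provides a vertex with $c$ legal degrees that acts as a state carrier, and clause gadgets can strictly forbid unsatisfying assignments, so the number of $B$-factors equals the number of satisfying assignments up to a computable multiplicative constant accounting for internal gadget symmetries. When $\maxgap B \le 1$, $B$ is an interval and the decision problem is in \P, so no gadget can outright exclude assignments; here I would follow the polynomial-interpolation strategy of Curticapean and Marx~\cite{CurticapeanM16} for \#\PerfMatch. Concretely, I would design a family of instances $G_\phi^{(k)}$ in which an auxiliary parameter $k$ controls the multiplicity of a parallel-edge-type gadget attached to each clause, so that the $B$-factor count of $G_\phi^{(k)}$ is a polynomial in $k$ whose coefficient on a distinguished monomial equals the desired satisfying-assignment count; evaluating at $\poly(n)$ values of $k$ and inverting a Vandermonde system recovers it.

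The principal obstacle is the interval regime. For $B = \{1\}$ the matrix-gadget argument for perfect matchings applies cleanly, but for larger intervals such as $B = \{1,2\}$ or $B = \{2,3,4\}$ one has to design a signature gadget whose $c$ possible external degree-interactions yield linearly independent count-polynomials in the interpolation parameter, while simultaneously respecting the tight degree bound $2\max B + 6$. I expect the right construction to combine short pendant paths (which shift allowed degrees by fixed amounts) with carefully engineered parallel-edge gadgets; verifying that the resulting coefficient matrix is invertible reduces to a non-trivial but concrete linear-algebra check, which I anticipate being the crux of the argument.
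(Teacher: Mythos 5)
Your high-level plan (SETH-style grouping reduction plus interpolation) is in the right family, but the split into ``$\maxgap B > 1$: adapt the decision/optimization gadgets'' versus ``$\maxgap B \le 1$: interpolate'' does not survive scrutiny. First, a minor but telling slip: $\maxgap B \le 1$ does not make $B$ an interval (consider $B=\{0,2,4\}$). More importantly, the decision realizations of \cref{thm:dec:realization} assume $0 \notin B$, and the optimization realizations of \cref{thm:opt:realization} deliberately tolerate many suboptimal partial solutions (that is the whole point of the penalty mechanism), so neither yields a parsimonious reduction as written. For a counting lower bound you need the gadgets to have \emph{uniform multiplicity} over accepted inputs, which is a strictly stronger property than ``accepts exactly the desired inputs'' and must be argued per gadget; your proposal asserts this ``up to a computable multiplicative constant'' without any justification, and when $0 \in B$ and $\maxgap B>1$ (a case your statement covers but your gadget plan does not) the available gadgets simply do not have it.

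The paper instead treats all $B \neq \{0\}$ uniformly, and the route goes through the Holant framework rather than directly through the decision/optimization gadgets. The reduction \cref{thm:count:bfr-to-gen-matching} is a chain: the even relations of the intermediate \CountBFR instance are first realized as \emph{gates} built from $\HWeq{1}$ signatures with edge weights in $\{-1,\tfrac12,1\}$ (Lemma~3.3 of \cite{CurticapeanM16}); the edge weights are then removed by polynomial interpolation; $\HWeq{1}$ is traded for $\HWin{\{0,1\}}$, then for $\HWeq[1]{1}$ by attaching pendant $\HWin{B}$ structures, then paired into $\HWeq[2]{2}$ to handle parity; and a final interpolation with path gadgets eliminates the auxiliary signatures, using \cref{prop:count:distinct-ratio} (an invertibility/eigenvector argument for a $2\times 2$ transfer matrix) to guarantee that the evaluation points $P_1(d)/P_0(d)$ are distinct. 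You correctly flagged matrix invertibility as the crux, but you did not identify the actual mechanism (pendant paths and the transfer-matrix argument), nor the need to pass through an intermediate relational problem whose relations must be even precisely so that the Curticapean--Marx gate lemma applies. Your simplified ``one track of $c$ states per group'' encoding also drops the weight-$gM/2$ balancing that the paper uses to make those relations even, which you would need to replace by some other parity fix. In short: your proposal sketches the right skeleton for the $B=\{1\}$ special case but misses the Holant chain of interpolations that lets the paper handle every $B\neq\{0\}$ in one pass, and the regime split you propose as a shortcut does not actually work.
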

We also investigate \CountMaxBFactor, the problem of counting maximum-sized solutions. 
The following argument shows that \CountFactor{\{\max B\}} can be reduced to \CountMaxBFactor without increasing pathwidth,
hence \cref{thm:lower:counting} gives a lower bound of $\Ostar{(\max B+1-\epsilon)^{\pw}}$.
Consider an instance of \CountFactor{\{\max B\}} on a graph $G$ of pathwidth $\pw$. 
In polynomial time, check if $G$ has some \Factor{\{\max B\}} \cite{Cornuejols88}. 
If it does not, then output 0. 
If it does, then solve \CountMaxBFactor on $G$. As now every maximum-sized \Factor{B} is actually a \Factor{\{\max B\}}, this indeed solves the
\CountFactor{\{\max B\}} problem.
\begin{corollary}
	Let $B\subseteq \SetN$ be a fixed, finite set such that $B\neq \{0\}$. 
	If, given a path decomposition of width $\pw$, 
	\CountMaxBFactor can be solved in time $\Ostar{(\max B+1-\epsilon)^{\pw}}$ for some $\epsilon>0$
	even on graphs with degree at most $2\max B+6$,
	then SETH is false.
\end{corollary}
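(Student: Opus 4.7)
The plan is to reduce $\CountFactor{\{\max B\}}$ to $\CountMaxBFactor$ on the same underlying graph and then invoke \cref{thm:lower:counting} applied to the singleton set $\{\max B\}$. The applicability of that theorem is immediate: since $B$ is nonempty, finite and $B\neq\{0\}$, we have $\max B\ge 1$, so $\{\max B\}$ is itself a nonempty, fixed, finite subset of $\SetN$ distinct from $\{0\}$. Observe moreover that $\max\{\max B\}=\max B$, so the exponent base $\max B+1$ matches exactly between the two problems, and the degree bound $2\max B+6$ is the same in both.

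First I would take an instance $G$ of $\CountFactor{\{\max B\}}$ together with a path decomposition of width $\pw$, assuming $G$ has maximum degree at most $2\max B+6$. Because $\{\max B\}$ is a singleton it contains no gap, so by Cornu\'ejols' theorem one can decide in polynomial time whether $G$ admits a $\Factor{\{\max B\}}$. If not, the answer to the counting problem is $0$ and the reduction outputs this directly. Otherwise, I would invoke the hypothesised $\CountMaxBFactor$ algorithm on the very same graph $G$; since $G$ is unchanged, the supplied path decomposition and the degree bound of $2\max B+6$ both remain valid.

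Correctness of this second case rests on one short observation: for any $B$-factor $S$ of $G$ we have $2\abs{S}=\sum_v\deg_S(v)\le n\cdot\max B$, with equality if and only if $\deg_S(v)=\max B$ for every vertex, i.e.\ $S$ is a $\Factor{\{\max B\}}$. Because at least one $\Factor{\{\max B\}}$ exists in this branch, the bound is tight and the maximum-cardinality $B$-factors are precisely the $\Factor{\{\max B\}}$'s; counting them therefore answers $\CountFactor{\{\max B\}}$. Combining everything, an $\Ostar{(\max B+1-\epsilon)^\pw}$ algorithm for $\CountMaxBFactor$ under the stated degree bound would yield one of the same running time for $\CountFactor{\{\max B\}}$, contradicting \cref{thm:lower:counting}. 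I expect no real obstacle here: the reduction does not modify $G$ at all, so preservation of pathwidth and degree bound is automatic, and the only points worth verifying are that $\{\max B\}$ satisfies the hypotheses of the target theorem and that the identity above indeed singles out the $\{\max B\}$-factors as the maximum $B$-factors.
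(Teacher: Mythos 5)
Your proposal is correct and follows essentially the same route as the paper: reduce $\CountFactor{\{\max B\}}$ to \CountMaxBFactor on the unchanged graph, using Cornu\'ejols' polynomial-time decision procedure to handle the no-factor case and the observation that, once a $\{\max B\}$-factor exists, the maximum-cardinality $B$-factors are exactly the $\{\max B\}$-factors, then apply \cref{thm:lower:counting} to the set $\{\max B\}$. Your explicit handshake-lemma justification of that last observation is a welcome detail the paper leaves implicit.
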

We leave open the question of a tight lower bound for the minimization version.

\subparagraph*{Parameterizing by Cutwidth.}
As previously mentioned, pathwidth and treewidth are not the only parameters used in parameterized complexity.
Cutwidth, cliquewidth, genus, and crossing number are only a few more examples of a vast class of possible parameters.
For cutwidth,
we consider linear layouts of graphs with $n$ vertices,
which are just enumerations $v_1,\dots,v_n$ of all graph vertices.
Then the cut after vertex $v_i$ consists of all edges in $G$ with one end in $\{v_1,\dots,v_i\}$ and the other end in $\{v_{i+1},\dots,v_n\}$.
The \emph{cutwidth} of a linear layout is the maximum over the size of the cut after every $v_i$.
The cutwidth $\cutw$ of a graph is the minimum over the cutwidths of all possible linear layouts.
As $\tw\le\pw\le\cutw$,
it is not completely surprising that we get different upper bounds for cutwidth.
But now
a simple dynamic program suffices to prove the upper bound for this case.
Further, the running time of the algorithm is independent from the maximum of the set $B$.
\begin{theorem}\label{thm:algo:cutwidth:main}
  Given a linear layout of a \GenFac instance $G$ with width $\cutw$, 
  for all $s$
  we can count 
  the number of solutions of size exactly $s$ 
  in time $\Ostar{2^\cutw}$,.
\end{theorem}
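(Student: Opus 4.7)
The plan is a left-to-right dynamic program driven by the given linear layout $v_1,\dots,v_n$ of cutwidth $\cutw$. Writing $C_i$ for the cut after $v_i$ (so $|C_i|\le\cutw$, and $C_0=C_n=\emptyset$), we maintain a table $D_i[S,s]$ indexed by a subset $S\subseteq C_i$ and a size $s\in\{0,1,\dots,|E(G)|\}$. The entry $D_i[S,s]$ counts edge sets $F$ drawn from the edges already seen at stage $i$ (namely, edges inside $\{v_1,\dots,v_i\}$ together with $C_i$) such that $F\cap C_i=S$, $|F|=s$, and every $v_j$ with $j\le i$ already satisfies $\deg_F(v_j)\in B_{v_j}$. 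We initialize $D_0[\emptyset,0]=1$; the desired count is $D_n[\emptyset,s]$, since $C_n=\emptyset$.

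To pass from $D_i$ to $D_{i+1}$, we incorporate $v_{i+1}$. Let $E^-\subseteq C_i$ be the edges of $v_{i+1}$ going back into $\{v_1,\dots,v_i\}$ and $E^+\subseteq C_{i+1}$ those going forward into $\{v_{i+2},\dots,v_n\}$. The update splits into two phases. First, a \emph{shrink phase} projects out the $E^-$ coordinates while accumulating their contribution to the degree of $v_{i+1}$: we build an intermediate table $D'_i[S',s,d]$ with $S'\subseteq C_i\setminus E^-$ and $d$ equal to the number of selected backward edges. Second, a \emph{grow phase} introduces the edges of $E^+$ one at a time; after $k$ insertions, the state is indexed by subsets of $(C_i\setminus E^-)\cup\{\text{first }k\text{ forward edges}\}$, and each insertion doubles the table with the two children differing from the parent by $0$ or $1$ in both $d$ and $s$ depending on whether the new edge is selected. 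When $E^+$ is exhausted, the bitmask lives on $C_{i+1}$; we discard entries with $d\notin B_{v_{i+1}}$ and drop the $d$ coordinate to obtain $D_{i+1}$.

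For the running time, the crucial observation is that throughout both phases the bitmask indexing the table never exceeds $\cutw$ in length: during the shrink phase it is $|C_i|\le\cutw$, and during the grow phase, after inserting $k\le|E^+|$ forward edges, it is $|C_i\setminus E^-|+k\le|C_{i+1}|\le\cutw$. The auxiliary coordinate $d$ contributes a factor of at most $\cutw+1$ and $s$ a polynomial factor, so each vertex triggers $O(\cutw)$ table updates of cost $\Ostar{2^\cutw}$, yielding $\Ostar{2^\cutw}$ overall. The main obstacle I anticipate is precisely this scheduling: a naive update that enumerates subsets of $C_i$ and of $E^+$ independently would take $2^{|C_i|+|E^+|}$ time, which can be as large as $2^{2\cutw}$ since $C_i\cup E^+$ need not lie inside any single cut. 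Eliminating $E^-$ \emph{before} introducing any edge of $E^+$, and introducing the forward edges one at a time, is what keeps every intermediate bitmask of size at most $\cutw$ and makes the running time independent of $\max B$.
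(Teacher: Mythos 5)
Your proposal is correct and matches the paper's algorithm in all essentials: a left-to-right DP over the linear layout whose state is the set of selected edges in the current cut, with the crucial ordering that the backward edges $E^-$ are eliminated \emph{before} any forward edge of $E^+$ is introduced, so every intermediate bitmask stays within $\cutw$. The paper packages this slightly differently---it builds an intermediate table $h[\ell,\widehat C,s]$ indexed by the number $\ell$ of selected forward edges and then performs a batch lookup $c[i,\widehat C\cup S^+,s]\coloneqq h[\abs{S^+},\widehat C,s]$, rather than growing $E^+$ one edge at a time with an accumulated-degree coordinate $d$---but these are cosmetic variants of the same two-phase shrink/grow idea. One small slip: $d$ can reach $\deg(v_{i+1})\le\abs{E^-}+\abs{E^+}\le 2\cutw$ (or, more usefully, can be capped at $\max B_{v_{i+1}}$), not $\cutw$, but this remains polynomial and does not affect the claimed $\Ostar{2^\cutw}$ bound.
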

This again matches the number of states for each cut of the linear layout. 
Like before, we omit the algorithm here and refer the reader to the full version. 
Note that the running times appearing in Theorems~\ref{thm:algo:treewidth:main} and \ref{thm:algo:cutwidth:main} cannot be directly compared: the base is lower when parameterized by cutwidth, but cutwidth can be larger than treewidth.

By a modified high-level construction, we show
matching lower bounds based on SETH for the decision and optimization versions,
and, based on \#SETH, for the counting version.
\begin{theorem}[Lower Bounds for Cutwidth]\label{thm:lowerCW:main}
  Let $B\subseteq \SetN$ be a fixed, nonempty set of finite size.
  If, given a linear layout of width $\cutw$,
  the following problems can be solved in time $\Ostar{(2-\epsilon)^{\cutw}}$ for any $\epsilon>0$ 
  even on graphs with degree at most $2\max B +6$,
  then SETH (resp.\ \#SETH) fails:
  (1)~\BFactor and \MinBFactor if $0 \notin B$ and $\maxgap B > 1$,
  (2)~\MaxBFactor if $\maxgap B > 1$,
  and (3)~\CountBFactor if $B \neq \{0\}$.
\end{theorem}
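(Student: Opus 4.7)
The plan is to reduce $q$-SAT (resp.\ $\#q$-SAT) parameterized by the number of variables~$n$ to the four factor problems, producing a graph~$G$ together with a linear layout whose cutwidth is $n+\O(1)$ and such that solutions of $G$ correspond (bijectively, in the counting version) to satisfying assignments of the formula. Any algorithm with running time $\Ostar{(2-\epsilon)^{\cutw}}$ would then solve $q$-SAT in $\Ostar{(2-\epsilon)^n}$, violating SETH (resp.\ \#SETH). The philosophy is the same as for the pathwidth bounds (Theorems~\ref{thm:lower:dec},~\ref{thm:lower:maximization},~\ref{thm:lower:counting}), except that one cannot ``pack'' several SAT variables into a single cut position: each edge of a cut has only two in/out states, which is exactly why the target base drops from $\max B+1$ to $2$.

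\textbf{High-level construction.} For each variable $x_i$ I would build a horizontal \emph{wire} running through the whole layout: a long sequence of vertices joined by edges whose ``in the solution / out of the solution'' pattern encodes the truth value of $x_i$. The wire is designed so that at every cut it contributes just one edge (or a fixed constant), and so that the allowed patterns on it respect $B$ at every vertex. Clause gadgets are inserted as local blocks placed between consecutive cut positions; a clause gadget for $C_j$ attaches to the wires of the variables occurring in $C_j$ and is satisfiable iff at least one literal of $C_j$ is set to true. Stacking the $n$ wires vertically and ordering the gadgets left-to-right yields a linear layout with cutwidth $n+\O(1)$ coming from the $n$ wire edges plus a constant for the currently processed clause gadget. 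The degree bound $2\max B+6$ is maintained by choosing the wire and gadget degrees carefully.

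\textbf{Adapting to the four versions of~$B$.} The wire and clause gadgets used for the pathwidth lower bounds already handle the different regimes of~$B$ (namely $0\notin B$ with $\maxgap B>1$; $\maxgap B>1$ in general; $B\neq\{0\}$ for counting), and the same gadgets can be reused here essentially verbatim; only the global arrangement changes so as to enforce cutwidth instead of pathwidth. For the decision and \MinBFactor case with $0\notin B$, any solution must pick at least one edge on each wire, forcing a ``true/false'' choice. For \MaxBFactor the analogous choice is forced by requesting the maximum cardinality. For \CountBFactor one needs that the number of local extensions inside each gadget is the same regardless of the assignment (or differs by a known polynomial factor) so that the total solution count equals the number of satisfying assignments up to a known factor; this is exactly the property already engineered in the pathwidth counting proof.

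\textbf{Main obstacle.} The principal technical difficulty is designing the wire gadget so that it obeys a given fixed set $B$ at every internal vertex, uses only $\O(1)$ edges across any cut, and still transmits the chosen truth value faithfully across arbitrary distances. When $B$ is restrictive (e.g.\ $B=\{b\}$ is a singleton, or $1\notin B$, or $0\in B$ in the maximization case) the wire must be padded with auxiliary vertices whose degree constraints are tight; the counting version additionally requires that the number of padded extensions is independent of the value transmitted. Once the right wire and clause gadgets are chosen, verifying correctness of the reduction and that $\cutw\le n+\O(1)$ is routine, and the four conditional lower bounds in the statement follow uniformly from a standard SETH calculation.
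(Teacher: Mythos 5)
Your proposal is correct and takes the same route as the paper: a grid with one horizontal wire per SAT variable (so cutwidth $n+\O(1)$), clause-checking relation vertices placed column by column, and then substitution of those relations by the $B$-homogeneous realizations built for the pathwidth lower bounds (Theorems~\ref{thm:dec:realization}, \ref{thm:opt:realization}, \ref{thm:count:bfr-to-gen-matching}). Two small imprecisions to watch: each wire must contribute \emph{exactly} one edge to each cut --- the parenthetical ``or a fixed constant'' would yield cutwidth $cn+\O(1)$ and a $(2-\epsilon)^{cn}$-time SAT algorithm, which does not contradict SETH for small $\epsilon$ --- and what is actually reused from the pathwidth proofs is the relation-realization machinery rather than the pathwidth relations themselves, since the cutwidth construction uses simpler constant-arity relations because there is no variable grouping.
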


\subsection{Techniques}
\subparagraph{\NP-Hardness.}
Existing results that explicitly show hardness for the \GenFac (or \BFactor) problem have been limited.
For example, Cornu\'ejols showed in \cite{Cornuejols88} the hardness only for the case when the lists are either $\{0,3\}$ or $\{1\}$.
However, as we discuss below, the results from Feder in \cite{Feder01} implicitly show that the \BFactor problem is \NP-hard
when the allowed lists have a gap of size more than 1.
Feder~\cite{Feder01} considered the complexity of Boolean CSP instances where each variable appears in two constraint and each constraint enforces a relation $R$.
As pointed out by Dalmau and Ford in \cite{DalmauF03},
the results in \cite{Feder01} show that when $R$ does not form a $\Delta$-matroid, then the problem is \NP-hard.

We say that the relation $R$ is {\em symmetric} if membership of a tuple in $R$ depends only on the Hamming weight of the tuple, i.e., $R$ contains exactly those tuples that have Hamming weight from a set $B$. In this case, a CSP instance where each variable appears in two $R$-constraints can be naturally represented as a \BFactor instance: the vertices correspond to the constraints and the edges correspond to the variables.\footnote{There is a minor technicality here: the resulting graph can be a multigraph. This can be resolved by placing vertices with list $\{0,2\}$ on the parallel edges. For the case where $\maxgap B>1$, such vertices can be \emph{realized} (see \cref{lem:intro:realization}) by vertices with list $B$. }
It can be observed that if $B$ has gaps of size more than 1, then the corresponding symmetric relation $R$ cannot form a $\Delta$-matroid.
This, combined with the results from \cite{Feder01} 
show the \NP-completeness of \BFactor when $\maxgap B>1$.

\subparagraph{High-level Structure.}
The known \NP-hardness proofs do not fully utilize the capacity for \GenFac to represent relations.
For example, \cite{Feder01} argues that just being able to represent the relation $\{(x,y,z): x=y=z\}$ is enough to ensure \NP-hardness.
Our lower bounds use the fact that \GenFac (and even \BFactor) can represent almost \emph{any} relation.
We have two main steps.
We first reduce from SAT to the intermediate problem \BFactorRelation(\BFR).
This is an extension of \BFactor where we can additionally assign (almost) arbitrary relations to the vertices of the graph
and not only the set $B$ (cf.\ \cref{def:bfr}).
This reduction step is similar to previous lower bound constructions \cite{LokshtanovMS18}. 
In the second step, we reduce from \BFR to \BFactor by replacing each relation and its node by a graph expressing the required relation.

For both the pathwidth and cutwidth lower bounds, our broad-strokes idea is the same. 
See \cref{fig:intro:high-level} for an illustration.
For a given SAT instance, we use vertical columns of high-degree vertices to store the assignment (or ``state'') of the variables.
This state is stored as the number of edges to its left that are selected in any solution. 
When parameterizing by cutwidth, this process is easier;
we can afford one vertex for each variable of the SAT instance
since we are only looking for a lower bound of $\Ostar{(2-\epsilon)^{\cutw}}$.
However, in the pathwidth case,
we want to show a lowerbound of $\Ostar{(\max B+1-\epsilon)^\pw}$.
This means that our constructed graph should have pathwidth $\frac{n}{\log(\max B+1)} +\O(1)$.
We achieve this by grouping $q$ variables together and associate a set of $g$ vertices with each group,
where $2^q$ is roughly $(\max B+1)^g$. 
By this the lower bound we get is
the number of assignments or states that we can store for a given pathwidth. 
This directly connects to the
dynamic-programming algorithms on path decompositions, 
where the running time is determined by how many states each node has to store. 
Thus, our vertical columns represent several of the bags of the path decomposition.

\begin{figure}
	\centering
	\includegraphics[scale=0.7]{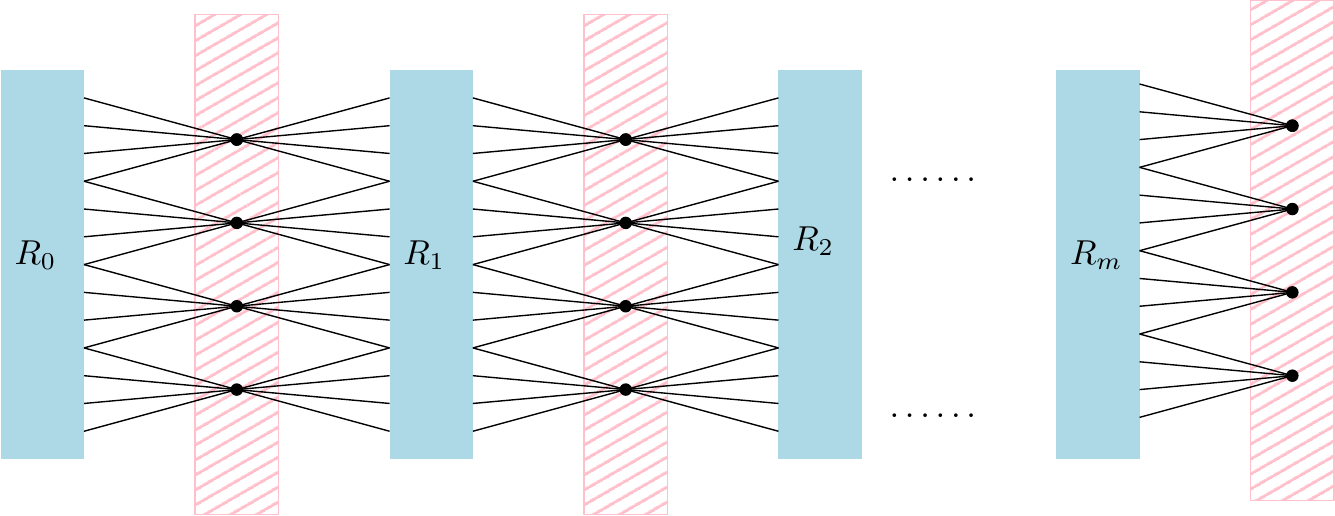}
	\caption{Pink boxes represent the vertical columns of vertices that store the state of the variables.
  Blue boxes labelled $R_i$ verify that the $i$th clause is satisfied.
  Here, $m$ is the number of clauses.}
	\label{fig:intro:high-level}
\end{figure}

Once we are able to propagate the information about the assignments through these vertical columns,
we use some relations placed between them to verify that the assignment satisfies all the clauses. 
Afterwards, we exploit the (large) gap of the set $B$
and replace the relations by vertices with list $B$
such that we finally get a \BFactor instance.

\subparagraph{Realizations for the Decision Version.}
We do this step from \BFR to \BFactor
through what we call \emph{realizations}. 
We say that we can realize a relation $R \subseteq \SetB^d$
if we have a gadget $G$ with $d$ \emph{dangling edges}, 
exactly one corresponding to each coordinate of the relation,
such that $G$ behaves in some special way.
These dangling edges (say $D$) have one endpoint in $G$ and another endpoint outside of $G$.
These edges act as inputs to the function $f$.
We say that the function is realized
iff the only solutions are those that pick $D'\subseteq D$ such that $D' \in R$.
For a more rigorous definition, see \cref{sec:dec}.
For example, when showing the hardness for the decision version of the problem, we use the following key result.
We say that a Boolean function is \emph{even} if it is zero for all inputs with odd Hamming weight.

\begin{lemma}\label{lem:intro:realization}
  Let $R \subseteq 2^{[d]}$ be an even Boolean relation.
  Then for every fixed, finite set $B\subseteq \SetN$
  with $\maxgap B>1$ and $0\notin B$,
  there is a \BFactor instance $G$ with a set $D$ of $d$ dangling edges
  identified with $[d]$
  such that
  for all $D' \subseteq D$ it holds that
  $D' \in R$
  if and only if
  there is a solution $S \subseteq E$ with $S \cap D = D'$.
\end{lemma}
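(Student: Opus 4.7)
The plan is to construct the realizing gadget in three stages of increasing complexity, each leveraging the gap in $B$ as a source of discreteness. First, since $\maxgap B>1$ and $0\notin B$, I would fix $a,b\in B$ with $1\le a<b$, $b-a\ge 3$, and $(a,b)\cap B=\emptyset$. A vertex with list $B$ that has exactly $b$ incident edges available for selection must then choose either $a$ or $b$ of them, never a value strictly in between; this is the only binary choice that the raw $B$-factor constraint can offer, and every later step is built on this one fact.

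\textbf{Building primitives.} Using the gap pair together with auxiliary ``filler'' subgadgets (which pre-commit a fixed number of edges at helper vertices), I would realize a few reusable primitives: a binary-choice gadget whose two admissible configurations differ on a prescribed subset of its dangling edges; the equality relation $\EQ{k}$ forcing $k$ dangling edges to take a common Boolean value; and the fixed-Hamming-weight relation $\HWeq{k}$ on bounded arity. The filler subgadgets are built inductively on the desired committed degree, exploiting that $a\in B$ is always a feasible target and can be calibrated by attaching finitely many pendant $B$-factor vertices.

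\textbf{Composing $R$.} With these primitives in hand, I would realize $R$ by enumerating its tuples $t\in R$ and attaching a ``selector'' subgadget that activates exactly one $t$ per solution, then propagating the characteristic vector of $t$ through equality copies onto the external dangling edges $D$. The evenness hypothesis enters precisely here: the selector's central vertex must have degree in $B$, and each active tuple contributes a fixed even Hamming weight to that center, which can always be matched by adjusting the filler edges so that the handshake sum is compatible with some element of $B$. Without the evenness of $R$, the parities at the selector could be incompatible with any single filler calibration.

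\textbf{Main obstacle.} I expect the main difficulty to lie in the primitive-building stage, specifically in realizing $\EQ{k}$ from nothing but the $B$-factor constraint and the two-element gap. Unlike a set supported on two consecutive integers, the asymmetry $b-a\ge 3$ must be absorbed by careful pairing of dangling edges and by introducing helper edges whose exact count is tied to $b-a$. Ruling out spurious intermediate solutions while still preserving every genuine one is the technical heart of the argument; once equality is available, the extraction of the gap pair and the composition of $R$ from tuples reduce essentially to disciplined bookkeeping.
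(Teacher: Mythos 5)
Your plan mirrors the paper's proof: build $\HWeq{}$-type and $\EQ{k}$ primitives from the size-$\ge 2$ gap (via cliques and forced-edge ``filler'' gadgets), then compose $R$ tuple-by-tuple with a Curticapean--Marx-style selector, using the evenness of $R$ to make the selector nodes have the right parity. The one thing your sketch glosses over is that the primitive stage in the paper must split into cases by the parity profile of $B$ (all even, all odd, mixed), since in some of these cases only $\EQ{k}$ for even $k$ and only \emph{pairs} of $\HWeq[k]{1}$ nodes can be realized; but the overall approach is the same.
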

See our later sections for a more formal treatment of this statement.
This method of realizing relations was originally introduced in \cite{CurticapeanM16}
to show tight lower bounds for \#\PerfMatch.
We also show similar lemmas for the optimization and counting versions.

\subparagraph*{Realizations for the Optimization Version.}
As mentioned earlier, for the maximization version
we have to deal with $0\in B$
which implies that we cannot rule out solutions in all cases.
Therefore, we relax the condition ``no solution'' for the realization
and allow the existence of ``small'' solutions.
This gap between largest (as we are considering the maximization version)
and small solutions will be the \emph{penalty} of the realizations.
To understand how we implement this penalty,
first observe that the crucial difference between the decision and the optimization version is that we cannot force edges.
Hence, it almost suffices to find a way to force edges to a vertex and obtain a large solution.
Otherwise, if the edges we want to force are not selected,
the solution should be small.
For this we exploit the large gap of the set $B$
and combine it with regular graphs of high girth.%
\footnote{The girth of a graph is the length of the shortest cycle of a graph.}
The idea is now the following:
If the forced edge is selected, we can select all edges.
Otherwise, another (internal) edge is not selected,
because we cannot have a degree not contained in $B$.
As the girth of the graph is large, this will lead to many not selected edges and hence the penalty of this realization is large.

\subparagraph{Holants.}
For the counting version,
we first describe the Holant problem for (possibly weighted) graphs.
Introduced by Valiant in \cite{Valiant04},
Holants captures a large class of graph problems including coloring problems, vertex cover, and matchings \cite{KowalczykC16}.
The input to the problem is a graph and relations over the incident edges of each vertex.
The Holant problem on unweighted graphs asks to count the number of subgraphs such that each such relation is satisfied.
This can be seen as a general version of \CountBFR
where all vertices are assigned a relation.
Hence, \CountGenFac can be seen as the restriction of the Holant problem on unweighted graphs 
where all relations are \emph{symmetric}.
See \cref{sec:count} for a formal introduction.

There has been a significant amount of work on the Holant framework in the past two decades \cite{CaiHL12,CaiLX11,GuoL17,Lu11}.
A few of these works directly imply a \sharpP-hardness for some cases of \CountBFactor.
For example, we can use the dichotomy theorem from Cai, Huang, and Lu \cite{CaiHL12} in the following way when $B$ contains 0, 1 and 2:
Let $x$ be the smallest number not in $B$.
To a vertex of list $B$, attach $x-3$ vertices with list $B$ and attach three dangling edges.
This gadget can be seen as one node with three dangling edges.
It can be shown via the dichotomy theorem by Cai et al.\ %
that the Holant problem is \sharpP-hard when only such nodes are allowed.
However, it is not immediate how to realize relations with this approach, nor is it clear how the treewidth of the graphs change through the holographic reductions that are common in showing \sharpP-hardness with Holant problems.
We circumvent these problems by showing the realizations through polynomial interpolation instead of holographic reductions.
To our knowledge,
this technique was introduced by Valiant in \cite{Valiant79},
where the problem of counting \emph{all} matchings was shown to be as hard as counting \emph{perfect} matchings.
In our notation, this would correspond to an interpolation argument between \CountFactor{\{1\}} and \CountFactor{\{0,1\}}.

\subparagraph*{Techniques for Cutwidth Lower Bound.}
As mentioned earlier, we additionally show a matching lower bound parameterized by cutwidth.
For this it is crucial to see that the construction for the pathwidth lower bound has large cutwidth.
Hence, instead of having several variables per vertex,
each vertex now corresponds to exactly one variable.
Although this construction does not work for the pathwidth bound as it produced large pathwidth,
it is sufficient in this setting.
The graph will be a grid-like structure
where the edges of the rows encode the assignment to the variables
and the edges of the columns simulate whether we have already satisfied the clause.
We start with every clause being unsatisfied in the first row.
Then for each row we check at the crossing points
if the variable and the corresponding assignment satisfy the clause.
We define relations of constant degree to simulate this behaviour.
Then we make use of the previous realization results and replace all relations by their realization.
This almost immediately gives us the lower bounds when parameterized by cutwidth.

\subsection{Structure of the Paper}
\cref{sec:prelim} introduces useful notation and definitions we use in the paper.
The two algorithms for parameterization by treewidth and cutwidth are given in \cref{sec:algo}.
In \cref{sec:lower} we show the reduction from SAT to our intermediate problem \BFR.
The version-dependent constructions of the realizations and the corresponding lower bounds are shown in \cref{sec:dec} for the decision version,
in \cref{sec:opt} for the optimization version,
and for the counting version in \cref{sec:count}.
\cref{sec:lowerCW} is dedicated to the lower bound construction when parameterizing by cutwidth.

\section{Preliminaries}\label{sec:prelim}
We introduce homogeneous graphs
to formally define \BFactor.
\begin{definition}[Homogeneous Graphs and \BFactor]
  \label{def:homogeneousGraphs}
  \label{def:B-Factor}
  Let $B \subseteq \SetN$ be some fixed, finite set.
  We say a node-labeled graph is \emph{$B$-homogeneous}
  if for each node $v \in V$ it holds that $B_v = B$.
  Then \BFactor is the restriction of \GenFac to $B$-homogeneous graphs.
\end{definition}
This definition
directly transfers to the optimization and counting version.
We now formally introduce the max-gap of integer sets
along with some other properties.
\begin{definition}\label{def:gap}
  Let $B \subseteq \SetN$ be finite.
  We define $\maxgap B$ as the largest non-negative integer $d$
  such that there is an $a \in B$ with $[a, a+d+1] \cap B = \{a,a+d+1\}$.
\end{definition}
In this paper we regularly insert graphs into other graphs.
To make this operation formal, we make use of \emph{dangling edges}: these are edges that have only one endpoint.
We denote a dangling edge with endpoint $v$ by $(?,v)$.
For the sake of completeness we now formally define this procedure of replacing the relations,
i.e.\ the insertion of a graph into another graph.
\begin{definition}[Insertion]\label{def:insertion}
  Let $G=(V, E)$ be a graph
  and $v \in V$ be of degree $k$,
  with incident edges $e_1=(v_1,v), \dots, e_k=(v_k,v)$
  that are ordered in some fixed way.
  Let $H=(W, F)$ be a graph
  with dangling edges $d_1=(?,u_1), \dots, d_k=(?,u_k)$
  where the $u_i$ are not necessarily pairwise distinct.
  \emph{Inserting} $H$ in $G$ at $v$ gives us a new graph $G'=(V',E')$ where:
  \[V' = (V \cup W) \setminus \{v\}
    \text{~~and~~}
    E' = (E \cup F) \setminus \{e_1, \dots, e_k, d_1, \dots, d_k\} \cup
      \{ (v_1, u_1), \dots, (v_k, u_k)\}
  \]
\end{definition}
All lower bounds we prove in this paper are based on the \SETH.
But instead of using the original statement
we use a formulation which is more useful to work with.
\begin{conjecture}[\SETH (SETH) \cite{CalabroIP09, ImpagliazzoP01}]
  \label{conj:seth}
  For all $\delta >0$,
  there is a $k \ge 3$ such that satisfiability of $k$-CNF formulas on $n$ variables
  requires more than $(2-\delta)^n$ time.
\end{conjecture}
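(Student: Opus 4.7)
The plan: The final statement is a conjecture imported from \cite{CalabroIP09, ImpagliazzoP01}, not a theorem to be derived from first principles. Consequently, what one can usefully produce is not a proof but a short verification that this convenient formulation is interchangeable with the canonical SETH of Impagliazzo and Paturi, which asserts that for every $\delta>0$ there exists $k\ge 3$ such that no algorithm solves $k$-SAT on $n$ variables in time $O((2-\delta)^n)$. The version stated here removes the hidden multiplicative constant (and any polynomial prefactor), effectively saying ``more than $(2-\delta)^n$ steps'' with nothing suppressed in the big-$O$.

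The direction canonical-SETH $\Rightarrow$ stated form is immediate: if no algorithm runs in $O((2-\delta)^n)$ for the $k$ guaranteed by canonical SETH, then in particular no algorithm matches the stricter bound $(2-\delta)^n$ without the $O(\cdot)$ wrapper. For the converse, given $\delta>0$ I would pick any $\delta' \in (0,\delta)$ and apply the stated form with parameter $\delta'$ to obtain some $k\ge 3$ for which $k$-SAT needs more than $(2-\delta')^n$ time. Since $(2-\delta')/(2-\delta)>1$, for every polynomial $p$ one has $(2-\delta')^n\ge p(n)\cdot(2-\delta)^n$ for all sufficiently large $n$, so no algorithm can achieve $O((2-\delta)^n)$ either. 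This recovers canonical SETH, giving the equivalence.

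The main obstacle is conceptual rather than technical: there is nothing to prove, since the statement is a hypothesis used as a black box by the rest of the paper. The only nontrivial step is the polynomial-absorption argument above, which is a standard maneuver in fine-grained complexity and would be dispatched in a single line. Accordingly, in a writeup I would simply cite \cite{CalabroIP09, ImpagliazzoP01} and note that the formulation is equivalent to the Impagliazzo--Paturi SETH up to a reparametrization of $\delta$.
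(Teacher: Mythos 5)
You are correct that this is a conjecture cited from \cite{CalabroIP09, ImpagliazzoP01} with no proof in the paper, and your observation that the stated form is equivalent to the canonical Impagliazzo--Paturi SETH up to absorbing polynomial factors into a reparametrization of $\delta$ is accurate. Nothing further is required.
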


\subparagraph*{Treewidth and Pathwidth.}
We use the standard definition of a tree decomposition to define treewidth,
as for example given in Chapter~7 of \cite{ParameterizedAlgos}.
A tree decomposition of a undirected graph $G$ is a pair $\mathcal T = (T, \{X_t\}_{t\in V(T)})$,
where $T$ is a tree,
and every node $t$ of $T$
\footnote{We refer to the vertices of $T$ as \emph{nodes}, to distinguish them from the vertices of $G$.}
is assigned a \emph{bag} $X_t \subseteq V(G)$,
such that
\begin{itemize}
  \item
  Every vertex of $G$ is contained in at least one bag:
  $\bigcup_{t\in V(T)} X_t = V(G)$.
  \item
  For every $(u,v) \in E(G)$,
  there exists a node $t$ of $T$ such that $X_t$ contains both $u$ and $v$.
  \item
  For every $u \in V(G)$,
  the nodes whose corresponding bags contain $u$,
  i.e.\ $T_u = \{t \in V(T) \mid u \in X_t \}$,
  induces a connected subtree of $T$.
\end{itemize}
The \emph{width} of the tree decomposition is
$\max_{t\in V(T)}\abs{X_t}-1$,
i.e.\ the maximum size of the bags minus 1.
The minimum possible width of a tree decomposition of a graph $G$
defines the \emph{treewidth} $\tw(G)$ of $G$.
The path decomposition leading to pathwidth $\pw(G)$ uses the same definition,
except that we require that $T$ is a path.

We use nice tree decompositions as they help in simplifying algorithms;
more precisely,
we use a variant of nice decompositions with introduce edge nodes.
They are defined by a tree decomposition $\mathcal T= (T, \{X_t\}_{t\in V(T)})$ 
and additionally one vertex $r$ of $T$ which serves as the root of $T$.
This rooted tree decomposition $\mathcal T$ is \emph{nice} if
the bag of the root is empty, i.e.\ $X_r = \emptyset$,
and all nodes $t$ of $T$ have one of the following types:
\begin{itemize}
  \item
  \textbf{Leaf node:}
  $t$ has no children and the bag is empty,
  i.e.\ $X_t=\emptyset$.
  \item
  \textbf{Introduce vertex node:}
  $t$ has a unique child $t'$
  with $X_t = X_{t'} \cup \{v\}$ for some vertex $v \notin X_{t'}$.
  \item
  \textbf{Introduce edge node:}
  $t$ is labeled with an edge $(u,v) \in E(G)$ where $u, v \in X_t$.
  Further $t$ has a unique child $t'$ such that $X_t = X_{t'}$.
  We say, edge $(u,v)$ is introduced at $t$.
  \item
  \textbf{Forget node:}
  $t$ has a unique child $t'$
  with $X_t = X_{t'}\setminus \{w\}$ for some vertex $w \in X_{t'}$.
  \item
  \textbf{Join node:}
  $t$ has exactly two children $t_1$ and $t_2$
  such that $X_t = X_{t_1} = X_{t_2}$.
\end{itemize}
Given a tree decomposition, it can efficiently be transformed into a nice tree decomposition
by increasing its size at most polynomially, \cite{Kloks94}.
Hence, we always assume w.l.o.g.\ that a given tree decomposition is nice.

\subparagraph*{About Relations.}
A relation $R: \SetB^k \to \SetB$
can also be seen as a set $R' \subseteq \SetB^k$
such that $x\in R'$ iff $R(x) = 1$.
We can also identify $R$ with a set $R'' \subseteq 2^{[k]}$,
where each element of $R''$ contains the positions of the 1s of an accepted input.
Precisely, $x'=\{i\mid x[i]=1\} \in R''$ iff $R(x) = 1$.
We switch between these definitions depending on the context.
Recall, a relation is \emph{symmetric} if its output only depends on the Hamming weight of its input.

To simplify notation,
we introduce the following generic classes of symmetric relations.
\begin{definition}\label{def:relations}
  For a vector $x\in \SetB^k$, we define $\hw(x)$ as the number of 1s in $x$,
  i.e.\ the Hamming weight of $x$.
  We define the following
  for $S \subseteq \SetN$, and $i,j\in \SetN$:
  \[
    \HWin[j]{S} \deff \{ x \in \SetB^j \mid \hw(x) \in S\}
    \qquad
    \EQ{j} \deff \HWin[j]{\{0,j\}}
    \qquad
    \HWeq[j]{i} \deff \HWin[j]{\{i\}}
  \]
  $\EQ{j}$ is the \emph{equality} relation on $j$ inputs.
  We use $\HWin{S}$ to denote $\HWin[j]{S}$ when the arity $j$ of the relation is implicit.
  We also use this as the set of the relations $\HWin[j]{S}$ for all $j \in \SetN$.
  We transfer this abuse of notation to $\HWeq{i}$.
\end{definition}
Note that assigning the relations $\HWin{B}$ to a vertex
corresponds to assigning the set $B$ to the vertex.
Which notation is used depends on the context we are in.

\section{Algorithms}\label{sec:algo}

In this section we present two algorithms for \GenFac parameterized by treewidth and cutwidth, respectively.
By our lower bounds for \BFactor we know that these algorithms are (conditionally) optimal for both problems.
This is somewhat surprising as the intuitive feeling would be
that we can exploit that the graphs are $B$-homogeneous for \BFactor.

Let $G$ be the given graph in the following
and let $B_v\subseteq \SetN$ be the label of vertices $v\in V(G)$.
We can assume $B_v \subseteq \{0,\dots,\abs{V(G)}\}$.
Then, let $M = \max_{v \in V} \max B_v$ in the following.

\subsection{Algorithm when Parameterizing by Treewidth}
\label{sec:algo:treewidth}
Assuming a tree decomposition of width $\tw$ is given,
Arulselvan et al.\ showed implicitly that \GenFac can be solved in time $\Ostar{(M+1)^{3\tw}}$ \cite{ArulselvanCGMM18}.
We base our optimal $\Ostar{(M+1)^{\tw}}$ algorithm for counting solutions of a certain size on this approach
and combine it with a more precise runtime analysis and fast convolution techniques.
As for other algorithms parameterized by treewidth
we also assume that a tree decomposition is given as input.
This is reasonable as it allows studying the problem independently from finding a decomposition.

\subsubsection{Algorithm}
Let $k$ be the width of the given tree decomposition
where every edge is introduced exactly once.
For a node $t$ in the tree decomposition
let $X_t$ be the corresponding bag of vertices.
We define $V_t$ to be the set of vertices
and $E_t$ to be the set of edges introduced in the subtree rooted at $t$.

For a fixed node $t$
let $f:X_t \to [0,n]$
and $s \in [0,m]$ where $m$ is the number of edges of $G$.
We define $c[t, f, s]$ to be the number of sets $F \subseteq E_t$ with $\abs{F} = s$
such that $\deg_{F}(v) = f(v)$ for all $v \in X_t$
and $\deg_{F}(v) \in B_v$ for all $v \in V_t\setminus X_t$.

As the bag of the root $r$ is empty,
only the empty assignment is valid
and hence the value of $c[r,\emptyset, s]$ corresponds to the number of solutions of size exactly $s$ for all $s\in [0,m]$.
For all other $s$ the number of solutions is $0$.

The following dynamic program computes all entries of $c$
and its behavior is completely determined by the type of the node $t$ it currently visits.
\begin{description}
\item[Leaf node]
We have $X_t = \emptyset$
and set $c[t, \emptyset, 0] \deff 1$ and 0 for all other entries.

\item[Introduce Vertex Node]
Let $t'$ be the unique child of $t$ such that $X_t = X_{t'} \cup \{v\}$.
Let $f\vert_{X_{t'}}$ be the restriction of $f$ to domain $X_{t'}$.
That is $f\vert_{X_{t'}} : X_{t'} \to [0,n]$ and $f\vert_{X_{t'}}(x) = f(x)$ for all $x \in X_{t'}$.
We set
\[
c[t, f, s] \deff
\begin{cases}
  c[t', f\vert_{X_{t'}}, s] & \text{if } f(v) = 0 \\
  0 & \text{else}
\end{cases}
\]

\item[Introduce Edge Node]
Let $(u,v)$ be the edge introduced at node $t$.
We have to decide whether to pick the edge or not.
As the edge can only be chosen if the degree of the vertices $u$ and $v$ is not zero,
we define:
\[
c[t, f, s] \deff c[t', f, s] + \begin{cases}
  c[t', f_{u \mapsto f(u)-1, v \mapsto f(v)-1}, s-1] & \text{if } f(u)f(v) > 0 \\
  0 & \text{else}
\end{cases}
\]

\item[Forget Node]
Let $t'$ be the child of $t$ such that $X_t = X_{t'}\setminus \{v\}$.
Then we set:
\[ c[t, f, s] \deff \sum_{d \in B_v} c[t', f_{v \mapsto d}, s] \]

\item[Join Node]
Let $t_1$ and $t_2$ be the left and right child of $t$ such that $X_t = X_{t_1} = X_{t_2}$.
For three functions $f, f_1, f_2 :X_t \to [0,n]$ we write $f_1+f_2 = f$ if we mean $f_1(v)+f_2(v) = f(v)$ for all $v \in X_t$.
Based on this, we can define the values of the table as follows:
\[
  c[t, f, s] \deff
  \sum_{\substack{f_1, f_2 \text{ s.t.} \\ f_1+f_2=f}}
  \sum_{s_1 + s_2 = s}
  c[t_1, f_1, s_1] \cdot c[t_2, f_2, s_2]
\]
Observe that for this to work we crucially need that each edge is introduced exactly once in the whole decomposition.
Otherwise the edge could be picked in both subtrees leading to a wrong solution since it could be counted twice.
But by assumption each edge is either contained in the tree rooted at $t_1$ or $t_2$ or neither of them and thus above $t$.
\end{description}

\subparagraph*{Running Time.}
One can simply observe that each entry of the table can be computed in time $\max\{ \O(1), M, (n+1)^{2k+2}m^2\}$.
But we can actually restrict the codomain of $f$ to $[0, M]$
and it suffices for the join node to pick $f_1$ and $s_1$ as $f_2$ and $s_2$ then follow from the equations $f_1+f_2=f$ and $s_1+s_2=s$, respectively.
Hence, each cell can be computed in time $\O((M+1)^{k+1} m)$.
Since for each node there are at most $(M+1)^{k+1}$ many valid functions $f$,
the problem can be solved in time $\Ostar{(M+1)^{2k}}$.
In the next section we show how to improve the running time of the join node by computing all values in parallel.

\subsubsection{Improving the Running Time}
For the improved computation of the join node
we make use of the concepts and ideas from van Rooij \cite{Rooij20} including the following lemma:
\begin{lemma}[Lemma~3 in \cite{Rooij20} (cyclic and non-cyclic convolution)]\label{lem:cyclicNonCyclicConvolution}
  Let $N = \SetN_{<q_1} \times \SetN_{<q_2} \times \dots \times \SetN_{<q_l}$, and let $Q = \prod_{i=1}^l q_i$.
  \footnote{We set $\SetN_{<r} = \{0,\dots,r-1\}$.}
  Let $Z = \SetZ_{r_1} \times \SetZ_{r_1} \times \dots \times \SetZ_{r_k}$,
  and let $R = \prod_{i=1}^k r_i$.

  Let $f,g : Z \times N \to \SetF_p$, where $p$ is chosen appropriately.
  And, let $h : Z\times N \to \SetF_p$ be the combined (partially cyclic and partially non-cyclic) convolution of $f$ and $g$ defined as:
  \[
  h(x, i) = \sum_{y_1+y_2 \equiv x} \sum_{j_1+j_2=i} f(y_1,j_1) g(y_2,j_2)
  \]
  where the sum $y_1+y_2 \equiv x$ is evaluated component-wise modulo $r_i$ at coordinate $i$ (sum in $Z$),
  and the sum $j_1+j_2=i$ is evaluated component-wise without modulus (sum in $N$).
  Then, the combined convolution $h$ can be computed in $\O(R Q 2^l(\log(R) + \log(Q) + l))$ arithmetic operations.
\end{lemma}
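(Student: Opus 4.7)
The plan is to reduce the combined (partially cyclic, partially non-cyclic) convolution to a purely cyclic convolution on a larger product group, and then compute that via a multidimensional Discrete Fourier Transform, invoking the convolution theorem.

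First I would handle the non-cyclic coordinates by zero-padding. For a single non-cyclic dimension, observe that if two sequences supported on $\SetN_{<q}$ are each extended by zeros to length $2q$, their cyclic convolution in $\SetZ_{2q}$ agrees with the usual non-cyclic convolution on $\SetN_{<2q-1}$: the support of the non-cyclic convolution is contained in $[0, 2q-1)$, so no wrap-around occurs. Applying this coordinate-wise extends $f$ and $g$ from $Z \times N$ to functions on $Z' \deff Z \times \prod_{i=1}^l \SetZ_{2 q_i}$, which is a product of cyclic groups of total size $R \cdot 2^l Q$. On this larger domain, the combined convolution becomes an ordinary multidimensional cyclic convolution, and the desired values on the original domain $Z \times N$ are recovered by restricting the result.

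Next, I would compute the multidimensional DFT of each extended function on $Z'$, multiply the transforms pointwise in $\SetF_p$, and invert the DFT to obtain the convolution. The standard multidimensional FFT decomposes as the product of one-dimensional FFTs along each coordinate and performs $\O(\abs{Z'} \log \abs{Z'})$ field operations; since $\abs{Z'} = R \cdot 2^l Q$ and $\log \abs{Z'} = \log R + l + \log Q$, this yields the claimed bound $\O(R Q 2^l(\log R + \log Q + l))$.

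The main technical obstacle is the ``appropriately chosen'' $p$: the multidimensional FFT on $\prod_i \SetZ_{r_i} \times \prod_i \SetZ_{2 q_i}$ requires $\SetF_p$ to contain primitive roots of unity of each of the orders $r_1,\dots, r_k, 2q_1, \dots, 2q_l$, together with invertibility of $\abs{Z'}$ in $\SetF_p$ for the inverse transform. Taking $p$ prime so that all of $r_1,\dots, r_k, 2q_1, \dots, 2q_l$ divide $p-1$ suffices; if the target ring is not already of this form, one performs the computation in a sufficiently large such $\SetF_p$ (or via CRT across several such primes) and lifts the answer back. Once a suitable $p$ is fixed, the running-time bound follows from the standard analysis of multidimensional FFTs.
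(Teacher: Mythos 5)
This lemma is quoted from van Rooij (Lemma~3 in \cite{Rooij20}) and is used in this paper as a black box; there is no in-text proof to compare against. Your sketch is the standard (and, to my reading of \cite{Rooij20}, the intended) argument: zero-pad each non-cyclic coordinate $\SetN_{<q_i}$ to $\SetZ_{2q_i}$ so that no wrap-around occurs, turning the combined convolution into a purely cyclic one on a product group of size $R\cdot 2^l Q$, then apply a multidimensional DFT with pointwise multiplication; this also explains the $2^l$ factor and the $\log R + \log Q + l$ term in the stated running time. Your remark about choosing $p$ so that $\SetF_p$ contains primitive roots of unity of orders $r_1,\dots,r_k,2q_1,\dots,2q_l$ (or interpolating via CRT across several such primes) is the correct reading of ``$p$ chosen appropriately''; in the paper's application one additionally wants $p>2^{\abs{E}}$, which can be arranged at the same time.
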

The obvious way of using the lemma
would be to directly use the non-cyclic part of the convolution.
But this would result in $Q=(M+1)^{k+1}(m+1)$ and $l=k+2$,
which gives us a bound of $\Ostar{(M+1)^k2^k}$.
Instead we mainly use the cyclic part of the convolution and add a degree bound in the non-cyclic part
such that we only consider correct combinations of values.
\begin{lemma}
  For each join-node $t$ of the tree decomposition,
  we can compute $c[t, f, s]$ for all valid $f$ and all $s\in[0,m]$ with $\O((M+1)^k k^2Mm\log(Mm))$ arithmetic operations.
\end{lemma}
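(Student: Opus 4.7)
The plan is to cast the join-step recurrence
\[
  c[t,f,s] = \sum_{f_1+f_2=f} \sum_{s_1+s_2=s} c[t_1,f_1,s_1]\cdot c[t_2,f_2,s_2]
\]
as a single application of \cref{lem:cyclicNonCyclicConvolution}. The naive encoding with all $k+1$ per-vertex degrees as non-cyclic coordinates has $l=k+2$ and so pays a prohibitive $2^k$ factor. Instead, I would put each per-vertex degree into the \emph{cyclic} part with modulus $M+1$, and add a single non-cyclic coordinate tracking the total bag-degree $d \deff \sum_{v\in X_t} f(v)$ to compensate for wraparound. Moreover, one bag vertex can be dropped from the cyclic part since it is implicitly determined by $d$ and the remaining degrees; only $k$ cyclic coordinates remain.

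Concretely, I will fix an ordering $v_0, v_1, \dots, v_k$ of $X_t$ and, for each child $i \in \{1,2\}$, reencode the table $c[t_i,\cdot,\cdot]$ as
\[
  \tilde c_i \colon \SetZ_{M+1}^{k} \times \SetN_{< (k+1)M+1} \times \SetN_{< m+1} \to \SetF_p,
\]
by setting $\tilde c_i(x_1, \dots, x_k, d, s) \deff c[t_i, f_i, s]$ whenever $x_1, \dots, x_k \in [0,M]$ together with the implicit $x_0 \deff d - \sum_{j \ge 1} x_j$ satisfy $x_0 \in [0,M]$ and define a valid $f_i$, and $\tilde c_i(\dots) \deff 0$ otherwise. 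Then I apply \cref{lem:cyclicNonCyclicConvolution} with the $k$ cyclic dimensions of modulus $M+1$ and $l=2$ aperiodic dimensions of sizes $(k+1)M+1$ and $m+1$ to compute $h = \tilde c_1 * \tilde c_2$. For each query $(f,s)$ with $d = \sum_{v\in X_t} f(v)$, the answer is read off as $c[t,f,s] = h(f(v_1), \dots, f(v_k), d, s)$.

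The hard part will be verifying that no cyclic wraparound contaminates a valid query. Suppose $(f_1,s_1)$ and $(f_2,s_2)$ contribute to $h$ at the read-off position for a valid $f \colon X_t \to [0,M]$. Cyclicity on coordinates $1, \dots, k$ gives $f_1(v_i) + f_2(v_i) = f(v_i) + (M+1) k_i$ for some integers $k_i \ge 0$, while aperiodic matching of the total sum gives $\sum_{j=0}^{k} (f_1(v_j) + f_2(v_j)) = d = \sum_{j=0}^{k} f(v_j)$. Combining these yields
\[
  f_1(v_0) + f_2(v_0) = f(v_0) - (M+1)\sum_{i=1}^{k} k_i,
\]
and since the left-hand side is non-negative while $f(v_0) \le M$, every $k_i$ must vanish. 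Hence $f_1(v)+f_2(v)=f(v)$ exactly for all $v \in X_t$, so $h$ counts precisely the pairs the recurrence asks for.

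The running time then follows by plugging $R = (M+1)^k$, $Q = \O(kMm)$, and $l = 2$ into \cref{lem:cyclicNonCyclicConvolution}, giving $\O((M+1)^k \cdot kMm \cdot (k\log(M+1) + \log(kMm))) = \O((M+1)^k k^2 Mm \log(Mm))$. The choice of $\SetF_p$ is otherwise routine: since the counts fit in $\poly(n)$ bits, one prime of size $n^{\O(1)}$ (or several via CRT) suffices.
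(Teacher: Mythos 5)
Your proposal is correct and follows essentially the same route as the paper's proof: reencode the children's tables so that per-vertex degrees become cyclic coordinates of modulus $M+1$, add a single aperiodic coordinate tracking the total bag degree (the $\ell_1$-norm of $f$) so that wraparound cannot contribute, invoke Lemma~\ref{lem:cyclicNonCyclicConvolution} with $l=2$, and read off the result at the exact-norm position. The one genuine refinement in your version is dropping a designated bag vertex $v_0$ from the cyclic part and recovering its degree as $d - \sum_{j\geq 1} x_j$, which gives $R=(M+1)^{k}$ rather than the $R=(M+1)^{k+1}$ that the paper's own parameter choice produces ($r_1 = \dots = r_{k+1} = M+1$). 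This actually makes your argument match the lemma's stated $\O((M+1)^k k^2 Mm\log(Mm))$ bound exactly, whereas the paper's parameters deliver an extra $(M+1)$ factor — a discrepancy that is harmless for the final $\Ostar{(M+1)^{\tw}}$ theorem (as $M$ is polynomial and absorbed into $\Ostar{\cdot}$) but which your bookkeeping fixes. Your wraparound analysis is a bit more explicit than the paper's one-line remark that $\norm{f_1}_1 + \norm{f_2}_1$ would exceed $\norm{f}_1$, and it has to be, since you additionally need to check that the decoded $v_0$-degree remains in range; the underlying reasoning is the same.
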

\begin{proof}
  We follow the ideas of the proof of Lemma~10 in \cite{Rooij20}.
  Let $t_1$ and $t_2$ be the two children of node $t$.
  Since $\abs{X_t} \le k+1$,
  we can transform each function $f:X_t \to [0,M]$ into a vector in $[0,M]^{k+1}$ (and vice versa if we add dummy vertices to $X_t$).

  Based on this we define $a, b: [0,M]^{k+1} \times \SetN \times [0,m] \to \SetF_p$ where $p > 2^{\abs{E}}$
  since there can be at most that many edge subsets of the graph $G$.
  We define:
  \[
  a(f, F, s) \deff
    \begin{cases}
      c[t_1, f, s] & \text{if }\norm{f}_1 = F \\
      0 & \text{else}
    \end{cases}
  \quad
  \text{ and }
  \quad
  b(f, F, s) \deff
    \begin{cases}
      c[t_2, f, s] & \text{if }\norm{f}_1 = F \\
      0 & \text{else}
    \end{cases}
  \]
  Where we set $\norm{f}_1 = \sum_{v \in X_t} f(v)$,
  where the sum is over the integers (without modulo).
  Hence, it actually suffices to restrict $\SetN$ in the definition of $a$ and $b$ to $\SetN_{\le (k+1)M}$.
  We use this to compute the following convolution for all $f\in[0,M]^{k+1}$, $F\in[0,(k+1)M]$, and $s\in[0,m]$.
  \[
    c(f, F, s) =
    \sum_{f_1+f_2 \equiv f}
    \sum_{\substack{F_1+F_2=F \\s_1+s_2=s}}
    a(f_1,F_1,s_1) b(f_2,F_2,s_2)
  \]
  Then we set $c[t, f, s] = c(f, \norm{f}_1, s)$ for all $f$ and $s$ to complete the computation for this join-node.

  Again the computation of $f_1+f_2 \equiv f$ is component wise modulo $M+1$ and for $F_1+F_2=F$ and $s_1+s_2=s$ over the integers without modulo.
  By the definition of $a$ and $b$ we only sum values if $\norm{f_i}=F_i$.
  Thus we actually just sum over all values $f_1, f_2$ such that $f_1+f_2\bm=f$
  because otherwise we would get that $\norm{f_1}_1+\norm{f_2}_1 > \norm f_1$.
  We can directly apply \cref{lem:cyclicNonCyclicConvolution}
  with $l=2$, $q_1=(k+1)M+1$, $q_2=m+1$, and $r_1 = \dots = r_{k+1} = M+1$.
  As $Q=(kM+M+1)(m+1)$ and $R = (M+1)^{k+1}$ the claimed running time follows.
\end{proof}
It is easy to see that the computation of the join nodes dominates the running time.
Therefore, the proof of this lemma finishes the proof of \cref{thm:algo:treewidth:main}.

\subsection{Algorithm when Parameterizing by Cutwidth}\label{sec:algo:cutwidth}

Assume we are given a linear layout of the \GenFac instance $G$.
We relabel the vertices such that their index corresponds to this layout.
Let $C_i$ be the edges of the graph that cross the cut after vertex $v_i$,
that is $C_i = E \cap (\{v_1,\dots,v_i\}\times\{v_{i+1},\dots,v_n\})$.
We define $\delta^+_i= C_i \cap I(v_i)$ as the forwards edges of $v_i$,
i.e.\ the edges to vertices in $\{v_{i+1}, \dots,v_n\}$
and likewise $\delta^-_i= C_{i-1} \cap I(v_i)$ as the backwards edges of $v_i$.
Let $E_i = E\vert_{v_1,\dots,v_i}$ be the edges induced by the vertices $v_1,\dots,v_i$.
We additionally define $\widehat C_i$ as the set of edges in the cuts $C_{i-1}$ and $C_{i}$ that are not incident to $v_i$,
i.e.\ $\widehat C_i = C_{i-1}\setminus \delta^-_i = C_i \setminus \delta^+_i$.
See \cref{fig:algo:cut:notation} for an illustration of the setting.
\begin{figure}
  \centering
  \includegraphics{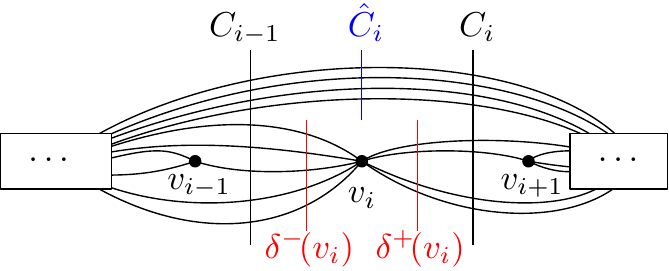}
  \caption{Illustration how the different sets are related to each other.}
  \label{fig:algo:cut:notation}
\end{figure}

The idea of the algorithm is to fill a three-dimensional table $c$ such that
for all $i\in[0,n]$, $s\in[0,m]$, and $C \subseteq 2^{C_i}$
we have that $c[i, C, s]$ is the number of sets $S \subseteq E_i$ with $\abs{S}=s$
such that $\deg_{S}(v_j) + \deg_C(v_j) \in B_{v_j}$ for all $j \in [i]$.
Phrased differently: the number of possibilities to extend the selected edges of the cut
to a valid partial solution for the first $i$ vertices.

As the cut after $v_n$ does not contain any edges,
we only have to consider the empty set.
Hence, the output of the algorithm is the value of $c[n,\emptyset,s]$
for $s\in[0,m]$ and 0 otherwise.

It remains to show how we can compute the values of the table efficiently.
We initialize the table with $c[0,\emptyset,0] = 1$ and 0 at all other positions.
Then we design a step function that computes the values of $c[i,\cdot,\cdot]$ given the values of $c[i-1,\cdot,\cdot]$.
We first show a naive way for this
and then improve the computation to get the algorithm with optimal running time.

\subsubsection{Naive Step Function}
For all $C \subseteq C_{i}$ and $s\in[0,m]$ we compute the value sequentially as follows:
\[
  c[i, C, s] =
  \sum_{\substack{S^- \subseteq \delta^-_i \text{ s.t.}\\ \abs{S^-} + \deg_{C}(v_i) \in B_{v_i}}}
  c[i-1, C \setminus \delta^+_i \cup S^-, s-\abs{S^-}]
\]
For each $i$ we have to compute $\O(2^\cutw(m+1))$ table entries.
Each of the entries can be computed in time $\Ostar{2^\cutw}$ as we iterate over all subsets of $\delta^-_i$.
This gives us a total running time of $\Ostar{4^\cutw}$.

\subsubsection{Improved Step Function}
For the improved algorithm we make use of the fact
that for the computation of $c[i,\cdot,\cdot]$ only the selected edges in $\delta^-_i$
and the \emph{number} of selected edges from $\delta^+_i$ are of interest.
The edges in $\widehat C_i$ are not affected as $v_i$ moves to the other side of the cut.
Further observe that we can partition $C\subseteq C_i$ into $C=\widehat C \dotcup S^+$
for $S^+ \subseteq \delta^+_i$ and $\widehat C \subseteq \widehat C_i$.
We transform the naive formula from above as follows:
\begin{align*}
  c[i, C, s]
  &= c[i, \widehat C \dotcup S^+, s] \\
  &=
  \sum_{\substack{S^- \subseteq \delta^-_i \text{ s.t.}\\ \abs{S^-} + \deg_{\widehat C \dotcup S^+}(v_i) \in B_{v_i}}}
  c[i-1, (\widehat C \dotcup S^+) \setminus \delta^+_i \cup S^-, s-\abs{S^-}] \\
  &=
  \sum_{\substack{S^- \subseteq \delta^-_i \text{ s.t.}\\ \abs{S^-} + \abs{S^+} \in B_{v_i}}}
  c[i-1, \widehat C \cup S^-, s-\abs{S^-}] \\
  &= h[\abs{S^+},\widehat C,s] \\
  \intertext{where:}
  h[\ell,\widehat C,s] &=
  \sum_{\substack{S^- \subseteq \delta^-_i \text{ s.t.}\\ \abs{S^-} + \ell \in B_{v_i}}}
  c[i-1, \widehat C \cup S^-, s-\abs{S^-}]
\end{align*}
The improved step function now works as follows:
\begin{enumerate}
  \item 
  Initialize the auxiliary table $h$ with 0s.
  \item
  Use the above formula to fill $h$ for all
  $\widehat C \subseteq \widehat C_i$,
  $s \in [0,m]$,
  and $\ell \in [0,M]$.
  This takes time $\Ostar{2^{\abs{\widehat C_i}} (M+1) 2^{\abs{\delta^-_i}}} \le \Ostar{2^{\abs{C_{i-1}}}}$.
  \item
  For all
  $\widehat C \subseteq \widehat C_i$,
  $s \in [0,m]$,
  and $S^+ \subseteq \delta^+_i$,
  we set
  \(
    c[i, \widehat C\cup S^+, s] \deff h[\abs{S^+}, \widehat C, s]
  \).
  This takes time $\Ostar{2^{\abs{\widehat C_i}} 2^{\abs{\delta^+_i}}} \le \Ostar{2^{\abs{C_i}}}$.
\end{enumerate}
Hence, we can bound the running time of each round by $\Ostar{2^\cutw M}$.

\section{Lower Bound when Parameterizing by Pathwidth}\label{sec:lower}
We show the lower bounds in two steps.
The first step is a reduction from CNF-SAT
to the intermediate problem \BFactorRelation. 
In the second step, we reduce to the actual version of \BFactor for which we want to show the lower bound.
As the lower bounds are for pathwidth
they immediately hold for treewidth as it can be upper-bounded by pathwidth for all graphs.
\begin{definition}[\BFactorRelation (\BFR)]\label{def:bfr}
  Let $B\subseteq \SetN$ be fixed of finite size.
  $G=(V_S\dotcup V_C, E)$ is an instance of \emph{\BFactorRelation} 
  if all nodes in $V_S$ are labeled with set $B$
  and all nodes $v \in V_C$ are labeled with a relation $R_v$
  that is given as a truth table
  such that the following holds:
  \begin{enumerate}
    \item Let $I(v)$ be the set of edges incident to $v$ in $G$.
    Then $R_v \subseteq 2^{I(v)}$.
    \item There is an even $c_v>0$
    such that for all $x \in R_v$ we have $\hw(x)=c_v$.
  \end{enumerate}
  A set $\widehat E \subseteq E$ is a \emph{solution} for $G$ if
  (1) for $v\in V_S$: $\deg_{\widehat E}(v) \in B$
  and
  (2) for $v \in V_C$: $I(v)\cap \widehat E \in R_v$.
  \BFR is the problem of deciding if such an instance has a solution.
  \\
  We call $V_S$ the set of \emph{simple} nodes and $V_C$ the set of \emph{complex} nodes.
\end{definition}
Using this intermediate problem, we can formally state the first part of the reduction.
The lower bound needs a careful formulation:
when we reduce \BFR to \BFactor by inserting gadgets realizing the relations at the complex nodes,
the size and pathwidth of the graph can increase significantly.
Therefore, we state a stronger lower bound that can tolerate additional terms to take care of such increases.
The key point is that this increase is mainly influenced by the total degree of the complex nodes in a bag of the path decomposition.

\begin{theorem}\label{thm:lower:satToBFR}
  Let $B \subseteq \SetN$ be a fixed set of finite size with $B\neq\{0\}$.
  Given a \BFR instance along with a path decomposition of width $\pw$
  such that $\Delta^* = \max_{\text{bag } X} \sum_{v \in X\cap V_C} \deg(v)$.
  Assume \BFR can be solved in $\Ostar{(\max B+1-\epsilon)^{\pw+f_B(\Delta^*)}}$ time
  on graphs with $n$ vertices  for some $\epsilon > 0$ 
  and some function $f_B:\SetN\to\SetR^+$ that may depend on the set $B$.
  Then SETH fails.
\end{theorem}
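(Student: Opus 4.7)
The plan is a standard SETH-based encoding of a $k$-CNF formula $\varphi$ on $n$ variables and $m$ clauses into a \BFR instance, so tightly designed that any $\Ostar{(\max B+1-\epsilon)^{\pw+f_B(\Delta^*)}}$ algorithm would violate SETH. The key quantitative observation is that a single simple vertex labeled $B$ can, by its incident degree in a partial solution, carry one of $\max B+1$ distinct states across a bag of a path decomposition. I would fix a large constant $q$ (to be tuned at the end), partition the variables of $\varphi$ into $N \deff \lceil n/q\rceil$ blocks of size $q$, and fix an arbitrary injection $\iota\colon \SetB^q \hookrightarrow \{0,\dots,\max B\}^g$ where $g \deff \lceil q/\log_2(\max B+1)\rceil$. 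Each block is represented by a ``column'' of $g$ long horizontal propagation wires of simple $B$-vertices; the number of edges selected in the wires, read as a tuple in base $\max B+1$, encodes the image $\iota(\alpha)$ of the block's current assignment $\alpha$.

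The graph is then laid out as an $N\times m$ grid. Between consecutive horizontal segments of the same column I would insert a complex propagation node whose relation enforces that the number of incoming selected edges equals the number of outgoing selected edges in each of the $g$ tracks, so that the encoded assignment stays constant along the whole column. At row $j$ I would place a single complex verification node $R_j$ whose relation inspects the $g$ tracks of each block mentioned in clause $C_j$ and accepts exactly those patterns that (a)~arise from $\iota$-encoded assignments and (b)~together satisfy $C_j$. To meet the single-even-Hamming-weight requirement of \cref{def:bfr}, each complex relation is padded with a fixed number of dummy dangling edges forced to be selected; this is a standard trick that does not affect expressiveness. The resulting \BFR instance then has a solution if and only if $\varphi$ is satisfiable.

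For the pathwidth bound I would sweep a path decomposition through the rows from top to bottom, keeping in each bag the $gN$ simple wire vertices currently being propagated together with the complex node(s) touching the current row; this gives width $gN + h(B,k,q)$ for some function $h$ independent of $n$ and $m$. The slack $f_B(\Delta^*)$ in the statement is precisely what absorbs $h$ and, when this reduction is later composed with the \BFR-to-\BFactor realizations, the additional pathwidth paid per complex incidence. Plugging the hypothetical $\Ostar{(\max B+1-\epsilon)^{\pw+f_B(\Delta^*)}}$ algorithm in then yields a SAT solver running in time $\Ostar{(\max B+1-\epsilon)^{gN}}$; since the ratio $\log_2(\max B+1-\epsilon)/\log_2(\max B+1)$ is strictly below $1$ for any fixed $\epsilon>0$, choosing $q$ (and hence $g$) large enough makes this at most $\Ostar{(2-\delta)^n}$ for some $\delta=\delta(\epsilon,B)>0$, contradicting SETH. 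The main obstacle I foresee is engineering the verification relations $R_j$ so that each depends only on the $\le k$ relevant blocks and yet is still a single ``one even Hamming weight'' relation of arity bounded purely by a function of $B$, $k$, and $q$; once this is done, the partitioning, grid layout, and arithmetic for the SETH contradiction follow a template that is by now well established in pathwidth lower-bound constructions.
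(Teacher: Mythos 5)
Your overall strategy is the right one and matches the paper's: partition the $n$ variables into blocks, encode a block's assignment as a tuple of partial degrees on a column of simple $B$-vertices, keep one column per block alive in each bag of a path decomposition so that the width is about $n/\log_2(\max B+1)$, and derive a SETH contradiction by the usual parameter calculus. The variation where you place a single verification node $R_j$ per clause and connect it to all blocks containing variables of $C_j$ is a workable alternative to the paper's per-(group, clause) complex node with a vertically propagated ``clause-satisfied'' bit; since $C_j$ touches at most $k$ blocks, your $R_j$ still has arity bounded by a function of $k$, $B$, and $\epsilon$, so $\Delta^*$ remains bounded, and a top-to-bottom sweep of the $gN$ wire vertices gives the same pathwidth bound.

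There is, however, a genuine gap in the part you flag as the ``main obstacle'': your handling of the fixed even Hamming weight requirement in \cref{def:bfr} does not work as stated. You propose an arbitrary injection $\iota\colon \SetB^q\hookrightarrow\{0,\dots,\max B\}^g$ and let your propagation nodes enforce that the number of incoming selected edges on each track equals the number of outgoing ones. Then the Hamming weight of any accepted input to a propagation node is $2\sum_\ell x_\ell$, which \emph{varies} with the encoded assignment. Padding the relation with a fixed number of forced dummy dangling edges shifts every accepted weight by the same constant, so it cannot equalize weights that already differ. The paper resolves this by two interlocking choices: it restricts the image of $\iota$ to vectors of fixed total weight $gM/2$ (paying for this with a slightly smaller $q$, namely $q=\lfloor g\log(M+1)-\log(gM+1)\rfloor$, justified by a pigeonhole count of fixed-weight vectors), and it imposes the complementary constraint $x_\ell+y_\ell=M$ between the forward edges of one simple node and the backward edges of the next, so that each track contributes exactly $M$ to the complex node's weight and the total is constant. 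Without either a fixed-weight encoding or a complementary-edge constraint, your construction does not produce a valid \BFR instance, so the reduction as written is incomplete. Once you import those two devices (and carry the slightly smaller $q$ through your parameter calculus, which still closes because the loss is only a $\log(gM+1)$ term), the rest of your argument goes through.
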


\subparagraph*{High Level Idea.}
We follow the ideas of previous lower bound reductions from \cite{LokshtanovMS18}
and combine them with the concept of using relations from \cite{CurticapeanM16}.
From now on let $M\deff\max B$.
Let $\phi$ be the given CNF formula with $n$ variables and clauses $C_1,\dots,C_m$.
Instead of encoding each variable separately,
we group $q$ variables together and encode (partial) assignments to these groups.
For each partial assignment, we define a vector in $[0,M]^g$,
where $g$ is chosen such that $2^q \le (M+1)^g$.
For each group we define a \emph{layer} with $g$ parallel \emph{rows},
where each row corresponds to one dimension of the vector.
The layers consist of an alternation of $g$ parallel simple nodes and a complex node that is related to a clause.
All simple nodes are connected to their neighboring complex nodes by $M$ parallel edges.
The vector from above then corresponds to the number of selected edges
from a simple node to the following shared complex node.
The complex nodes check whether the assignment represented by the selected edges of a layer
satisfies the related clause.
For each clause we connect the related complex nodes by a path.
This path is used to propagate the information whether the clause is already satisfied by some partial assignment
or whether it still needs to be satisfied.
We ensure that each clause is initially not satisfied and eventually all clauses must be satisfied.

\begin{figure}[t]
  \centering
  \includegraphics{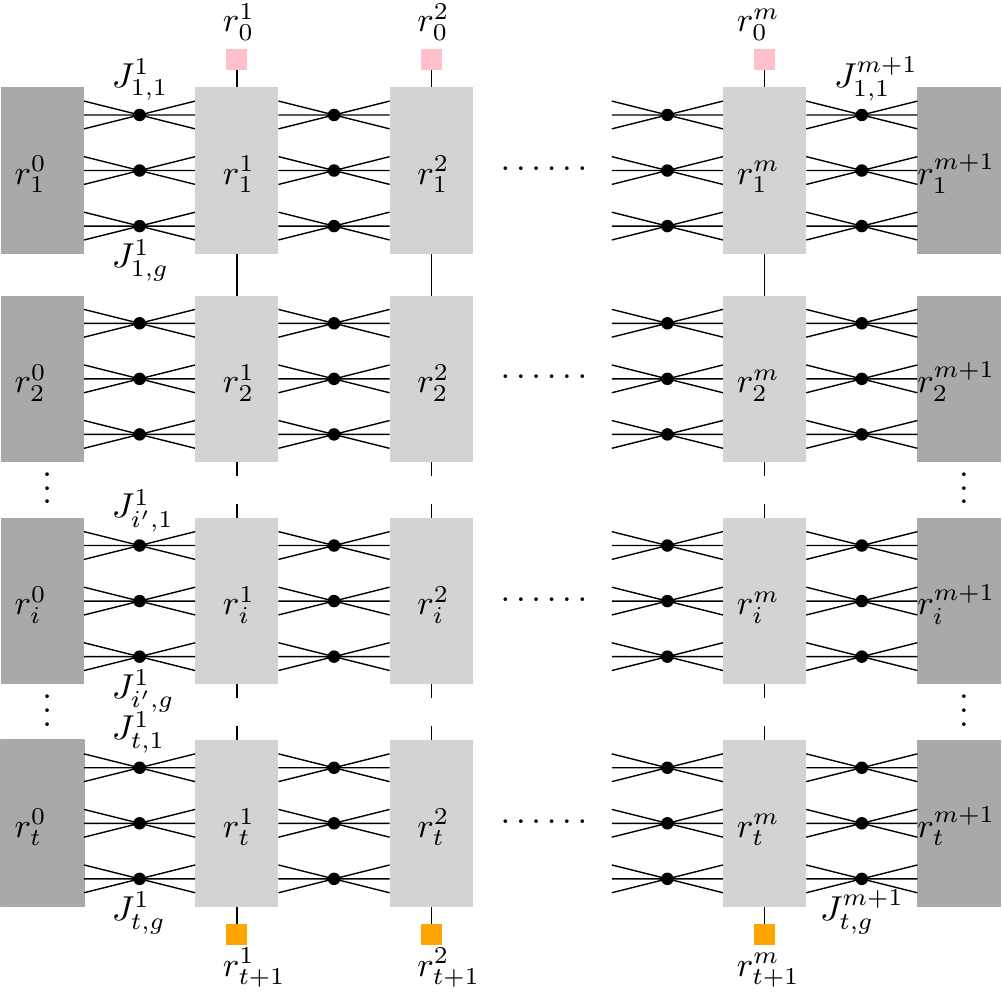}
  \caption{An example illustrating the construction from the proof of \cref{thm:lower:satToBFR}.
  The simple nodes are represented by circles while the complex nodes are represented by boxes.}
  \label{fig:lower:construction}
\end{figure}

\subparagraph*{Constructing the \BFR Instance.}
See \cref{fig:lower:construction} for an example of the following construction.
Split the variables of $\phi$ into $t \deff \ceil{n/q}$ groups $F_1,\dots,F_t$ of size at most $q$, where $q$ is chosen later.
For each of the $t$ groups we encode the $2^q$ partial assignments by vectors from $[0,M]^g$ for some $g$ chosen later.
Instead of using all $(M+1)^g$ possible encodings
we only use those vectors where the total weight of the coordinates is equal to $gM/2$
(we will choose $g$ as a multiple of 4, hence $gM/2$ is an even number).
It can easily be shown
that there are more than $(M+1)^g/(gM+1)$ vectors with exactly this weight. 
Thus, after setting $q = \floor{\log((M+1)^g)-\log(gM+1)}$,
we can map each of the $2^q$ assignments of a group $F_i$ to a distinct vector $[0,M]^g$ with weight exactly $gM/2$.
We say that an partial assignment $\tau$ to a group $F_i$ satisfies a clause $C_j$
if at least one literal in the clause is satisfied under the assignment $\tau$.
Note, that a group $F_i$ does not have to cover all variables of $C_j$ to satisfy the clause.

We define the graph now as follows:
\begin{enumerate}
  \item
  For all $i\in[t]$, $\ell\in[g]$, and $j\in[m]$:
  create a simple node $J_{i,\ell}^j$.

  \item
  For all $i\in[t]$ and $j\in[m]$:
  create complex nodes $r_i^j$
  with relation $R_i^j$ to be defined later.

  \item
  For all $i\in[t]$:
  create complex nodes $r_i^0$ and $r_i^{m+1}$
  with relation $R^0$.

  \item
  For all $j\in[m]$:
  create complex nodes $r_0^j$ (resp.\ $r_{t+1}^j$)
  with relation $\HWeq{0}$ (resp.\ $\HWeq{1}$).

  \item
  For all $i\in[t]$, $\ell\in[g]$, and $j\in[m]$:
  make $J_{i,\ell}^j$ adjacent to $r_{i-1}^j$ and $r_i^j$
  by $M$ parallel edges each.
  We call these edges backwards and forwards edges, respectively.

  \item
  For all $i\in[t]$ and $j\in[m]$,
  make $r_i^j$ additionally adjacent to $r_{i-1}^j$ and $r_{i+1}^j$ by one edge each.
  The degree of the nodes is now $2gM+2$.
\end{enumerate}
We call the set of nodes $\{r_i^j, J_{i,\ell}^j\}_{j,\ell}$ the $i$th \emph{layer}.
The set $\{J_{i,\ell}^j\}_j$ forms the $\ell$th \emph{row} of the $i$th layer.
For a fixed $j\in[m]$,
the set of nodes $\{r_i^j\}_i$ is called the $j$th \emph{column}.

The idea is now the following:
For each partial assignment $\tau$ to a group $F_i$,
we define a vector $v_\tau \in [0,M]^g$ of weight $gM/2$ as its \emph{encoding}.%
\footnote{Note that for different groups the encoding of the same partial assignment do not need to be the same.}
Then $v_\tau[\ell]$ corresponds to the number of selected forward edges of the simple nodes in the $\ell$th row of the $i$th layer.
The vertical edges encode whether a clause was already satisfied.
That is, if the edge between $r_i^j$ and $r_{i+1}^j$ is selected,
then there is some group $F_k$ with $k \le i$ where the corresponding assignment satisfies the clause $C_j$.
By the relation of the nodes $r_0^j$ every clause is initially not satisfied.
But the relation of the $r_{t+1}^j$ nodes ensures that every clause is eventually satisfied.

\subparagraph*{Defining the Relations.}
$R^0$ accepts exactly those inputs of Hamming weight exactly $gM/2$,
an even number by assumption,
where the selected edges for each row must precede the unselected edges,
i.e.\ the first $k$ edges are selected, the next $M-k$ are not selected.

The relation $R_i^j\subseteq \SetB^{2Mg+2}$ of node $r_i^j$ is defined as follows:
\begin{itemize}
  \item
  For $\ell\in[g]$,
  let $x_\ell$ (resp.\ $y_\ell$)
  be the number of selected incident edges to $J_{i,\ell}^j$ (resp.\ $J_{i,\ell}^{j+1}$).
  \item
  $\sum_{\ell \in [g]} x_\ell = gM/2 = \sum_{\ell \in [g]} y_\ell$.
  \item
  $\langle x_1,\dots,x_g \rangle$ describes a valid encoding,
  i.e.\ it corresponds to a partial assignment for $F_i$.
  \item
  $x_\ell + y_\ell = M$.
  Further, 
  the $x_\ell$ (resp.\ $y_\ell$) selected edges
  precede the $M-x_\ell$ (resp.\ $M-y_\ell$) unselected edges
  of the $M$ parallel edges going to a simple node.
  \item
  If the ingoing top edge is selected,
  then the outgoing bottom edge is also selected.
  \item
  If the ingoing top edge is not selected:
  \begin{itemize}
    \item 
    If $C_j$ does not contain a variable of $F_i$,
    then the outgoing bottom edge is not selected.
    \item
    If $C_j$ contains at least one variable of $F_i$,
    then the outgoing bottom edge is selected
    if and only if the selected edges correspond to a valid partial assignment satisfying $C_j$.
  \end{itemize}
\end{itemize}

\subparagraph*{Final Modifications.}
Unfortunately the previous construction is not a \BFR instance.
One reason is that the Hamming weight of the accepted inputs of the relations $R_i^j$
is not always the same even number.
The misbalance is caused by the edges encoding the evaluation of the clause.
To avoid this problem,
we add another edge between all $r_i^j$ and $r_{i+1}^j$ nodes.
We refer to this new edge as the negated edge
and to the original edge as the positive edge.
We further introduce nodes $\hat r_0^j$ and $\hat r_{t+1}^j$
with relations $\HWeq{1}$ and $\HWeq{0}$
which are connected to $r_1^j$ and $r_t^j$, respectively.
Observe that the relations are exactly opposite to the ones for the nodes $r_0^j$ and $r_{t+1}^j$.
We extend all relations $R_i^j$ by two new inputs and 
require that the negated edge is selected if and only if the corresponding positive edge is not selected.
Then, the Hamming weight of the accepted inputs of the relations $R_i^j$ is always the same,
namely $gM+2$, which is an even number as $gM/2$ is even by assumption. 
In a second step we merge the four nodes $r_0^j$, $r_{t+1}^j$, $\hat r_0^j$, and $\hat r_{t+1}^j$
into a single node $r^j$ of degree 4, for all $j\in [m]$.
The relation of $r^j$ accepts only the unique input of weight 2
which agrees with the relations of all four replaced nodes.
We later show that the parallel edges disappear
when replacing the complex nodes by their realizations
(the portal nodes of the realization will be pairwise different).
Nevertheless, if one wants to get a simple graph,
one can use five appropriately connected $\HWeq[4]{2}$ nodes
to replace one such edge.
We leave it to the reader to find the gadget.
Almost all of these modification are only necessary due to parity issues.
As they are in most cases not relevant for the correctness of the construction,
we mostly ignore them in our proofs in the following.

\begin{lemma}\label{lem:lower:construction:completeness}
  If $\phi$ is satisfiable, then there is a solution to the \BFR instance.
\end{lemma}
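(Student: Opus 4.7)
Given a satisfying assignment $\sigma$ of $\phi$, the plan is to exhibit an explicit edge set $\widehat{E}$ and verify that every simple-node degree constraint and every complex-node relation holds. For each $i\in[t]$, let $\tau_i$ be the restriction of $\sigma$ to the group $F_i$ and let $v_i = v_{\tau_i}\in[0,M]^g$ be its encoding, a vector of weight exactly $gM/2$. The edges inside each layer will be chosen so that the $\ell$-th row of layer $i$ realizes the $\ell$-th coordinate of $v_i$, and the vertical edges of each column will be chosen to propagate whether clause $C_j$ has already been satisfied by some $\tau_k$ with $k\le i$.

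\textbf{Edges within layers.} In each layer $i$, each row $\ell\in[g]$, and each column $j$, I select the edges incident to the simple node $J_{i,\ell}^j$ in the two adjoining $M$-parallel-edge bundles so that the counts match the encoding value $v_i[\ell]$, always placing selected edges before unselected ones in the fixed ordering demanded by $R^0$ and $R_i^j$. The weight condition $\sum_\ell x_\ell = gM/2 = \sum_\ell y_\ell$ of $R_i^j$ follows immediately from the fact that every encoding vector has weight $gM/2$, the precede-clause holds by construction, and the $x_\ell+y_\ell = M$ clause together with $M\in B$ forces each $J_{i,\ell}^j$ to attain a degree in $B$.

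\textbf{Vertical edges and verification.} Within each column $j$, I assign the positive vertical edge between $r_i^j$ and $r_{i+1}^j$ to be selected iff some $\tau_k$ with $k\le i$ satisfies $C_j$; the companion negated edge takes the opposite value so that the parity-fix contribution at each $r_i^j$ is exactly one per pair. Because $\sigma$ satisfies $\phi$, at least one $\tau_k$ satisfies each $C_j$, hence the topmost positive edge is selected (matching the $\HWeq{1}$ relation of $r_{t+1}^j$) and the bottom is unselected (matching $\HWeq{0}$ at $r_0^j$); after the merging step, the unique weight-$2$ tuple accepted by $r^j$ is induced. The remaining propagation clauses of $R_i^j$ then follow from a case analysis on the incoming vertical edge and on whether $F_i$ shares any variable with $C_j$: if the incoming edge is selected, the outgoing one is too by the monotone rule; if not, either $F_i$ contains no variable of $C_j$ (and the outgoing edge is correctly unselected) or the outgoing edge becomes selected exactly when $\tau_i$ satisfies $C_j$. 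The main obstacle is to check that the layer-local and column-local choices interlock consistently across adjacent layers through the $x_\ell+y_\ell=M$ clause, which dictates how the degree of each $J_{i,\ell}^j$ is split between its upward and downward bundles; once this alignment is fixed, all remaining clauses of $R_i^j$ reduce to routine arithmetic on the entries of $v_i$ and on the monotone clause-satisfaction indicator.
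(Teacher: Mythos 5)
Your proposal is correct and follows essentially the same construction as the paper's proof: select $v_i[\ell]$ forward and $M-v_i[\ell]$ backward edges (topmost first) in each row to encode the restriction of $\sigma$ to $F_i$, and select the vertical edge below $r_i^j$ exactly when some group $F_k$ with $k\le i$ already satisfies $C_j$, which works because $\sigma$ satisfies every clause. The only cosmetic slips are that you swap the top/bottom orientation of the column endpoints $r_0^j$ and $r_{t+1}^j$ (while still pairing them with the correct relations) and attribute the simple nodes' degree-$M$ guarantee to the $x_\ell+y_\ell=M$ clause rather than to the explicit choice of $v_i[\ell]$ and $M-v_i[\ell]$ edges; neither affects correctness.
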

\begin{proof}
  Let $\sigma$ be a satisfying assignment to the variables of $\phi$.
  For fixed variable group $F_i$,
  let $v_i\in [0,M]^g$ be the encoding that corresponds to $\sigma$ when restricted to $F_i$.
  For the solution we select the edges as follows:
  In the $\ell$th row of the $i$th layer we select the $v_i[\ell]$ top most forwards edges of the simple nodes
  and the $M-v_i[\ell]$ top most backwards edges.
  By this obviously every simple node is incident to $M$ selected edges
  and the conditions of the relations $R^0$ are satisfied.

  Now consider a fixed clause $C_j$.
  Let $F_k$ be the first (with regard to $k$) group of variables 
  whose partial assignment satisfies the clause.
  Then we do not select the edges between $r_{i-1}^j$ and $r_{i}^j$ for all $i\in[k]$.
  Combined with the valid encoding from above,
  the relations of these nodes are now satisfied.
  As the encoding precisely corresponds to the assignment satisfying the clause,
  we select the edges between the nodes $r_{i}^j$ and $r_{i+1}^j$ for all $i\ge k$.
  Then all relations are satisfied and the claim follows.
\end{proof}

\begin{lemma}\label{lem:lower:construction:correctness}
  If there is a solution to the \BFR instance, then $\phi$ is satisfiable.
\end{lemma}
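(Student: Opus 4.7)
I would mirror the construction of \cref{lem:lower:construction:completeness} in reverse: starting from a solution $\widehat{E}$ of the \BFR instance, read off a satisfying assignment of $\phi$.

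The first step is to attach a partial assignment $\tau_i$ of each group $F_i$ to $\widehat E$. For every column $j\in[m]$, the second and third bullets of the definition of $R_i^j$ force the tuple $(x_1,\dots,x_g)$ of selected-edge counts at the simple nodes $J_{i,1}^j,\dots,J_{i,g}^j$ to have coordinate sum $gM/2$ and to be a valid encoding, hence to correspond to a unique partial assignment $\tau_i^j$ of $F_i$. I then argue that $\tau_i^j$ is in fact independent of $j$: the constraint ``$x_\ell + y_\ell = M$'' in $R_i^j$, the prefix condition on each bundle of $M$ parallel edges, the simple-node degree constraints $\deg_{\widehat E}(J_{i,\ell}^j)\in B$, and the boundary relations $R^0$ at $r_i^0$ and $r_i^{m+1}$ together pin the counts $x_\ell$ rigidly, yielding $\tau_i^j=\tau_i^{j+1}$ for every $j$. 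Write $\tau_i$ for this common value and set $\sigma := \tau_1 \cup \dots \cup \tau_t$, which is a total assignment of $\phi$ because the groups $F_1,\dots,F_t$ partition the variables.

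The second step is to check that $\sigma$ satisfies every clause. Fix $C_j$. After the parity modifications, the merged boundary relation at $r^j$ forces the positive vertical edge between $r_0^j$ and $r_1^j$ to be unselected and the positive vertical edge between $r_t^j$ and $r_{t+1}^j$ to be selected. Walking down the positive vertical chain of column $j$, there is therefore a least $i^* \in [t]$ where the incoming top edge of $r_{i^*}^j$ is unselected but the outgoing bottom edge is selected. The last case of $R_{i^*}^j$ (``ingoing top not selected, outgoing bottom selected'') now forces $C_j$ to contain a variable of $F_{i^*}$ and the encoding at layer $i^*$, column $j$---which equals $\tau_{i^*}$ by the first step---to correspond to a partial assignment of $F_{i^*}$ satisfying $C_j$. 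Hence $\sigma$ satisfies $C_j$, and $\phi$ is satisfiable.

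The main obstacle will be establishing column-independence in the first step: the relation $R_i^j$ sits locally at $r_i^j$ and so cannot directly compare edges in columns $j$ and $j+1$. The coupling must be extracted from the combined effect of the prefix condition, the ``$x_\ell+y_\ell=M$'' constraint, the simple-node degree constraint $\deg_{\widehat E}(J_{i,\ell}^j)\in B$, and the boundary pinning by $R^0$ (which fixes the total weight per row to $gM/2$); this is the step that needs careful bookkeeping, whereas the rest of the argument is a direct reversal of the completeness construction together with a transparent walk along the vertical chain.
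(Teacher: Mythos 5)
Your overall plan coincides with the paper's proof: read off a per-group assignment from the selected-edge counts, then walk down each column to locate the switch from an unselected top edge to a selected bottom edge. Your second step is correct and essentially identical to the paper's. The problem is that the column-independence claim in your first step---the only part of this lemma that is not a direct reversal of the completeness argument---is left unproven, and the mechanism you suggest (``the constraints pin the counts rigidly'') is not how it works. No combination of local constraints pins the count of an individual row: a priori the number of selected forwards edges in row $\ell$ could decrease as the column index grows, compensated by behaviour in another row, and the prefix condition contributes nothing here (it is only needed for parsimony of the reduction).

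What actually closes the gap is a monotonicity-plus-conservation argument. Let $x_\ell^j$ and $y_\ell^j$ denote the numbers of selected forwards and backwards edges of $J_{i,\ell}^j$. The relation of the complex node between consecutive simple nodes of a row enforces the complementation $y_\ell^{j+1}+x_\ell^{j}=M$, while the degree constraint of the simple node $J_{i,\ell}^{j+1}$ (its solution degree lies in $B$, hence is at most $M$) gives $x_\ell^{j+1}+y_\ell^{j+1}\le M$; combining the two yields the per-row monotonicity $x_\ell^{j+1}\le x_\ell^{j}$. The relations $R^0$ at the two ends of the layer fix $\sum_{\ell}y_\ell^{1}=\sum_{\ell}x_\ell^{m+1}=gM/2$, and since $\sum_{\ell}x_\ell^{1}\le\sum_{\ell}(M-y_\ell^{1})=gM/2$, a strict decrease in any row at any column would force $\sum_{\ell}x_\ell^{m+1}<gM/2$, a contradiction. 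Hence $x_\ell^{1}=\dots=x_\ell^{m+1}$ in every row (and, as a by-product, every simple node has solution degree exactly $M$), which is precisely the column-independence you need; with it, the rest of your argument goes through as written.
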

\begin{proof}
  Let $S\subseteq E$ be a solution.
  We construct an assignment $\sigma$ by assigning values to the variables of each group $F_i$ independently.
  For this fix some $i\in[t]$.

  For $\ell\in[g]$,
  let $x_\ell^1,\dots,x_\ell^{m+1}$ and $y_\ell^1,\dots,y_\ell^{m+1}$
  be the number of selected forwards and backwards edges
  of the nodes $J_{i,\ell}^j$, respectively.
  We claim that the number of selected forwards edges does not change for a fixed row $\ell \in[g]$:
  \begin{claim}\label{claim:lower:construction:correctness:helper}
    $x_\ell^1=\dots=x_\ell^{m+1}$
    and $y_\ell^1=\dots=y_\ell^{m+1}$.
  \end{claim}
  \begin{claimproof}
    By the construction of the graph and the definition of the relations we get:
    \begin{align}
      y_\ell^{j+1} + x_\ell^j &= M  && \forall \ell\in [g], j\in[m] \\
      y_\ell^j+x_\ell^j &\le M && \forall \ell\in [g], j\in[m+1] \\
      \sum_{\ell=1}^g x_\ell^{m+1} &= \frac{gM}2 = \sum_{\ell=1}^g y_\ell^1 \\
      \intertext{Combining (1) and (2) gives us:}
      x_\ell^j &\ge x_\ell^{j+1} && \forall j\in[m] \\
      y_\ell^j &\le y_\ell^{j+1} && \forall j\in[m]
    \end{align}
    Now assume for contradictions sake that there is some $j'$ with $x_\ell^{j'} > x_\ell^{j'+1}$:
    \[
    \frac{gM}2
     \overset{(3)}  = \sum_{\ell=1}^g x_\ell^{m+1}
     \overset{(4)}\le \sum_{\ell=1}^g x_\ell^{j'+1}
                    < \sum_{\ell=1}^g x_\ell^{j'}
     \overset{(4)}\le \sum_{\ell=1}^g x_\ell^1
     \overset{(2)}\le \sum_{\ell=1}^g (M-y_\ell^1)
                    = gM - \sum_{\ell=1}^g y_\ell^1
     \overset{(3)}  = \frac{gM}2
    \]
    which is obviously a contradiction.
    This proves the claim as we can use the analogous argument for the backwards edges. 
    Observe that this implies that all simple nodes have degree exactly 
    $M$ in the solution. 
  \end{claimproof}

	Since $S$ is a solution, the relation $R_{i}^{j}$ must be satisfied.
	Thus, $\langle x_1^1,\dots,x_g^1 \rangle$ must correspond to a valid encoding of some assignment for $F_i$.
  We use this partial assignment for $\sigma$.

  To show that $\sigma$ satisfies all clauses
  let $C_j$ be an arbitrary clause.
  As $S$ is a solution, the relations of the nodes $r_0^j$ and $r_{t+1}^j$ must be satisfied by $S$.
  By definition of the relations $R_i^j$,
  there must be a $i_j$ and a node $r_{i_j}^j$
  where the top edge is not selected
  but the bottom edge is selected.
  By definition of the relation, this change can only happen
  if the selected incident forwards edges describe an encoding
  which corresponds to an assignment satisfying $C_j$.
  By \cref{claim:lower:construction:correctness:helper} this number of selected edges is always the same.
  Hence, the assignment represented by the selected edges that are incident to $r_i^j$
  agrees with the assignment we chose for $\sigma$.
\end{proof}
To obtain a tight lower bound,
we need to analyze the pathwidth of our construction and have to bound the degree of the complex nodes.
\begin{lemma}\label{lem:lower:construction:bounds}
  The graph has $\O(tgm)$ simple and $\O(tm)$ complex nodes.
  The degree of the complex nodes is bounded by $2gM+4$.
  The degree of the simple nodes is bounded by $2M$.
  We can efficiently construct a path decomposition of width $tg + \O(1)$
  where at most three complex nodes are simultaneously in one bag.
\end{lemma}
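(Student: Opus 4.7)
The plan is to verify each of the four claims in order; the first three reduce to straightforward inspection of the construction, while the last requires a carefully chosen order for the path decomposition.

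For the vertex counts, the simple nodes $J_{i,\ell}^j$ are indexed by $(i,\ell,j) \in [t]\times[g]\times[m]$, giving exactly $tgm$ simple nodes. The complex nodes consist of $r_i^j$ for $(i,j) \in [t]\times[m]$, the boundary complex nodes $r_i^0, r_i^{m+1}$ for $i\in[t]$, and the merged nodes $r^j$ for $j\in[m]$, which totals $tm + 2t + m = \O(tm)$.

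For the degree bounds, I would simply count the incident edges. Each simple node $J_{i,\ell}^j$ is incident to exactly two complex nodes by $M$ parallel edges each, giving degree $2M$. Each interior complex node $r_i^j$ (with $i\in[t], j\in[m]$) receives $gM$ edges from the $g$ simple nodes of its adjacent layer on one side, another $gM$ from the adjacent layer on the other side, and finally the four column edges (two positive and two negated), summing to $2gM + 4$. All remaining complex nodes (merged or boundary) have strictly smaller degree, so the overall bound $2gM + 4$ holds.

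For the path decomposition, the idea is to sweep column by column, i.e.\ process the columns in order $j = 0, 1, \ldots, m+1$, keeping the $tg$ simple nodes of the ``current column'' in the bag throughout. Since each simple node $J_{i,\ell}^j$ is incident only to complex nodes indexed by columns $j{-}1$ and $j$, it can be safely forgotten once both of those complex nodes have been handled. Within a single column we walk through the complex nodes layer by layer (top to bottom), introducing each $r_i^j$ together with its column edges to the previously introduced $r_{i-1}^j$ and the forwards/backwards edges to the $g$ simple nodes of layer $i$ in both the current and the previous column; we then forget $r_{i-1}^j$ before moving to the next layer. At any moment the bag thus contains the $tg$ simple nodes of the currently active column, at most three complex nodes (the current $r_i^j$, its column neighbor $r_{i-1}^j$ or $r_{i+1}^j$, and the matching $r_i^{j-1}$ still needed for the backwards edges), plus a constant additive overhead for transitions between columns. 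This yields width $tg + \O(1)$ with at most three complex nodes coexisting in any bag.

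The main obstacle is scheduling the introduce-vertex, introduce-edge, and forget operations so that the three-complex-node invariant is never violated, especially at the column boundaries where the simple nodes of the old column are swapped out for those of the new column and the last remaining complex node of the old column must be retained just long enough for its backwards edges from the new simple nodes to be introduced. This bookkeeping is routine but must be done explicitly; one verifies that each edge is introduced exactly once and each vertex appears in a contiguous subpath of bags, which implies the validity of the decomposition.
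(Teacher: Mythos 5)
Your counting arguments and the overall sweep (column by column, within a column layer by layer, keeping one simple node per row of each layer in the bag) are exactly the paper's strategy, and the width bound $tg+\O(1)$ goes through since $g$ is a constant depending only on $B$ and $\epsilon$. However, your accounting of which complex nodes coexist in a bag has a concrete flaw: you never place the merged nodes $r^j$ (obtained from fusing $r_0^j$, $r_{t+1}^j$, $\hat r_0^j$, $\hat r_{t+1}^j$ in the final modifications). Since $r^j$ is adjacent to both $r_1^j$ and $r_t^j$, and the bags containing a vertex must form a contiguous subpath, $r^j$ has to stay in the bag from the introduction of $r_1^j$ until $r_t^j$ is forgotten, i.e.\ for the entire processing of column $j$. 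As written, your decomposition leaves the edges incident to $r^j$ uncovered; and once $r^j$ is added, your proposed triple $\{r_i^j, r_{i\pm1}^j, r_i^{j-1}\}$ becomes four complex nodes in one bag.

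The fix is that your third node $r_i^{j-1}$ is not actually needed: its edges to the simple nodes $J_{i,\ell}^{j}$ can (and should) be introduced while column $j-1$ is still being processed, at the moment those simple nodes are swapped into the bag, so $r_i^{j-1}$ can be forgotten before column $j$ begins. With that rescheduling the bag holds at most $r^j$ plus two consecutive nodes of column $j$, which is the ``at most three complex nodes'' claimed by the lemma (this constant is what keeps $\Delta^*$ bounded in terms of $gM$ downstream, so it matters that it is $\O(1)$). You should also handle the degree-$gM$ nodes $r_i^0$ and $r_i^{m+1}$ explicitly at the two ends of the sweep, as the paper does; since they carry no column edges they can be introduced and forgotten one at a time without affecting the bound.
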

\begin{proof}
  The number of nodes and their degree bound follows immediately from the construction.
  We give a mixed search strategy as shown in \cite{ParameterizedAlgos} to create a path decomposition of the graph.

  We start by placing searchers on each node $J_{i,\ell}^1$ for $i\in [t]$ and $\ell\in[g]$.
  We place a searcher on each $r_i^0$ and remove it immediately afterwards.
  Then we clean the rest of the graph in stages, starting with stage 1.
  Stage $j$ starts if the searchers are placed on the nodes $J_{i,\ell}^j$.
  First place two searcher on $r^j$ and $r_1^j$.
  Then we move the searchers placed on the adjacent simple nodes of $r_1^j$
  to the next simple node of the row.
  Then place one searcher on $r_2^j$ and repeat the process after removing the one from $r_1^j$.
  Repeat this procedure for all other nodes $r_i^j$.
  When removing the searcher from $r_t^j$, we also remove the searcher from $r^j$.
  By this we eventually arrive at a point where all searchers are located at the nodes $J_{i,\ell}^{m+1}$.
  Then we use one additional searchers to clean the remaining $r_i^{m+1}$ nodes.
\end{proof}
Now we have everything ready to prove the lower bound for the intermediate problem \BFR based on the previous construction.
Recall, we defined $\Delta^*$
as the maximum total degree of the complex nodes appearing in one bag,
that is $\Delta^* = \max_{\text{bag } X} \sum_{v \in X\cap V_C} \deg(v)$.
\begin{proof}[Proof of \cref{thm:lower:satToBFR}]
  Assume an $\Ostar[N]{(M+1-\epsilon)^{\pw+f(\Delta^*)}}$ time algorithm for \BFR
  on graphs with $N$ nodes exists for some function $f$.
  Due to rounding issues and the fact that we only use even relations,
  the parameters must be chosen quite carefully.
  We set $\lambda \deff \log_{M+1}(M+1-\epsilon) < 1$.
  Choose an $\alpha > 1$ such that $\alpha \cdot \lambda = \delta' = \log(2-\delta)<1$ for some $\delta>0$.
  This is always possible since $\lambda < 1$.
  Choose $g$ large enough such that $g \log(M+1) \le \alpha \floor{g \log(M+1) - \log(gM+1) }$ and $g$ is divisible by 4.
  By \cref{lem:lower:construction:bounds} and our choice of parameters
  ($t = \ceil{n/q}$ and $q = \floor{g \log (M+1) - \log(gM+1)}$)
  we get:
  \begin{align*}
    &\Ostar[N]{(M+1-\epsilon)^{\pw+f(\Delta^*)}}
     = \Ostar[N]{(M+1-\epsilon)^{tg + \O(1)+ f(\Delta^*)} } \\
    &= \Ostar[N]{(M+1-\epsilon)^{\ceil{\frac n q} g + f(\Delta^*)} }
     = \Ostar[N]{(M+1-\epsilon)^{\frac {n \cdot g} {\floor{g\log(M+1) - \log(gM+1)} } + g + \O(1) + f(\Delta^*)} }
  \end{align*}
  Since $B$ is fixed, $M$ is constant. 
  Further, $g$ only depends on $M$ and $\epsilon$ 
  and $N$ only depends on the number of variables $n$ 
  and number of clauses $m$ of the SAT instance.
  As $\Delta^* \in \O(gM)$,
  the factor of $(M+1-\epsilon)^{\O(1)+g+f(\Delta^*)}$
  contributes only a large constant (depending only on $M$ and $\epsilon$) to the overall running time
  which can be hidden by the $n^{\O(1)}$ term:
  \begin{align*}
  &\le \Ostar[(n+m)]{(M+1-\epsilon)^{ \frac {\alpha n} {\log(M+1)} }} \\
    &= \Ostar[(n+m)]{ 2^{\log(M+1-\epsilon) \frac {\alpha n}{\log(M+1)} }} 
     = \Ostar[(n+m)]{ 2^{\lambda \alpha n }}
     = \Ostar[(n+m)]{ 2^{\delta' n } } \\
     &= \Ostar[(n+m)]{ (2-\delta)^n }.
  \end{align*}
  This violates SETH (\cref{conj:seth})
  since this gives an algorithm to solve $k$-SAT with $n$ variables and
  $m$ clauses in time $\Ostar[(n+m)]{(2-\delta)^n}$ for some constant $\delta>0$
  for all $k$.
\end{proof}

For the lower bound for the counting version,
we need a one-to-one correspondence between valid assignments and solutions to the graph problem.
This is why we need that the selected edges are at specific positions,
i.e.\ above the not-selected edges.
\begin{corollary}\label{corr:lower:construction:parsimonious}
  The reduction in Theorem~\ref{thm:lower:satToBFR} is parsimonious, i.e.\ it preserves the number of solutions.
  Therefore, the statement also holds for the counting versions of \BFR and SETH.
\end{corollary}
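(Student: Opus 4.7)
The plan is to exhibit an explicit bijection between satisfying assignments of $\phi$ and solutions of the constructed \BFR instance, by observing that the maps implicit in Lemmas~\ref{lem:lower:construction:completeness} and~\ref{lem:lower:construction:correctness} are mutually inverse. First, the assignment-to-solution direction in Lemma~\ref{lem:lower:construction:completeness} is fully deterministic: once we fix $\sigma$, the encoding of $\sigma\vert_{F_i}$ yields a unique vector $v_i\in[0,M]^g$; the relation $R^0$ at each simple node $J_{i,\ell}^j$ forces the $v_i[\ell]$ selected forwards edges and the $M-v_i[\ell]$ selected backwards edges to occupy specifically the topmost positions among the parallel edges; and the vertical edges are determined by choosing $i_j$ to be the smallest index of a group whose partial assignment satisfies $C_j$. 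Hence $\sigma$ gives rise to exactly one edge set $S_\sigma$.

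Next I would argue that the solution-to-assignment map of Lemma~\ref{lem:lower:construction:correctness} is also a well-defined function, and that it inverts $\sigma\mapsto S_\sigma$. Claim~\ref{claim:lower:construction:correctness:helper} already guarantees that the counts $x_\ell^j$ are independent of $j$, so the tuple $\langle x_1^1,\dots,x_g^1\rangle$ read off at layer $i$ uniquely determines a partial assignment for $F_i$ via the fixed injective encoding. Together with the fact that $R^0$ forces the positions of the selected edges (not just their number), this shows that $S$ and the extracted $\sigma(S)$ carry exactly the same information about each layer. The composition in both directions is then the identity.

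The one subtle point, which I expect to be the main obstacle, is the uniqueness of the vertical edges in each column for a fixed $\sigma$. Here I would argue as follows: by definition of $R_i^j$, if the top edge of $r_i^j$ is selected, the bottom edge is forced selected, so the selected segment of the column is upward-closed. Starting from $r_0^j$ whose relation $\HWeq{0}$ forces the first edge unselected, the column therefore has a single transition index $i_j^\star$, and the relations at groups containing no variable of $C_j$ propagate the unselected state while the relation at a group containing a variable of $C_j$ flips to selected exactly when the encoded partial assignment satisfies $C_j$. Since $r_{t+1}^j$ forces the last edge selected, this transition must occur at the earliest such group; combined with Claim~\ref{claim:lower:construction:correctness:helper}, which ties the encoding to $\sigma$, this uniquely pins down all vertical edges.

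Finally, I would check that the parity-fixing modifications at the end of the construction (the negated copies of the vertical edges, the $\hat r_0^j,\hat r_{t+1}^j$ nodes, and the merging into a single $r^j$) are deterministic given the positive edges, since the extended relations force each negated edge to be selected iff the corresponding positive one is not; they therefore introduce no additional choices and preserve the bijection. The counting lower bound then follows by the standard fact that a parsimonious polynomial-time reduction transports \#SETH-hardness from \#SAT to \CountBFR with the same parameter bounds as in Theorem~\ref{thm:lower:satToBFR}.
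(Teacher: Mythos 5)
Your argument is correct and matches the paper's implicit justification: the paper states this corollary without an explicit proof, relying precisely on the observation you make that the relations force the \emph{positions} (not just the counts) of the selected parallel edges and that the column edges are then propagated deterministically, so the completeness and correctness maps are mutually inverse bijections. One small slip worth fixing: $R^0$ is not a relation ``at each simple node $J_{i,\ell}^j$'' --- the simple nodes carry only the set $B$; $R^0$ is the relation of the complex boundary nodes $r_i^0$ and $r_i^{m+1}$, and the position-forcing for the interior columns comes from the relations $R_i^j$ (which, like $R^0$, require the selected parallel edges to precede the unselected ones). Likewise, the uniqueness of the column transition index comes directly from $R_i^j$ (the bottom edge is a function of the top edge and the horizontal state), with $\HWeq{1}$ at $r_{t+1}^j$ only guaranteeing that a transition happens at all; your sentence is defensible under a charitable reading but could be stated more precisely.
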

We also make the following observations about the constructed \BFR instances.
\begin{corollary}
  \label{corr:lower:construction:observations}
The lower bound proved in \cref{thm:lower:satToBFR} holds even for \BFR
instances where 
\begin{enumerate}
	\item the vertices in $V_S$ form an independent set,
	\item every vertex in $V_S$ is adjacent to exactly two vertices
  from $V_C$ by $\max B$ edges each,
	\item and every vertex in $V_S$ will have degree exactly
	$\max B$ in the solution. 
\end{enumerate}
\end{corollary}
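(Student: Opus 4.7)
The plan is to prove the corollary by simply inspecting the \BFR construction used in the proof of \cref{thm:lower:satToBFR} and recording three facts that are already implicit there. No new gadgets or reductions are needed; the goal is to phrase what has already been built in a way that makes these structural properties explicit, so that they can be invoked later (for instance, when realizing complex nodes by graph gadgets without creating extra adjacencies between simple nodes).

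For property (1), I would look at step 5 of the construction, which is the \emph{only} place where edges incident to a simple node $J_{i,\ell}^j$ are introduced. There, $J_{i,\ell}^j$ is made adjacent only to the complex nodes $r_{i-1}^j$ and $r_i^j$. Since no construction step ever adds an edge between two simple nodes, the set $V_S = \{J_{i,\ell}^j\}_{i,\ell,j}$ is an independent set. For property (2), the same step 5 says exactly that each $J_{i,\ell}^j$ is connected to $r_{i-1}^j$ and $r_i^j$ by $M=\max B$ parallel edges each, which is the content of the claim; no later modification (the negated edge or the merging of $r_0^j, r_{t+1}^j, \hat r_0^j, \hat r_{t+1}^j$) touches simple nodes, so the property is preserved.

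For property (3) I would invoke \cref{claim:lower:construction:correctness:helper} from the proof of \cref{lem:lower:construction:correctness}. That claim already shows that for any solution $S$, the numbers $x_\ell^j$ and $y_\ell^j$ of selected forwards and backwards edges at each simple node $J_{i,\ell}^j$ satisfy $x_\ell^j + y_\ell^j = M$ (using equations $(1)$ and $(2)$ combined with the fact that the inequalities in $(4)$ must be equalities, as noted at the end of the claim's proof). Since the total degree of $J_{i,\ell}^j$ in $S$ is $x_\ell^j + y_\ell^j$, every simple node has degree exactly $\max B$ in any solution.

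I do not anticipate a real obstacle: the statement is essentially a bookkeeping corollary of the construction plus the already-proved claim inside \cref{lem:lower:construction:correctness}. The only care needed is to check that the ``final modifications'' paragraph (adding negated edges and merging the four boundary column nodes) does not disturb properties (1)--(3); a line-by-line check confirms that those modifications only add or rewire edges between complex nodes, leaving the neighbourhoods and incident multiplicities of the simple nodes unchanged.
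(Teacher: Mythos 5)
Your proposal is correct and matches the paper's intended reading: properties (1) and (2) follow directly from step 5 of the construction in the proof of \cref{thm:lower:satToBFR} (the only place edges at simple nodes are introduced), and property (3) is precisely the remark made at the end of the proof of \cref{claim:lower:construction:correctness:helper}, namely that $x_\ell^j + y_\ell^j = M$ for every simple node in any solution. Your line-by-line check that the ``final modifications'' only add or rewire edges among complex nodes is exactly the remaining bookkeeping needed, so the corollary is indeed a direct reading-off from the construction rather than a separate argument.
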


\section{Decision Version}\label{sec:dec}
In this section we prove the lower bound for the decision version of \BFactor
by a reduction from the intermediate \BFR problem.
For this we formally define the concept of realizations
and show that we can realize all relations of a \BFR instance.
Replacing the nodes and their relations by these realizations
yields the final lower bound.
\begin{definition}[Realization]\label{def:dec:realization}
  Let $R \subseteq \SetB^k$ be a relation.
  Let $G$ be a node-labeled graph with dangling edges $D=\{d_1, \dots, d_k\} \subseteq E(G)$.
  We say that graph $G$ \emph{realizes} $R$ if for all $D' \subseteq D$:
  $D' \in R$
  if and only if
  there is a solution $S \subseteq E(G)$ with $S \cap D = D'$.
  We say that $G$ \emph{$B$-realizes} $R$ if $G$ is $B$-homogeneous.
  The endpoints of the dangling edges are called \emph{portals}.
\end{definition}
The crucial part of the reduction is the proof of the following theorem.
We postpone its proof and first show the lower bound.
\begin{theorem}\label{thm:dec:realization}
  Let $B \subseteq \SetN$ be a fixed set of finite size with $\maxgap B >1$ and $0 \notin B$.
  There is a $f:\SetN\to\SetN$ such that the following holds.
  Let $R \subseteq \SetB^e$ be an even relation
  (i.e.\ $\hw(x)$ is even for all $x\in R$).
  Then we can $B$-realize $R$ by a simple graph with $f(e)$ vertices
  of degree at most $\max B+2$,
  the portal nodes are pairwise distinct.
\end{theorem}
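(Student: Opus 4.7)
The plan is to $B$-realize arbitrary even relations by bootstrapping from a small library of elementary gadgets built out of the gap in $B$. Fix $a, b \in B$ with $a < b$, $b - a \geq 3$, and $B \cap (a,b) = \emptyset$; because $0 \notin B$, we have $a \geq 1$. The forbidden degree window $\{a+1,\dots,b-1\}$, of size at least two, will be exploited throughout to produce all-or-nothing dichotomies at internal vertices.

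First I would construct a handful of \emph{elementary $B$-homogeneous gadgets}: a forcing gadget that picks out a single designated dangling edge, an equality gadget $B$-realizing $\EQ{2}$ on two dangling edges, and a matching gadget $B$-realizing $\HWeq{2}$ on two dangling edges. The equality gadget is the linchpin: roughly, a pair of internal vertices connected through $b-a-1 \geq 2$ parallel ``middle'' edges, each also carrying a dangling edge and enough padding so that every vertex has total degree $b$. Any mixed partial assignment of the two dangling edges would force some internal vertex into the forbidden window, so only ``both selected'' or ``both unselected'' survives. The forcing gadget is built similarly by attaching an $\HWeq{a}$-style filter on $a+1$ edges, one of which is the output; the absence of $0$ from $B$ combined with a second filter rules out the ``nothing selected'' case.

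From these I would synthesize $\EQ{k}$ for arbitrary $k$ by chaining $\EQ{2}$ gadgets in a tree, and $\HWeq{c}$ on arbitrary arity for fixed $c$ by combining matching and equality gadgets (in particular, $\HWeq{1}$ acts as a selector). Given an even relation $R \subseteq \SetB^e$ with accepted tuples $t_1,\dots,t_r$, the final realization proceeds as follows: introduce a selector on $r$ new edges constrained by $\HWeq[r]{1}$, choosing exactly one accepted tuple; for each tuple $t_j$, attach a branch gadget that, when its selector edge is chosen, activates precisely the outputs corresponding to the $1$-coordinates of $t_j$; finally, merge the $r$ branch contributions per output coordinate through $\EQ{}$-style gadgets producing the $e$ portal edges. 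Evenness of each $t_j$ is essential so that each branch's total activation is compatible with $B$ at its internal vertices.

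The main obstacle is the elementary stage: every candidate gadget must be checked for every partial assignment, not only the intended ones, and because $B$ can contain elements both below $a$ and above $b$, the padding designed to rule out some configurations must not accidentally admit spurious ones; the forbidden window has to be deployed at least twice per gadget (once to kill the $a+1$ configuration, once for the $b-1$ configuration). Once the elementary gadgets are pinned down, the arity extension and the enumeration-over-tuples construction are purely combinatorial assembly. For the size and degree bounds, each elementary gadget has size $O(1)$ depending only on $B$, an arity-$k$ symmetric gadget has size $O(k)$, and the final realization uses $O(r) = 2^{O(e)}$ copies, giving $f(e) = 2^{O(e)}$. By construction no vertex has degree exceeding $\max B + 2$, the $+2$ slack accommodating the two merging edges at portal-adjacent vertices; simplicity is enforced by subdividing any parallel edges with an extra $\EQ{2}$ gadget, and distinctness of the portals is immediate from how the branches are wired.
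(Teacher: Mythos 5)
Your overall architecture (elementary gap-based gadgets, then an enumeration over the accepted tuples with a ``pick exactly one'' selector) is the same as the paper's, which follows Curticapean--Marx. But there are two concrete gaps. First, the parity obstruction: if $B$ contains only even numbers, then in any $B$-homogeneous gadget every solution $S$ satisfies $\sum_v \deg_S(v) = 2\,\abs{S\setminus D} + \abs{S\cap D}$, so the number of selected dangling edges is always even. Hence your ``forcing gadget that picks out a single designated dangling edge'' and your arity-$r$ selector $\HWeq[r]{1}$ (all of whose accepted tuples have odd weight) simply do not exist in this case. The paper spends three lemmas (\cref{lem:dec:even,lem:dec:odd,lem:dec:general}) on a case analysis over the parities present in $B$, and in the all-even case it can only realize $\HWeq{1}$ nodes \emph{in pairs sharing one gadget}; the final construction is then arranged so that the total number of $\HWeq{1}$ nodes ($e$ output nodes plus $1+(1+e\bmod 2)$ selector nodes) is even and each tuple-node has even degree (this is where evenness of $R$ is actually used). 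Your proposal never confronts this, and the evenness of $R$ is invoked for a different, vaguer reason.

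Second, the final assembly as you describe it does not realize $R$. If each branch ``activates the outputs corresponding to the $1$-coordinates of $t_j$'' and the inactive branches contribute nothing, then the per-coordinate merge must express ``portal edge $=$ the unique active contribution,'' which is emphatically not an equality relation (the active branch's wire is on while the $r-1$ inactive wires are off), and realizing that merge relation is essentially as hard as the original problem. The paper's construction avoids any merge gadget: each output coordinate $k$ is a single $\HWeq{1}$ node $o_k$ carrying the dangling edge, and the tuple-node $v_i$ (an $\EQ{}$ node) is wired to $o_k$ exactly when $x_i[k]=0$. Then $\HWeq{1}$ at $o_k$ forces the dangling edge on precisely at the $1$-positions of the selected tuple, with no further machinery. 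Your sketch is repairable by adopting this complement wiring and the pairing trick, but as written both the primitive stage and the assembly stage have genuine holes.
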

Recall the formal statement of the insertion from \cref{def:insertion}.
The following lemma follows directly from this definition and the definition of the realization.
We use it to replace the relations by their realization.
\begin{lemma}\label{lem:dec:insertingIsSafe}
  Let $G$ be a node labeled graph,
  $R$ the relation of node $v\in V(G)$,
  and let $H$ realize $R$.
  Assume inserting $H$ in $G$ at $v$ gives us a new graph $G'$.
  Then there is a solution for $G$ if and only if there is a solution for $G'$.
\end{lemma}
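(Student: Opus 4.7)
The plan is to exploit the defining property of realizations to translate solutions across the insertion, with the only real work being to track the edge identifications carefully. First I would set up notation: let $I(v) = \{e_1, \dots, e_k\}$ be the edges of $G$ incident to $v$ (ordered to match the ordering fixed in \cref{def:insertion}) and $D = \{d_1, \dots, d_k\}$ the dangling edges of $H$ with $d_i = (?, u_i)$. The insertion replaces each pair $(e_i, d_i)$ by a single edge $f_i = (v_i, u_i)$ of $G'$, while every other edge of $G'$ comes unchanged either from $E(G) \setminus I(v)$ or from $E(H) \setminus D$. Via the bijection $e_i \leftrightarrow f_i \leftrightarrow d_i$, every subset $T' \subseteq E(G')$ decomposes uniquely as a triple $(S_G, S_H, D')$ with $S_G \subseteq E(G) \setminus I(v)$, $S_H \subseteq E(H) \setminus D$, and $D' \subseteq D$ indicating which merged edges are selected.

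For the forward direction, suppose $S \subseteq E(G)$ is a solution for $G$. Set $D'$ to be the subset of $D$ corresponding to $S \cap I(v)$; since $S$ satisfies the constraint at $v$, we have $D' \in R$. By the definition of realization, there exists a solution $T \subseteq E(H)$ of $H$ with $T \cap D = D'$. I would then let $S' \subseteq E(G')$ be the edge set corresponding to the triple $(S \setminus I(v),\, T \setminus D,\, D')$. Every vertex in $V(G) \setminus \{v\}$ retains exactly its $S$-degree in $G'$, and every vertex in $V(H)$ (portals included) retains exactly its $T$-degree, so all node labels are satisfied and $S'$ is a solution for $G'$.

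The backward direction is symmetric. Given a solution $S' \subseteq E(G')$, decompose it into $(S_G, S_H, D')$. The set $S_H \cup \{d_i : f_i \in S'\} \subseteq E(H)$ has intersection exactly $D'$ with $D$ and, because the degrees at each vertex of $H$ match those in $S'$, it satisfies every constraint of $H$. Hence it is a solution of $H$ with intersection $D'$, and the realization property yields $D' \in R$. Setting $S := S_G \cup \{e_i : d_i \in D'\} \subseteq E(G)$, the degrees at vertices of $V(G) \setminus \{v\}$ match those in $S'$, while $\deg_S(v) = |D'|$ satisfies $v$'s label because $D' \in R$, so $S$ is a solution of $G$.

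The step requiring the most care is the bookkeeping around the edge identification, in particular verifying that the contributions of the merged edges $f_i$ to the portal degrees in $G'$ align exactly with the contributions the corresponding dangling edges $d_i$ would make to those degrees in $H$. This is immediate from the decomposition set up at the start, so once the notation is fixed the two directions reduce to direct applications of the two implications in \cref{def:dec:realization} together with the observation that degrees outside of $v$ are preserved by construction.
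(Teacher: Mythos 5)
Your argument is correct, and it is precisely the routine verification that the paper leaves implicit when it says the lemma ``follows directly from this definition and the definition of the realization.'' The triple decomposition $(S_G, S_H, D')$ via the bijection $e_i \leftrightarrow f_i \leftrightarrow d_i$ is the right bookkeeping device, and both directions correctly invoke the two halves of the biconditional in \cref{def:dec:realization} together with the observation that degrees away from $v$ are unchanged by the insertion.

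One small imprecision worth tightening: in the backward direction you write that ``$\deg_S(v) = |D'|$ satisfies $v$'s label because $D' \in R$.'' Since $v$ carries a relation $R$ rather than a degree set, the constraint at $v$ is that $S \cap I(v)$, viewed through the bijection as a subset of $D$, lies in $R$ --- not merely that its cardinality is acceptable. You give the correct justification (``because $D' \in R$''), so the reasoning is sound, but the mention of $\deg_S(v)$ is misleading and should be replaced by the statement that $S \cap I(v)$ corresponds to $D' \in R$ under the edge identification.
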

Now we can prove the lower bound under SETH.
We assume that $B \subseteq \SetN$ is a fixed, finite set such that
$0\notin B$ and $\maxgap B>1$.
\begin{proof}[Proof of \cref{thm:lower:dec}]
  \label{proof:thm:lower:dec}
  Let $H$ be a \BFR instance with $n_S$ simple nodes, $n_C$ complex nodes,
  pathwidth $\pw_H$,
  and $\Delta^* = \max_{\text{bag }X}\sum_{v\in V_C\cap X}\deg(v)$.

  Replace every complex node $v$ and its relation $R_v$
  by a $B$-homogeneous graph of size at most $f(\deg(v))$
  according to \cref{thm:dec:realization} to get the graph $G$
  with $n_G\in\O(n_S + n_C \cdot f(\Delta^*))$ vertices.
  Obviously we can bound the pathwidth of the inserted graphs by their size.
  We transform each bag $X$ of the path decomposition of $H$ into a bag $X'$ of the path decomposition of $G$
  by replacing all complex nodes with the nodes of their realization.
  \[
    \abs{X'} \le \abs{X} + \sum_{v \in X\cap V_C} f(\deg(v))
    \le \abs{X} + \sum_{v \in X\cap V_C} f(\Delta^*)
    \le \abs{X} + \Delta^* f(\Delta^*)
    =   \abs{X} + f'(\Delta^*)
  \]
  for some function $f'$.
  Hence,
  the pathwidth of $G$ is bounded by $\pw_G \le \pw_H+f'(\Delta^*)$.

  Now assume we can solve \BFactor in the claimed running time:
  \begin{align*}
    (\max B+1-\epsilon)^{\pw_G} \cdot n_G^{\O(1)} 
    &\le (\max B+1-\epsilon)^{\pw_H+f'(\Delta^*)} \cdot
        (n_S + n_C \cdot f'(\Delta^*))^{\O(1)} \\
    &\le (\max B+1-\epsilon)^{\pw_H+f'(\Delta^*)} \cdot f''(\Delta^*) \cdot (n_S+n_C)^{\O(1)} \\
    &\le (\max B+1-\epsilon)^{\pw_H+f'''(\Delta^*)} \cdot (n_S+n_C)^{\O(1)}
  \end{align*}
  for some $f''$ and $f'''$.
  But this would immediately contradict SETH by \cref{thm:lower:satToBFR}.
\end{proof}

\subsection{Realizing Relations}
From now on
let $B \subseteq \SetN$ be our fixed, finite set
with $\min B \ge 1$ and $\maxgap B = d>1$
such that $[a,a+d+1] \cap B = \{a, a+d+1\}$ for some $a\ge1$.
We first realize three quite basic relations which we use later to realize the more complex relations.
\begin{lemma}\label{lem:dec:all}
  We can $B$-realize each of the relations
  $\HWeq[2]{2}$, $\EQ{d+1}$, and $\EQ{2}$
  by a simple graph
  with $\O(\poly(\max B))$ vertices of degree at most $\max B$.
\end{lemma}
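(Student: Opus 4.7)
The plan is to exploit the max-gap of $B$: fix $a \geq 1$ with $a, a+d+1 \in B$ and $[a+1, a+d] \cap B = \emptyset$, where $d \geq 2$. My main tool will be a \emph{forcing primitive}---a small $B$-homogeneous sub-gadget that, when attached to a vertex $v$ via a single edge, forces that edge to be selected in every solution. Once this primitive is available, each of the three realizations will follow from a short ``degree-pinning'' argument that uses the gap to rule out undesired configurations.

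For $\HWeq[2]{2}$, I would take a central vertex $v$ with label $B$ of total degree $a + d + 1$: two of its incident edges are dangling (the two inputs of the relation) and the remaining $a + d - 1$ are each forced into every solution via a copy of the forcing primitive. Then $\deg(v) \in \{a+d-1, a+d, a+d+1\}$ depending on how many dangling edges are selected, and since $[a+1, a+d] \cap B = \emptyset$, the only achievable value in $B$ is $a + d + 1$. Hence both dangling edges must be selected, realizing $\HWeq[2]{2}$.

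For $\EQ{d+1}$, I would use the same pattern with $d + 1$ dangling edges and $a$ forced internal edges attached to $v$. The possible degrees of $v$ are $\{a, a+1, \ldots, a+d+1\}$; intersecting with $B$ leaves only $a$ and $a+d+1$. So the dangling edges must be either all selected or all unselected, which is exactly $\EQ{d+1}$.

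For $\EQ{2}$, I would compose two copies of the $\EQ{d+1}$ realizer. Let $G_1, G_2$ realize $\EQ{d+1}$ with dangling edges $(e_1, s_1, \ldots, s_d)$ and $(e_2, s_1', \ldots, s_d')$. For each $i \in [d]$, identify the $i$-th dangling edge of $G_1$ with the $i$-th dangling edge of $G_2$, turning them into a single simple internal edge between the two corresponding portal vertices. The only remaining dangling edges are $e_1$ and $e_2$. In any solution, $G_1$ forces $e_1 = s_1 = \ldots = s_d$ and $G_2$ forces $e_2 = s_1' = \ldots = s_d'$, so $e_1 = e_2$, which is $\EQ{2}$. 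Keeping the two sets of portals disjoint avoids multi-edges, and size and degree bounds are preserved up to constants.

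The main obstacle I anticipate is constructing the forcing primitive as a simple $B$-homogeneous graph without already having $\HWeq[2]{2}$ available and without multi-edges. I expect to build it from a rigid sub-gadget---for instance, a carefully wired structure of $\O(\max B)$ many vertices of label $B$ whose only $B$-valid configurations are two extremes (all interface edges selected, or none), exploiting the gap of size at least $2$ to eliminate every intermediate configuration. Once such a primitive is in place, a straightforward count shows all three gadgets use $\O(\poly(\max B))$ vertices and keep maximum degree at most $\max B$.
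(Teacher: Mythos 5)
Your high-level strategy (exploit the gap by degree-pinning a central vertex whose incident edges are either dangling inputs or ``forced'' internal edges) is the right one, and your realizations of $\EQ{d+1}$ and $\EQ{2}$ are fine once you can force edges. However, the forcing primitive you are relying on---a $B$-homogeneous gadget with a single dangling edge $e$ that is selected in every solution---does not exist for all $B$ covered by the lemma. If $B$ contains only even integers (e.g.\ $B=\{2,6\}$, which satisfies $\min B\ge1$ and $\maxgap B>1$), then in any partial solution $S$ of the gadget, the handshake identity $\sum_{v}\deg_S(v)=2\abs{S\cap\text{(internal edges)}}+\abs{S\cap\{e\}}$ forces $\abs{S\cap\{e\}}$ to be even, so $e$ can never be forced to be selected; there would simply be no solution at all. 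So the ``rigid sub-gadget'' you hope to build in the last paragraph cannot exist in this case, and the proposal as written has no base case.

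The paper circumvents exactly this parity obstruction by never forcing a \emph{single} edge as its primitive: the base case is $\HWeq[2]{2}$ (force a \emph{pair} of edges), which is realized outright as a $(\min B{+}1)$-clique with one edge split into two dangling edges---no prior primitive needed. When the paper then needs to force $a$ edges on a single target vertex $u$ (for $\EQ{d+1}$), it introduces a second ``sink'' vertex $v$ and attaches $a$ copies of $\HWeq[2]{2}$ between $u$ and $v$; each copy forces one edge to $u$ and one to $v$, and $v$ ends up with degree $a\in B$, so it is always satisfied. This sink trick is precisely what lets the paper force an arbitrary (in particular, odd) number of edges to one vertex without ever needing $\HWeq[1]{1}$. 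Your $\HWeq[2]{2}$ and $\EQ{d+1}$ gadgets should be reorganized along these lines; your construction of $\EQ{2}$ from two $\EQ{d+1}$ gadgets (gluing $d$ dangling edges pairwise) is a valid alternative to the paper's $(d{+}2)$-clique of $\EQ{d+1}$ nodes and would go through once the base case is fixed.
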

\begin{proof}
  \begin{enumerate}
    \item
    Define a $\min B+1$-clique with new vertices.
    Split an arbitrary edge $(u,v)$
    into two dangling edges $(?,u)$ and $(?,v)$.
    The construction of the clique and the fact that we chose $\min B$ as degree
    forces the two dangling edges to be selected in any solution.

    \item
    We start with two new vertices $u,v$ and connect each to $a$ many common $\HWeq[2]{2}$ nodes.
    We add $d+1$ dangling edges to $u$ and zero to $v$.
    Finally the nodes are replaced by their realization. 
    Observe that $u$ has $a$ forced edges and $d+1$ dangling edges. 
    Thus we must select none or all of the dangling edges since $[a,a+d+1]\cap B=\{a,a+d+1\}$.

    \item
    Define a $d+2$-clique with $\EQ{d+1}$ nodes.
    Split an arbitrary edge $(u,v)$ into two dangling edges $(?,u)$ and $(?,v)$.
    Replace the nodes by their realization.

    Either both dangling edges are selected in which case all nodes have $d + 1$ incident edges in the solution,
    or neither is selected in which case every node has zero incident edges in the solution.
    \qedhere
  \end{enumerate}
\end{proof}
The following lemma helps us to keep the later constructions simple.
Instead of constructing the relations for arbitrary degree, only the very low degree cases are necessary.
\begin{lemma}\label{lem:dec:generalNegation}
  If we can realize $\HWeq[a]{1}$ for $a\in \{1,2,3\}$
  by a simple graph
  with at most $N$ vertices of degree at most $D$,
  then we can realize $\HWeq[k]{1}$ for all $k\ge 1$
  by a simple graph
  using $\O(kN)$ nodes of degree at most $D$.
\end{lemma}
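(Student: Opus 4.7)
I would prove the claim by induction on $k$, with the base cases $k \in \{1,2,3\}$ handed to us by hypothesis. For $k \ge 4$ the idea is to grow the arity by exactly one per recursive step by gluing a fresh copy of $\HWeq[3]{1}$ onto an inductively constructed realization of $\HWeq[k-1]{1}$, using $\HWeq[2]{1}$ as an ``inverter'' on the gluing edge. This is what makes the arity change of the composition come out to $3 + (k-1) - 2 = k$.

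Concretely, let $T$ realize $\HWeq[3]{1}$ with dangling edges $a_1, a_2, a_3$; let $I$ realize $\HWeq[2]{1}$ with dangling edges $i_1, i_2$; and let $G$ realize $\HWeq[k-1]{1}$ with dangling edges $g_1, \ldots, g_{k-1}$. I would form the disjoint union of $T$, $I$, $G$ and then identify $a_3$ with $i_1$ and $i_2$ with $g_{k-1}$ via the insertion operation of \cref{def:insertion}, turning each of these pairs of half-edges into a single ordinary internal edge. The combined gadget then has $k$ dangling edges $a_1, a_2, g_1, \ldots, g_{k-2}$, which become the portals of the new $\HWeq[k]{1}$ realization.

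For correctness, any solution satisfies simultaneously: exactly one of $\{a_1,a_2,a_3\}$ is selected, exactly one of $\{g_1,\ldots,g_{k-1}\}$ is selected, and exactly one of $\{a_3, g_{k-1}\}$ is selected (i.e.\ $a_3$ and $g_{k-1}$ have opposite status, enforced by $I$). A short case split on whether $a_3$ is selected shows that exactly one of $a_1, a_2, g_1, \ldots, g_{k-2}$ is selected in any solution, and conversely every such external pattern extends to a full solution, so the combined gadget realizes $\HWeq[k]{1}$. The size recurrence adds $\O(N)$ per induction step, giving $\O(kN)$ vertices overall; the degree bound $D$ is preserved because identifying a pair of dangling edges merely converts two half-edges into one proper edge without changing any portal vertex's degree; and simplicity is preserved because each identification creates an edge between vertices from previously disjoint gadget copies, so no parallel edges or loops arise. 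The only nontrivial modeling point is selecting the inverter-based glue so that the arity increases by exactly one—once that is in place the argument is entirely mechanical bookkeeping.
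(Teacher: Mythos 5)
Your proposal is correct and matches the paper's proof essentially verbatim: the paper also grows the arity one step at a time by chaining the inductively obtained $\HWeq[k]{1}$ gadget through a $\HWeq[2]{1}$ inverter to a fresh $\HWeq[3]{1}$ gadget, with the same case split on the status of the gluing edge. The bookkeeping for size, degree, and simplicity is likewise identical.
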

\begin{proof}
  We construct the graph for the realization inductively starting with the basis for $k=1,2,3$.
  See \cref{fig:dec:generalNegation} for an example.
  
  For the inductive step from $k$ to $k+1$ we start with a node $u$ with relation $\HWeq[k]{1}$.
  Connect one dangling edge of $u$ to a new node $v$ with $\HWeq[2]{1}$.
  Connect the other dangling edge of $v$ to a node $w$ with relation $\HWeq[3]{1}$.
  Observe that the final graph has $k+1$ dangling edges.

  Assume one dangling edge of $u$ is selected, then the edge between $u$ and $v$ is not selected
  but the edge from $v$ to $w$ is.
  Hence, no dangling edge of $w$ can be selected.
  The analogue holds if one of the dangling edges of $w$ is selected.
  It cannot be the case that more than one or zero dangling edges are selected,
  as then the relation of one of the three nodes $u$, $v$, or $w$ would not be satisfied.
\end{proof}
\begin{figure}
  \centering
  \includegraphics{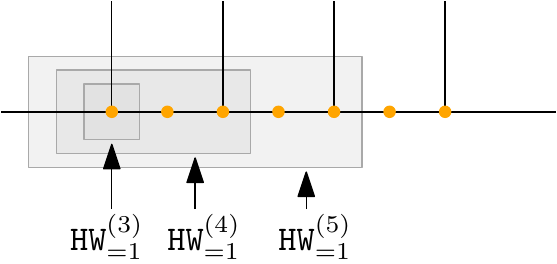}
  \caption{
  Example of the inductive construction from \cref{lem:dec:generalNegation}
  for \HWeq[6]{1} using \HWeq[3]{1} and \HWeq[2]{1}.}
  \label{fig:dec:generalNegation}
\end{figure}
Due to parity issues, the construction of the realizations depends on the set $B$.
For each of the possible cases
($B$ contains only even numbers, only odd number, or even and odd numbers)
we state the result separately in \cref{lem:dec:even,lem:dec:odd,lem:dec:general}.
\begin{lemma}\label{lem:dec:even}
  If $B$ contains only even numbers,
  we can $B$-realize the following relations
  by simple graphs:
  \begin{enumerate}
    \item $\EQ{k}$ for even $k\ge2$ using $\O(k\poly(\max B))$ vertices of degree at most $\max B$.
    \item $\HWeq[k]{1}$ together with $\HWeq[\ell]{1}$ for all $k, \ell\ge 1$ using $\O((k+\ell)\poly(\max B))$ vertices of degree at most $\max B+2$.
  \end{enumerate}
\end{lemma}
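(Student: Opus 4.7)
The plan is to handle the two parts separately, exploiting throughout that $B$ has only even elements: both endpoints of a maximum gap are even, so $d = \maxgap B$ is odd, $d \geq 3$, and $d+1$ is even.

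For part (1), I would start from the $B$-realizations of $\EQ{2}$ and $\EQ{d+1}$ supplied by \cref{lem:dec:all}. The key combinatorial operation is \emph{identification}: given realized gadgets for $\EQ{a}$ and $\EQ{b}$, pick one dangling edge from each and glue them into a single internal edge. The resulting gadget realizes $\EQ{a+b-2}$, because both sides enforce ``all or none'' behavior and the shared internal edge forces them to agree. Taking $m$ disjoint copies of $\EQ{d+1}$ and carrying out $j \geq m-1$ cross-identifications (at least $m-1$ to make the result connected) produces a realization of $\EQ{m(d+1) - 2j}$. Since $d+1$ is even, by varying $m$ and $j$ we cover every even $k \geq 2$; the choice $m = \Theta(k/d)$ suffices. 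Identifications only convert dangling edges into internal edges and do not change any vertex's degree, so the bound $\max B$ carries over from \cref{lem:dec:all}, and the total vertex count is $O(k\poly(\max B))$.

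For part (2), the essential obstruction is parity: in any $B$-homogeneous graph with dangling edge set $D$ and solution $S$, the handshake identity
\[
  2\,\abs{S \cap E_{\mathrm{int}}} + \abs{S \cap D} = \sum_{v} \deg_S(v)
\]
is even because every $\deg_S(v) \in B$ is even, so $\abs{S \cap D}$ is even in every solution. Hence $\HWeq[k]{1}$ cannot be $B$-realized on its own, but the joint relation has total weight $2$ and remains a candidate. I would proceed inductively on $k + \ell$. For the base case $(k,\ell) = (2,2)$, I would use three list-$B$ nodes $u_1, u_2, v$, load each with $\min B - 2$ forced incident edges (supplied by $\HWeq[2]{2}$-realizations, an even number since $\min B$ is even), add two link edges $e_1 = (v, u_1)$ and $e_2 = (v, u_2)$, and declare the remaining two incident edges at each $u_i$ to be the $i$-th dangling group. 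Because $\maxgap B > 1$, the only valid degree for $v$ forces $e_1 = e_2 = 1$, and the only valid degree for each $u_i$ then forces exactly one of its two dangling edges to be selected. For general $k, \ell$, I would grow the two groups one edge at a time by splicing in small joint gadgets along existing dangling edges, in the spirit of \cref{lem:dec:generalNegation}; $O(k+\ell)$ splices of gadgets of size $\poly(\max B)$ yield the claimed size bound, and the extra $+2$ in the degree bound $\max B + 2$ appears at interior nodes that simultaneously carry a forced edge and a chaining edge on top of their nominal $\max B$ budget.

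The main obstacle is part (2): parity entangles the two exactly-one constraints through a single connected subgraph, so the splice step must preserve ``exactly one'' on both groups \emph{simultaneously}. Designing the base joint gadget and the chaining operation so that the max-gap condition cleanly rules out every intermediate selection count at each internal node, without creating spurious solutions on either side, is the most delicate part of the argument and is precisely where the hypothesis $\maxgap B > 1$ is essential.
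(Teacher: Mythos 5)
Your overall plan tracks the paper's — glue equality gadgets from \cref{lem:dec:all} to build larger ones, force edges via $\HWeq[2]{2}$ realizations and the large gap, and handle the parity obstruction by realizing $\HWeq{1}$ nodes in pairs — but both parts of the proposal have concrete gaps. In part~(1), the realization of $\EQ{d+1}$ from \cref{lem:dec:all} places all $d+1$ dangling edges on a \emph{single} portal vertex, so a second cross-identification between the same two copies produces a parallel edge and the graph is no longer simple. Restricting to one identification per pair of copies (so, for connectivity, a spanning tree with $j=m-1$) only yields arities in $\{2\}\cup\{m(d-1)+2 : m\ge 1\}$, which already misses $k=4$ when $d=5$; hence the assertion that ``$m=\Theta(k/d)$ suffices'' to hit every even $k$ does not hold without more work. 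The paper sidesteps this by first building $\EQ{4}$ explicitly: it plugs $(d-3)/2$ copies of $\EQ{2}$ (whose realization has two \emph{distinct} portals, so no multi-edge arises) into $d-3$ of the $d+1$ dangling edges of a $\EQ{d+1}$ node, and then forms a simple chain of $\EQ{4}$ nodes.

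In part~(2), your base gadget for $(2,2)$ is correct, but its forcing scheme does not extend to arity~$3$: a spoke $u_i$ with $\min B - 2$ forced edges, one link edge, and $k$ dangling edges has degree $\min B - 1 + x$, and for $k\ge 3$ the choice $x=3$ gives degree $\min B + 2$, which can lie in $B$ (take $B=\{2,4\}$), breaking the ``exactly one'' constraint. The paper instead forces $\max B - 1$ edges at each spoke, so that $\max B - 1 + x \in B$ only when $x=1$ regardless of $k$, giving a robust gadget for every $(a,b)$ with $a,b\in\{1,2,3\}$. Beyond the gadget, your induction on $k+\ell$ with base $(2,2)$ never produces $k=1$ or $\ell=1$, and the ``splice'' step is left unspecified: applying \cref{lem:dec:generalNegation} reduces each $\HWeq[k]{1}$ to an \emph{odd} number of $\HWeq[a]{1}$ nodes with $a\in\{1,2,3\}$, so realizing a pair requires base gadgets for every pair of small arities, not just $(2,2)$.
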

\begin{proof}
  \begin{enumerate}
    \item
    For $k=2$ we can use the construction of \cref{lem:dec:all}.
    For the other case we first realize $\EQ{4}$.
    Then we use a chain of these relations to realize $\EQ{k}$ for even $k\ge 6$.

    Start with a $\EQ{d+1}$ node $u$ and make it adjacent to $\frac{d+1-4} 2$ many $\EQ{2}$ nodes (note that an even $B$ can have only gaps of odd size, hence $d$ is odd).
    Then we add four dangling edges to $u$.
    hence the construction actually works.
    The graph is simple as the dangling edges in the realization of $\EQ{2}$ are different.

    \item
    To use \cref{lem:dec:generalNegation} for the general construction,
    observe that the number of $\HWeq{1}$ nodes used in the construction is odd.
    Hence, we will always realize two nodes.
    For this we show how to realize $\HWeq[k]{1}$ together with $\HWeq[\ell]{1}$ for all $k,\ell\in\{1,2,3\}$.

    Start with two vertices $u, v$.
    Make $u$ and $v$ adjacent to $\max B -1$ common $\HWeq[2]{2}$ nodes.
    We add $k$ dangling edges to $u$ and $\ell$ dangling edges to $v$.
    As $B$ does not contain $\max B-1$, the correctness follows.
    \qedhere
  \end{enumerate}
\end{proof}
Despite the fact, that the parity of the elements is still the same
when all elements in $B$ are odd,
it is possible to show a stronger result, which allows us to realize single $\HWeq{1}$ nodes.
\begin{lemma}\label{lem:dec:odd}
  If $B$ contains only odd numbers,
  we can $B$-realize the following relations
  by simple graphs:
  \begin{enumerate}
    \item $\EQ{k}$ for even $k\ge2$ using $\O(k\poly(\max B))$ vertices of degree at most $\max B$.
    \item $\HWeq[k]{1}$ for $k\ge1$ using $\O(k\poly(\max B))$ vertices of degree at most $\max B+2$.
  \end{enumerate}
\end{lemma}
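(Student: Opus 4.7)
The plan is to adapt the even-$B$ constructions of \cref{lem:dec:even} to the odd-$B$ case, paying close attention to parity. Since every element of $B$ is odd, both endpoints $a$ and $a+d+1$ of the maximum gap are odd, so $d+1$ is even and $d\ge 3$ is odd. In particular $\min B$ is odd, the numbers $\min B-1$ and $a-1$ are both even and non-negative, and $\EQ{d+1}$ has even arity—exactly the properties used in the even-$B$ arguments.

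\textbf{Part~1: realizing $\EQ{k}$.} The base case $\EQ{2}$ is given by \cref{lem:dec:all}. For $\EQ{4}$ I would replay the construction from the even-$B$ case verbatim: take an $\EQ{d+1}$-node $u$, make it adjacent to $(d+1-4)/2$ common $\EQ{2}$-nodes (a non-negative integer since $d+1$ is even and $d\ge 3$), and add $4$ dangling edges to $u$. The analysis of \cref{lem:dec:even} carries over: either all $d+1$ or none of the edges at $u$ are selected, the $\EQ{2}$-neighbors keep their pairs consistent, and the four dangling edges must agree. For any even $k\ge 6$ I would proceed inductively by merging one dangling edge of a realization of $\EQ{k-2}$ with one dangling edge of a fresh copy of $\EQ{4}$; both gadgets enforce equality through the shared internal edge, forcing the remaining $k$ dangling edges to coincide. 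The size is $\O(k\poly(\max B))$ and all internal degrees stay at most $\max B$.

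\textbf{Part~2: realizing $\HWeq[k]{1}$, and the main obstacle.} By \cref{lem:dec:generalNegation} it is enough to realize $\HWeq[i]{1}$ for $i\in\{1,2,3\}$. For $i\in\{1,2\}$ I would take a single vertex $u$, attach $(\min B-1)/2$ copies of the $\HWeq[2]{2}$ gadget, route both dangling edges of each copy to $u$ (the graph stays simple because the two dangling edges of a $\HWeq[2]{2}$ gadget have distinct inner endpoints by \cref{lem:dec:all}), and add $i$ dangling edges to $u$. The degree of $u$ then lies in $\{\min B-1,\min B,\min B+1\}$, and only $\min B\in B$: indeed $\min B-1<\min B$ is not in $B$ and $\min B+1$ is even. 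This forces the dangling contribution to be $1$. The \emph{hard case} is $i=3$, where the analogous choice $f=\min B-1$ breaks as soon as $\min B+2\in B$ (e.g.\ $B=\{1,3\}$). I circumvent this by centering the gadget at the maximum gap rather than at $\min B$: equip $u$ with $(a-1)/2$ attached $\HWeq[2]{2}$-copies (possible because $a-1$ is even) and $3$ dangling edges, so its degree lies in $\{a-1,a,a+1,a+2\}$. Here $a-1$ and $a+1$ are even and therefore not in $B$, while $a+2$ is odd but lies strictly inside the maximum gap because $d\ge 3$; only $a\in B$ remains, forcing exactly one of the three dangling edges. Combining this base with \cref{lem:dec:generalNegation} gives $\HWeq[k]{1}$ for every $k\ge 1$, on $\O(k\poly(\max B))$ vertices and with maximum degree $a+2\le \max B+2$, matching the stated bounds.
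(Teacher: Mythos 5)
Your Part~1 matches the paper's proof, which simply reuses the even-$B$ construction of \cref{lem:dec:even} verbatim. For Part~2, however, you take a genuinely different (and more convoluted) route than the paper. The paper applies exactly the same construction as in the even-$B$ case: force $\max B-1$ edges onto a single vertex $u$ (via $(\max B-1)/2$ copies of $\HWeq[2]{2}$, noting $\max B-1$ is even when all of $B$ is odd), attach $k\in\{1,2,3\}$ dangling edges, and observe that $\deg(u)\in\{\max B-1,\dots,\max B+2\}$ hits $B$ only at $\max B$, since $\max B-1$ has the wrong parity and $\max B+1,\max B+2$ exceed the maximum. This works uniformly for all $k\le 3$. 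You instead anchor the forced degree at $\min B-1$ for $k\in\{1,2\}$, correctly observe this fails for $k=3$ when $\min B+2\in B$, and then switch anchors to $a-1$ (the left endpoint of the maximum gap, minus one) using the fact that $d\ge 3$ places $a+2$ strictly inside the gap. Your argument is correct and self-contained, and your parity checks ($a-1$, $a+1$ even; $a+2$ in the gap) are sound. But the case split between $k\le 2$ and $k=3$ is avoidable: anchoring at $\max B-1$ sidesteps the issue because the three degrees above $\max B-1$ are automatically excluded by maximality alone, with no appeal to the gap structure. The paper's choice buys a single uniform gadget that also mirrors the even-$B$ case exactly, which is aesthetically cleaner and shortens the proof.
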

\begin{proof}
  \begin{enumerate}
    \item
    The construction from \cref{lem:dec:even} for \EQ{k} works here too.

    \item
    We again use \cref{lem:dec:generalNegation} and only realize $\HWeq[k]{1}$ for $k=1,2,3$.
    We start with just one node $u$ to which we force $\max B -1$ edges
    by making it adjacent to $\frac{\max B -1}2$ nodes of type $\HWeq[2]{2}$.
    This is possible as $\max B-1$ is even.
    We add $k$ dangling edges to $u$ and replace all nodes by their realizations.
    As before the correctness follows since $B$ does not contain $\max B-1$.
    \qedhere
  \end{enumerate}
\end{proof}
For the last case, where even and odd numbers are in $B$,
we can additionally drop the remaining restriction that the equality nodes must have even degree.
This is rather natural as a gap of size $2$ already gives us a $\EQ{3}$ node by \cref{lem:dec:all}.
\begin{lemma}\label{lem:dec:general}
  If $B$ contains even and odd numbers,
  we can $B$-realize the following relations
  by simple graphs:
  \begin{enumerate}
    \item $\EQ{k}$ for $k\ge1$ using $\O(k\poly(\max B))$ vertices of degree at most $\max B$.
    \item $\HWeq[k]{1}$ for $k\ge1$ using $\O(k\poly(\max B))$ vertices of degree at most $\max B+2$.
  \end{enumerate}
\end{lemma}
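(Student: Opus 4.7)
The plan is to extend the constructions from Lemmas~\ref{lem:dec:even} and~\ref{lem:dec:odd} by exploiting that $B$ has mixed parity, which both lifts the even-arity restriction on $\EQ{k}$ and removes the pairing requirement for $\HWeq[k]{1}$. I would proceed in two parts.

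For part~(1), I would first observe that since $B$ contains both even and odd integers, the sequence of consecutive elements of $B$ must somewhere have an odd difference, equivalently $B$ has a gap of even size $d' \ge 0$. Applying the construction of Lemma~\ref{lem:dec:all} to this gap yields either an $\EQ{d'+1}$ of odd arity (when $d' \ge 2$) or the trivial unary relation $\EQ{1}$ (when $d' = 0$, i.e.\ $B$ contains two consecutive integers, and a single vertex with $b$ forced edges plus one dangling serves as the gadget). Combining this with the existing $\EQ{d+1}$ and $\EQ{2}$, I would use two combinators---a single-edge merge giving $\EQ{a} \oplus \EQ{b} = \EQ{a+b-2}$, and a double-edge shrink that attaches an $\EQ{2}$ node via both of its dangling edges to an $\EQ{a}$ to produce $\EQ{a-2}$---to synthesize $\EQ{k}$ for every $k \ge 2$, exactly as in the even-only case but now spanning both parities of arity. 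The case $k = 1$ is directly realized by a single vertex with forced degree in $B$ and one unconstrained dangling edge.

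For part~(2), I would invoke Lemma~\ref{lem:dec:generalNegation} to reduce to the base cases $k \in \{1,2,3\}$, and realize each by a central vertex $u$ with $k$ dangling edges and $f$ forced edges supplied by shared $\HWeq[2]{2}$ nodes, where $f$ is chosen so that $f+x \in B$ iff $x = 1$. The flexibility afforded by mixed parity lets me distribute the $f$ shared $\HWeq[2]{2}$ nodes across several helper vertices $v_i$ with $\sum_i c_i = f$ and each $c_i \in B$, so that every helper's degree lies in $B$ by itself; this removes the need for a parallel $\HWeq[\ell]{1}$ partner that was necessary in the even-only construction.

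The main obstacle is ensuring, for every mixed-parity $B$ with $\maxgap B > 1$, the simultaneous existence of a suitable $f$ (so that $f+1 \in B$, $f \notin B$, and $f+2, \dots, f+k \notin B$) and of a decomposition of $f$ as a sum of elements of $B$. Sets such as $B = \{1,2,5,6\}$ have no element that is isolated on both sides, and I would handle these by layering the construction---inserting an already realized $\EQ{j}$ gadget as an additional buffer helper, or composing an $\HWeq[k']{1}$ for a smaller $k'$ with an $\EQ{k}$ of the desired arity obtained in part~(1). The resulting case analysis on the gap structure of $B$ is systematic but technical, and is the principal place where the mixed-parity hypothesis is exploited beyond what the preceding lemmas already give.
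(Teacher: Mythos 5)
Your part~(1) is in the spirit of the paper's argument: both recipes ultimately produce an $\EQ{1}$ gadget from two elements of $B$ of opposite parity and then bootstrap to $\EQ{3}$ and longer chains. The details differ (you phrase it as a gap of even size $d'$ plus merge/shrink combinators; the paper attaches $\frac{c-b-1}{2}$ copies of $\EQ{2}$ to a vertex with $b$ forced edges and one dangling edge, then plugs $d-2$ of these $\EQ{1}$ gadgets into an $\EQ{d+1}$ node), but the underlying idea is the same and both routes work.

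Part~(2) has a genuine gap that your own example already exposes, and the suggested repairs do not close it. You want a single central vertex $u$ with $f$ forced edges such that $f+x\in B$ iff $x=1$ for $x\in\{0,\dots,k\}$. For $B=\{1,2,5,6\}$ and $k=3$ there is no such $f$: the candidates $f+1\in B$ are $f\in\{0,1,4,5\}$, and each of $f=0,1,4,5$ has some $f+j\in B$ for $j\in\{2,3\}$ or already $f\in B$. Your proposed fixes ("buffer $\EQ{j}$ helper" or "composing $\HWeq[k']{1}$ with $\EQ{k}$") are not constructions; the second one in particular does not yield $\HWeq{1}$, since wiring one port of $\HWeq[k']{1}$ into an $\EQ{k}$ propagates the chosen edge to \emph{all} $k-1$ remaining ports of the equality, which is a different relation. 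The paper avoids the need for a single $f$ with all of $f,f+2,\dots,f+k\notin B$ by (i) duplicating each dangling edge with an $\EQ{3}$ gadget, (ii) sending one copy of each edge to a vertex $u$ with $a+d$ forced edges, so that $a+d\notin B$ and $a+d+1\in B$ rules out "zero selected", and (iii) sending the other copy to a vertex $v$ with $\max B-1$ forced edges, so that anything beyond one selected edge overshoots $\max B$ and rules out "two or more selected". Splitting the "at least one" and "at most one" guards over two separate vertices is what makes the construction work for \emph{every} mixed-parity $B$ with $\maxgap B>1$, independently of whether any single element of $B$ is isolated on both sides. Also note the paper's $\HWeq[1]{1}$ base case is simpler than yours: a pendant $\HWeq[2]{2}$ node attached to $u$ directly forces the dangling edge, so no condition of the form "$f\notin B$, $f+1\in B$" on $u$'s degree is needed; one only needs $u$ to attain some odd $o\in B$.
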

\begin{proof}
  \begin{enumerate}
    \item
    The case $k=2$ follows by \cref{lem:dec:all}.

    For the remaining cases it suffices to $B$-realize $\EQ{1}$.
    Then we make $d-2\ge 0$ many of these nodes adjacent to one $\EQ{d+1}$ node.
    Adding three dangling edges and replacing all nodes recursively by their $B$-realization realizes $\EQ{3}$.
    Then we can use a chain of these relations to realize $\EQ{k}$ for all $k\ge 4$.

    Let $b,c \in B$ such that $b<c$ with different parity.
    Force $b$ edges on two new nodes $u,v$ by making both adjacent to $b$ many common nodes of type $\HWeq[2]{2}$.
    Then we make $\frac{c-b-1}2$ many $\EQ{2}$ nodes adjacent to $u$ and add one dangling edge to it.
    The vertex $v$ does not get any dangling edges but is always in a valid state, as all incident edges are selected.
    If the dangling edge incident to $u$ is selected,
    then we also select the edges to the $\EQ{2}$ nodes.
    Otherwise no edge is selected except for the ones forced by the $\HWeq[2]{2}$ nodes.

    \item
    We first realize $\HWeq[1]{1}$ to force single edges.
    For this let $o\in B$ be an odd number.
    Start with a new vertex and force $o-1$ edges to it by making it adjacent to $\frac{o-1}2$ nodes of type $\HWeq[2]{2}$.
    Then we make this vertex adjacent to one more node of that type and add one dangling edge to it.

    We only show the construction for $k=2,3$ and make use of \cref{lem:dec:generalNegation} for larger $k$.
    Start by ``duplicating'' all ingoing edges using $\EQ{3}$ nodes from the first step.
    Force $a+d$ edges on a new vertex $u$ and make it adjacent to one copy of each ingoing edge.
 Vertex   $u$ ensures that at least one of the ingoing edges is selected.
    The other copy of each ingoing edge is connected to a new vertex $v$.
    We force $\max B-1$ edges to $v$.
    By this at most one of the ingoing edges has to be selected.
    \qedhere
  \end{enumerate}
\end{proof}
Now we have everything ready to prove that even relations can be realized.

\begin{figure}
  \centering
  \includegraphics{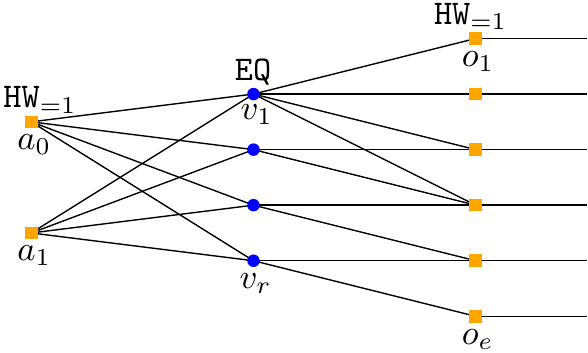}
  \caption{An example illustrating the construction from the proof of \cref{thm:dec:realization}
  for the relation $R$ with
  $R(000011)=R(110011)=R(111001)=R(111100)=1$ and zero otherwise.
  }
  \label{fig:dec:realization:example}
\end{figure}

\begin{proof}[Proof of \cref{thm:dec:realization}]
  See \cref{fig:dec:realization:example} for an example of the following construction.
  We use essentially the construction from Lemma~3.3 in \cite{CurticapeanM16}.
  Let $R = \{x_1, \dots, x_r\} \subseteq \SetB^e$ be the even relation
  for some $r$.
  Let further $P = \{(1+e \mod 2), 0\}$.
  \begin{enumerate}
    \item Create nodes $o_1,\dots,o_e$ with relation $\HWeq{1}$.
    \item Create vertices $a_j$ for all $j \in P$ with relation $\HWeq{1}$.
    \item For all $i \in [r]$:
    \begin{enumerate}
      \item Let $O_i = \{n^{(i)}_1, \dots, n^{(i)}_{h_i}\} = \{ k \in [e] \mid x_i[k] = 0 \}$ for $h_i=e-\hw(x_i)$.
      \item Create the node $v_i$ with relation $\EQ{}$ and connect it to $o_{n^{(i)}_j}$ for all $j \in [h_i]$.
      \item Connect $v_i$ to all $a_j$.
    \end{enumerate}
    \item Replace all nodes by their realization.
  \end{enumerate}
  There are $\abs{P} + e$ many $\HWeq{1}$ nodes.
  Since $\abs{P} = 1 + (1+e \mod 2)$,
  we can replace pairs of these nodes by their realization.
  Every $v_i$ is connected to $\abs{P} + \abs{O_i}$ nodes,
  where $\abs{O_i} = e - \hw(x_i)$.
  Thus, $v_i$ has even degree as the relation $R$ is even,
  i.e.\ $\hw(x)$ is even.
  Hence, we can replace these nodes by their realization according to the previous lemmas.

  To show that the construction actually realizes the relation,
  assume the selected dangling edges corresponds to some element $x \in R$,
  let it w.l.o.g.\ be $x_1$.
  Then we can select all edges incident to $x_1$, the dangling edges, and the extension of this to all nodes as a solution.
  As $x_1$ is incident to all $a_j$ they are in a valid state.
  Further $x_1$ is adjacent to those $o_k$ where $x_1[k]=0$ and hence every $o_k$ is incident to exactly one edge in the solution.

  Now assume we are given a solution.
  As the nodes $a_j$ have exactly one incident edge in the solution, there is exactly one node $v_i$ where all incident edges are in the solution.
  Let $O$ be the set of nodes $o_k$ to which $v_i$ is incident.
  By construction $v_i$ corresponds to some $x \in R$ with $x[k]=0$ iff $k\in O$.
  As all selected dangling edges must be in the solution, let $O'$ be the set of nodes incident to the selected dangling edges.
  But as we are given a solution we get $O \dotcup O' = \{o_1,\dots,o_e\}$.
  Hence, the dangling edges corresponds to $x$.
\end{proof}

\section{Optimization Version}\label{sec:opt}
In the previous section we have seen the realization of the relations for the decision version.
As we are interested in the largest solution for \MaxBFactor,
we also allow $0\in B$
since this does not make the problem trivially solvable.
This makes it necessary to change the definition of a realization,
as the pure existence of a solution is not sufficient anymore.
We change it such that
if the relation is satisfied (i.e.\ the dangling edges are selected in a good way),
then there is a \emph{large} solution.
Otherwise, there must be a gap by which any solution is \emph{smaller}
compared to the solutions in the good cases.
We call this gap the penalty (of the realization).
\begin{definition}[Realization]\label{def:opt:realization}
  Let $R \subseteq \SetB^k$ be a relation.
  Let $G$ be a node labeled graph,
  with dangling edges $D=\{d_1, \dots, d_k\}$.
  We say that graph $G$ \emph{realizes} $R$ with \emph{penalty} $\beta$
  if we can efficiently construct/find
  a target value $\alpha>0$
  such that for all $D' \subseteq D$:
  \begin{itemize}
    \item
    If $D' \in R$,
    then there is a solution $S \subseteq E(G)$ with $S \cap D = D'$ and
    $\abs{S} = \alpha$.

    \item
    If $D' \notin R$,
    then for all solutions $S \subseteq E(G)$ with $S \cap D = D'$
    we have $\abs{S} \le \alpha-\beta$.
  \end{itemize}
  We say that $G$ $B$-realizes $R$ if $G$ is additionally $B$-homogeneous.
  We call the endpoints of the dangling edges \emph{portal nodes}.
\end{definition}
In the main part of this section we show how to realize the relations of \BFR.
The following theorem corresponds to \cref{thm:dec:realization} for the decision version.
\begin{theorem}[Realization of Relations]\label{thm:opt:realization}
  Let $B\subseteq \SetN$ be a fixed, finite set 
  with $\maxgap B > 1$ and $0\in B$.
  There is a $f:\SetN^2\to\SetN$ such that the following holds.
  Let $R \subseteq \SetB^e$ be a relation with a constant $c_R \in 2\SetN$
  such that for all $x \in R$ we have $\hw(x)=c_R$.

  We can $B$-realize the relation $R$ with arbitrary penalty $\beta>0$
  by a simple graph
  with $f(e, \beta)$ vertices of degree at most $\max B+2$.
\end{theorem}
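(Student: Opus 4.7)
\medskip

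\noindent\textbf{Proof proposal.}
My plan is to mirror the layered structure used for the decision version in \cref{thm:dec:realization}, but to replace every ``forcing'' argument by a penalty argument that exploits the large gap of $B$ together with a high-girth, suitably regular graph, with the girth growing as a function of the target penalty $\beta$. As in \cref{sec:dec}, the goal is to first realize a short list of basic gadgets ($\HWeq[2]{2}$, $\EQ{k}$, $\HWeq[k]{1}$) now \emph{with penalty}, and then plug them into the same $(o_i,v_i,a_j)$ skeleton from the proof of \cref{thm:dec:realization} to realize an arbitrary even relation $R$. Throughout, each gadget will declare a target $\alpha_H$ (the size of its ``fully good'' solution) and a guaranteed penalty $\beta_H$; when we insert a gadget into a larger graph we add $\alpha_H$ to the global target, and any locally wrong behaviour costs at least $\beta_H$ edges.

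The core building block is what I will call a \emph{penalized edge forcer}. Let $a,a+d+1\in B$ witness $\maxgap B = d > 1$. I take a $(a+d+1)$-regular graph $H$ of girth at least $g(\beta)$ (constructible e.g.\ by known expander/Ramanujan constructions of size $\poly(\beta)$), label every vertex with $B$, and attach a single dangling edge to one vertex $v$. The intended ``good'' solution selects all internal edges (every vertex then has degree $a+d+1\in B$) together with the dangling edge being \emph{unselected}. Selecting the dangling edge would push $\deg(v)$ out of $B$, forcing at least one internal edge at $v$ to be removed; because the only admissible degrees in $B$ near $a+d+1$ are $a+d+1$ and $a$, the deletion has to be by at least $d+1\ge 2$ edges at $v$, and each such deletion cascades to neighbours whose degree drops out of $B$. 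Since the girth exceeds $g(\beta)$, the cascade cannot close back quickly and must affect $\Omega(g(\beta))$ edges, giving penalty $\ge\beta$. By attaching $k$ dangling edges at well-separated vertices (again using girth) and combining with an auxiliary parity vertex if needed, this gives me $\HWeq[k]{0}$; together with the mirror construction (selecting all edges is ``good'' iff the dangling edge \emph{is} selected) I also obtain $\HWeq[k]{1}$ and, by inductive concatenation as in \cref{lem:dec:generalNegation}, the $\HWeq[k]{1}$ relation for every $k\ge 1$ with penalty at least $\beta$. An analogous construction, using a clique or long cycle of sufficient girth, realizes $\EQ{k}$ with arbitrary penalty.

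With these in hand, I reuse verbatim the topology of the proof of \cref{thm:dec:realization}: vertices $o_1,\dots,o_e$ of type $\HWeq{1}$, parity vertices $a_j$ of type $\HWeq{1}$, and one vertex $v_i$ of type $\EQ{}$ for each $x_i\in R$, with $v_i$ adjacent to $o_k$ exactly when $x_i[k]=0$ and to all $a_j$. Each of these nodes is then replaced by the penalized gadget just constructed, each set with its own target size $\alpha$ and individual penalty at least $\beta'$, chosen large enough that after summing over the $\O(\poly(e))$ sub-gadgets the total penalty is still at least $\beta$; the global target $\alpha$ is the sum of the sub-targets. Correctness then mirrors the decision case: if the selected dangling edges form some $x_i\in R$, turning on $v_i$ and the matching internal choices yields a solution of size exactly $\alpha$; if not, at least one sub-gadget is in a bad configuration and loses $\ge\beta'$ edges, so no solution of size $>\alpha-\beta$ can exist. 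Since the construction is a disjoint union of polynomially many gadgets of size $\poly(\max B,\beta)$ hooked together along dangling edges, the number of vertices is bounded by some $f(e,\beta)$, and with a little care (separating portals and routing dangling edges to distinct vertices, exactly as at the end of \cref{sec:lower}) the final graph is simple and of maximum degree $\max B + 2$.

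The main obstacle I anticipate is the \emph{parity bookkeeping}, which already complicated \cref{lem:dec:even,lem:dec:odd,lem:dec:general}: the cascade argument needs $H$ to admit a ``saturating'' selection of degree in $B$, and this splits into cases depending on whether $B$ contains only even, only odd, or mixed parities, as well as on whether the forced endpoint parity matches. I expect to handle this exactly as in the decision version, by padding with $\HWeq[2]{2}$-gadgets and, when necessary, realizing pairs of $\HWeq[k]{1}$ simultaneously to control parity globally. A secondary subtle point is to ensure that \emph{every} deviation from the ``good'' configuration of every sub-gadget costs at least $\beta$ edges -- in particular that no cancellations between gadgets can recoup lost edges; this will follow because the sub-gadgets share only dangling edges and the cascade stays inside each gadget thanks to girth, so penalties simply add rather than interact.
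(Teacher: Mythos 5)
Your overall strategy --- high-girth regular graphs to turn the gap of $B$ into a cascading penalty, then plugging penalized versions of $\HWeq{1}$ and $\EQ{}$ nodes into the Curticapean--Marx skeleton --- is exactly the paper's approach, and your penalized $\HWeq[2]{2}$ forcer is essentially \cref{lem:opt:all}. However, there is a genuine gap in the sentence claiming that ``an analogous construction, using a clique or long cycle of sufficient girth, realizes $\EQ{k}$ with arbitrary penalty.'' Under \cref{def:opt:realization} a realization of $\EQ{k}$ must admit solutions of the \emph{same} size $\alpha$ both when all $k$ dangling edges are selected and when none are. Any clique- or cycle-based gadget in which ``all dangling edges on'' corresponds to ``all internal edges on'' and ``all off'' to ``all off'' violates this: the paper explicitly notes that the $(d+2)$-clique of $\EQ{d+1}$ nodes from the decision version yields good solutions differing by $\Theta(d^2)$ edges between the two states, so it is \emph{not} a realization in the optimization sense. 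The paper circumvents this by (i) introducing \emph{internal} realizations (\cref{def:opt:simulation}), where dangling edges are excluded from the count, and (ii) realizing only $\EQ{k(d+1)}$ (\cref{lem:opt:all:equalities}) via a $d$-regular bipartite gadget in which the all-on and all-off states select the same number of internal edges. Consequently the skeleton of \cref{thm:dec:realization} cannot be reused verbatim: the nodes $v_i$ must be given the fixed arity $2\ell(d+1)$ and padded with $s=2\ell(d+1)-h$ extra $\HWeq{1}$ nodes $a_j$, and this modification is the actual technical content of the paper's proof of \cref{thm:opt:realization}. Without it, your global target $\alpha$ is not well defined, since different accepted inputs would yield solutions of different sizes.

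A secondary flaw: your $\HWeq[k]{0}$ forcer assumes that selecting a dangling edge at a vertex of internal degree $a+d+1$ ``pushes $\deg(v)$ out of $B$,'' but $a+d+2$ may well lie in $B$. The paper's gadgets always arrange that a bad local state forces some vertex's degree into the open interval $(a,a+d+1)$, which is guaranteed to avoid $B$; only deviations in that direction trigger the girth cascade. Your parity discussion and the reduction of $\HWeq[k]{1}$ to the cases $k\in\{1,2,3\}$ via \cref{lem:dec:generalNegation} do match the paper (\cref{lem:opt:generalNegation,lem:opt:negation:notInTheSet,lem:opt:negation:inTheSet}).
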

\begin{note*}
  Consider a graph $G$
  where we replaced nodes and their relations by realizations to get a graph $G'$.
  Then we say that a solution $S$ for $G'$ is \emph{valid or good}
  if $S\cap E(G)$ is a solution for $G$.
  Otherwise the solution is \emph{invalid or bad}.
  Consider a node $v$ in $G$.
  Let $I$ be the incident edges
  and $E_v$ be the edges in $G'$ resulting from replacing $v$ by its realization.
  When considering a solution $S$ for $G$,
  we say that $S\cap(I(v) \cup E_v)$ is the partial solution for the realization of $v$,
  i.e.\ all selected edges from the inserted graph and the incident edges.
\end{note*}  

It remains to compute the target value
by which we decide if the \BFR instance has a solution or not.
\begin{lemma}\label{lem:opt:targetValue}
  Let $G$ be a \BFR instance from \cref{sec:lower}.
  Let $G'$ be a \BFactor instance resulting from $G$
  by replacing every complex nodes with degree $\delta$
  by its realization with penalty $2\delta$.
  Then, 
  there is an efficiently computable constant $\alpha$ such that
  $G$ has a solution
  if and only if
  the largest solution for $G'$ has size $\alpha$.
\end{lemma}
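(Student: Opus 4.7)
The plan is to define $\alpha$ explicitly and then verify the biconditional by analyzing, for an arbitrary $G'$-solution, what each realization can contribute. Two fixed quantities enter the definition of $\alpha$. By \cref{corr:lower:construction:observations}, every BFR-solution of $G$ selects the same number of edges $\abs{E^*} = \frac{1}{2}(\abs{V_S}\max B + \sum_{v \in V_C} c_v)$: each simple node has degree exactly $\max B$, and each complex node $v$ has degree equal to the fixed Hamming weight $c_v$ of its relation $R_v$. Writing $\alpha_v$ for the target value of the realization used at $v$, I set $\alpha \deff \abs{E^*} + \sum_{v \in V_C}(\alpha_v - c_v)$; the summand $\alpha_v - c_v$ is exactly the number of internal (non-dangling) edges a good realization selects.

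The forward direction is routine. Given a BFR-solution $T$ of $G$, each dangling configuration $D'_v \deff T \cap I(v)$ lies in $R_v$, so \cref{thm:opt:realization} furnishes an extension to a size-$\alpha_v$ partial solution of $H_v$. Gluing these extensions onto $T$ produces a $G'$-solution whose size, after a straightforward double count of the edges between two complex nodes (which are dangling edges of two realizations), is exactly $\alpha$.

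For the reverse direction I take an arbitrary $G'$-solution $S$, classify each complex node $v$ as \emph{good} if $S \cap I(v) \in R_v$ and \emph{bad} otherwise, and let $b$ denote the number of selected edges between two complex nodes. The realization guarantees $\abs{S \cap E(H_v)} \leq \alpha_v$ in the good case and $\abs{S \cap E(H_v)} \leq \alpha_v - 2\delta_v$ in the bad case, so the identity $\abs{S} = \sum_v \abs{S \cap E(H_v)} - b$ (each inter-complex edge is double counted in the sum) yields
\begin{equation*}
  \abs{S} \leq \alpha + (B^* - b) - 2\sum_{v\text{ bad}} \delta_v,
\end{equation*}
where $B^*$ is the fixed number of inter-complex edges selected in any good solution. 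It therefore remains to show that this upper bound is strictly below $\alpha$ whenever $S$ contains at least one bad complex node, so that $\abs{S} = \alpha$ forces $S \cap E(G)$ to be a BFR-solution of $G$.

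The main obstacle is this last strict inequality, because a priori $b$ could be much smaller than $B^*$ without producing many bad vertices. I will argue by a combinatorial forcing argument: summing the number of selected inter-complex edges incident to each complex node, any deviation of $b$ from $B^*$ must be absorbed entirely by bad vertices, and because each complex node in the construction is incident to only a constant number of inter-complex edges (namely four), the number of bad vertices grows at least linearly in $\abs{B^* - b}$. Combined with the uniform lower bound $\delta_v \geq 4$ on complex-node degrees in the construction, this makes the penalty $2\sum_{\text{bad}}\delta_v$ strictly exceed $B^* - b$ whenever $b < B^*$; when $b \geq B^*$, the presence of any single bad vertex already supplies strictly positive slack, closing the argument.
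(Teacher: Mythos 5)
Your proof is correct and takes essentially the same approach as the paper's. Your definition $\alpha = \abs{E^*} + \sum_{v\in V_C}(\alpha_v - c_v)$ coincides with the paper's $\alpha = \sum_{v\in V_C}\alpha_v - X$ (where $X$ is the number of inter-complex edge pairs), since $\sum_v c_v = \abs{E^*} + X$ by double-counting the inter-complex edges; and your good/bad classification with the deviation bound $B^*-b \le \abs{W}$ (each bad node lies in at most two pairs, $\deg(v)\ge 4$) is the same accounting the paper carries out via $X_0,X_1,X_2$ and $Y_0,Y_1,Y_2$.
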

\begin{proof}
  By \cref{corr:lower:construction:observations} the complex nodes form a vertex cover for the graph.
  Further note,
  that two complex nodes share at most two edges
  and by our definition of the relations
  exactly one edge from each such pair of edges is selected in a valid solution.
  Let $X$ be the number of these pairs of edges.
  
  Let $\alpha_v$ be the target value for the realization of complex node $v$.
  Then we define \(
    \alpha \deff \sum_{v\in V_c} \alpha_v - X
  \).
  The negative term accounts for the edges between the complex nodes,
  as they are counted twice.
  
  The ``only if'' direction of the claim follows directly by the definition of $\alpha$.
  For the ``if'' direction
  assume $G'$ has a solution of size at least $\alpha$. 
  
  It suffices to show that this solution is valid,
  i.e.\ the relation of all complex nodes are satisfied.
  Then the solution for $G$ follows directly by choosing the edges incident to the realizations of the complex node.
  Assume otherwise and let $W$ be the set of complex nodes whose relation is not satisfied.
  As the relation of each $v\in W$ is not satisfied,
  we lose at least a factor of $\beta_v$.
  But as all incident edges of $v$ could be selected,
  the size of the partial solution for $v$ is at most $\alpha_v-\beta_v+\deg(v)$.
  But as we chose $\beta_v=2\deg(v)$,
  this solution has size at most $\alpha_v-\deg(v)$.
  
  For $i\in\{0,1,2\}$,
  let $Y_i$ be the number of selected \emph{edges} between complex nodes,
  where $i$ endpoints are in $W$.
  Then $Y_0+Y_1+Y_2$ corresponds to the total number of selected edges between complex nodes.
  As these edges are counted twice,
  they appear negative in the following bound for the total size:
  \[
      \sum_{v\in W} \alpha_v - \deg(v)
    + \sum_{v \in V_c\setminus W} \alpha_v
    - Y_0 - Y_1 - Y_2
  \]
  If we can show that this is strictly smaller than $\alpha$,
  we get a contradiction to the assumption that the size of the solution is at least $\alpha$
  and the solution is valid.

  For $i\in\{0,1,2\}$,
  let $X_i$ be the number of \emph{pairs} of edges between complex nodes
  where $i$ endpoints are contained in $W$.
  Observe that $X_0+X_1+X_2=X$.
  Now it suffices to show:
  \[
    - \sum_{x\in W} \deg(v) - Y_0 - Y_1 - Y_2 < - X_0 - X_1 - X_2
  \]
  For each pair of edges contributing to $X_0$,
  there is exactly one edge selected,
  as the relation of both endpoints is satisfied.
  This implies $X_0 = Y_0$.
  Further, for all pairs of edges contributing to $X_1$,
  we know that exactly one edge of this pair is selected.
  Otherwise, the relation of both nodes would not be satisfied and both nodes would be in $W$,
  thus $X_1=Y_1$.
  Hence, it remains to show
  \[
    X_2 < \sum_{x\in W} \deg(v) + Y_2
  \]
  Each pair of edges contributing to $X_2$
  has two endpoints in $W$.
  We could split $X_2$ according to the contribution of each node in $W$.
  As each node in $W$ can definitely contribute only to at most $\deg(v)/2$ pairs (strictly speaking actually only 2),
  this finishes the proof.
\end{proof}
Now we are ready to prove the conditional lower bound for \MaxBFactor
when $0\in B$.
\begin{proof}[Proof of \cref{thm:lower:maximization}]
  Use \cref{lem:opt:targetValue} to construct the final graph and the target value.
  Then the proof goes analogous to the proof for the decision version (cf.\ \cref{thm:lower:dec}).
\end{proof}
\subsection{High Girth Graphs}
Our definition of realizations includes the dangling edges in the solution size.
We also define a slight variation
where the dangling edges are not included.
While both definitions are equivalent for $\HWeq{k}$ nodes,
they differ for $\HWin{S}$ nodes.
The \emph{internal realization} of a node can be seen as a realization,
where we do not adjust the solution size according to the number of selected dangling edges,
i.e.\ only the internal edges are considered for the solution size.
\begin{definition}[Internal Realization]\label{def:opt:simulation}
  Let $R \subseteq \SetB^k$ be a relation.
  Let $G$ be a node labeled graph,
  with dangling edges $D=\{d_1, \dots, d_k\} \subseteq E(G)$.
  $G$ \emph{internally realizes} $R$ with \emph{penalty} $\beta$
  if we can efficiently construct/find
  an $\alpha>0$
  such that for all $D' \subseteq D$:
  \begin{itemize}
    \item
    If $D' \in R$,
    then there is a solution $S \subseteq E(G)$ with $S \cap D = D'$
    and $\abs{S \setminus D} = \alpha$.

    \item
    If $D' \notin R$,
    then for all solutions $S \subseteq E(G)$ with $S \cap D = D'$
    we have $\abs{S \setminus D} \le \alpha-\beta$.
  \end{itemize}
  We say that $G$ internally $B$-realizes $R$ if $G$ is additionally $B$-homogeneous.
\end{definition}
We know that there is a gap of size at least two between $a$ and $a+d+1$ in $B$.
This allows us to define relatively simple conditions of the form
``if one incident edge of a vertex with degree $a+d+1$ is not selected,
then another edge is also not selected''.
In other words, this propagates the penalty to a neighboring vertex.
We combine this with high girth graphs to introduce an arbitrary large penalty for not selecting an edge.

The construction of $r$ regular graphs with girth $g$ is a long studied problem in graph theory.
Erd\H{o}s and Sachs proved the existence of such graphs
for all combinations of $r$ and $g$.
\begin{lemma}[Theorem~1 in \cite{ErdosS63}]
  \label{cor:opt:highGirth}
  For all $r \ge 2$ and $g\ge 3$,
  there is a $r$-regular graph $G_{r,g}$
  of girth $g$ with at most $4gr^g$ vertices.
\end{lemma}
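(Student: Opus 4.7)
The plan is to follow the classical Erd\H{o}s--Sachs existence argument for regular graphs of prescribed girth. The proof is inductive on $g$: starting from a small $r$-regular graph of low girth, one repeatedly applies a local operation that strictly increases the girth while only moderately blowing up the vertex count, and tracks the accumulated size against the stated bound $4gr^g$.

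First, I would establish the base case $g = 3$: the complete graph $K_{r+1}$ is $r$-regular, has girth exactly $3$, and uses $r+1 \le 4\cdot 3 \cdot r^3$ vertices (since $r \ge 2$), so the bound holds trivially. The inductive hypothesis is that for every $r \ge 2$ and every $g' < g$, an $r$-regular graph of girth $\ge g'$ exists on at most $4g' r^{g'}$ vertices.

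For the inductive step, given an $r$-regular graph $H$ of girth $\ge g-1$ on $n$ vertices, I would construct $H'$ of girth $\ge g$ by a sequence of \emph{2-switches} (edge rotations): as long as there is a cycle $C$ of length $g-1$, pick two edges $(u,v),(x,y)$ lying on short cycles whose endpoints form no pre-existing edges and replace them by $(u,x),(v,y)$. Each 2-switch preserves $r$-regularity, and by a careful local analysis (counting how cycles can be created/destroyed through the four endpoints) one shows that a switch can always be chosen to strictly decrease the number of cycles of length $\le g-1$ without introducing new ones. To guarantee enough ``room'' for a valid switch when the current graph is too rigid, I would first take a constant number of disjoint copies of $H$ before performing switches across the copies; the multiplicative blow-up per round is at most $r$, and iterating from $K_{r+1}$ through $g-3$ rounds yields at most $(r+1)\cdot r^{g-3} \le 4gr^g$ vertices with substantial slack.

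The main obstacle is the 2-switch analysis: one must verify that at each stage, a switch destroying a short cycle does not create another cycle of length $< g$, which is the technical heart of Erd\H{o}s--Sachs. Making this rigorous requires (i) a cycle-counting inequality showing that when short cycles exist, many switch candidates are available, and (ii) a parity/reachability check to rule out bad creations. Once this local lemma is in place, the induction on $g$ together with the crude size-tracking described above delivers the claimed bound $4gr^g$; the loose constant $4$ and the factor $g$ are there precisely to absorb the per-step overhead and the initial base-case size $r+1$.
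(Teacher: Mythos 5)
This lemma is not proved in the paper at all: it is imported verbatim as Theorem~1 of Erd\H{o}s and Sachs (1963), so there is no in-paper argument to compare yours against. Judged on its own terms, your sketch has a genuine gap. The key claim you announce --- that a 2-switch can always be chosen to strictly decrease the number of cycles of length $\le g-1$ without introducing any new short cycle --- \emph{is} the theorem; you explicitly defer it as ``the technical heart,'' so what is written is a plan rather than a proof. Beyond that, two concrete steps would fail as stated. First, you pick two edges $(u,v)$ and $(x,y)$ ``lying on short cycles'' and switch them; the cycle-removal argument actually switches one edge $(u,v)$ on a $(g-1)$-cycle with an edge $(x,y)$ whose endpoints lie at distance at least $g$ from both $u$ and $v$, so that neither new edge can close a cycle of length $<g$. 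With both edges on short cycles the switch is as likely to create a new short cycle as to destroy one, and the monotone decrease you rely on does not follow. Second, the size bookkeeping is internally inconsistent: ``a constant number of disjoint copies'' gives a constant multiplicative blow-up per round, not the factor $r$ per round you then invoke, and for the distance-$\ge g$ avoidance argument to have any room the graph entering the last round must already have $\Omega(r^{g-1})$ vertices, so the schedule of blow-ups and the final bound $4gr^g$ require an actual calculation rather than the assertion of ``substantial slack.''

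It is also worth noting that this is not the route Erd\H{o}s and Sachs take. Their argument is extremal: among all graphs of maximum degree $\le r$ and girth $\ge g$ on a suitable number of vertices, one with the maximum number of edges is shown, via a short local edge-rerouting step when two low-degree vertices are close, to be $r$-regular. Your induction on girth via cycle-destroying 2-switches is conceptually closer to the cycle-removal arguments used for random regular graphs; it is a legitimate alternative strategy, but it would need the switch lemma and the size accounting to be carried out, not deferred.
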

Finding the smallest graph for each $r,g$ is a non-trivial task and known as the $(r,g)$-cage problem.
For several cases (e.g.\ $r$ is a prime power)
constructions are known reducing the number of vertices in the graph.
See \cite{Dahan14,ExooJ12,Imrich84,LazebnikUW95} for more results.

\subsection{Realizing Relations}
From now on
let $d \deff \maxgap B > 1$ such that $[a,a+d+1] \cap B = \{a, a+d+1\}$ for some $a\ge0$.
As we allow $0\in B$, we can always find a trivial solution.
Thus, we cannot force edges as we did for the decision version.
Instead we construct a gadget
where we can select many edges when the ``forced'' edges are selected.
Otherwise we ensure that the solution is small.
We use the graphs with high girth for this.

\begin{lemma}\label{lem:opt:all}
  There is a $f:\SetN\to\SetN$ such that the following holds.
  We can
  $B$-realize $\HWeq[2]{2}$
  (with distinct portal vertices)
  and internally $B$-realize $\EQ{d+1}$
  with arbitrary penalty $\beta$
  by simple graphs
  using at most $f(\beta)$ vertices
  of degree at most $\max B$.
\end{lemma}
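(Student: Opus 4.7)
The plan is to construct the two realizations separately, using the existence of regular graphs of high girth (\cref{cor:opt:highGirth}) to convert any local degree violation into a large globally propagating penalty. Throughout, I use the gap hypothesis $[a, a+d+1] \cap B = \{a, a+d+1\}$.

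For the $\HWeq[2]{2}$ realization, I would take an $(a+d+1)$-regular simple graph $H$ of girth $g$ (to be chosen as a function of $\beta$), pick an arbitrary edge $(x,y) \in E(H)$, delete it, and designate the distinct vertices $x$ and $y$ as the portals, each receiving one dangling edge. Setting the target value $\alpha = \abs{E(H)}+1$, the good case (both dangling edges selected) is witnessed by selecting every remaining internal edge together with both dangling edges, so that every vertex of $H$ has degree $a+d+1 \in B$. If instead at least one of the two dangling edges is not selected, then at least one of $x, y$ has total incident degree only $a+d$, which lies in the forbidden gap $[a+1, a+d]$, and so it must drop at least $d$ further incident edges to reach degree $\le a$; each such dropped edge in turn forces its other endpoint to miss $\ge d+1$ of its incident edges. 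Because $H$ has girth $\ge g$, the ball of radius $\floor{g/2}$ around the initial violator is a tree, so the subgraph of unselected edges has minimum degree $\ge d+1$ on a set of vertices whose size is $\Omega((d+1)^{g/2})$ by a Moore-bound argument; this gives $\Omega((d+1)^{g/2+1})$ unselected edges, and choosing $g = O(\log \beta)$ ensures a penalty $\ge \beta$.

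For the internal realization of $\EQ{d+1}$, I would take a single vertex $c$ equipped with $d+1$ dangling edges and then attach $\lceil a/2 \rceil$ copies of the $\HWeq[2]{2}$ gadget above (each with penalty $\beta$), identifying both dangling edges of every such gadget with fresh edges incident to $c$. This forces (up to each gadget's penalty) exactly $a$ edges incident to $c$ to be selected, so $c$'s total degree equals $a + k$, where $k$ is the number of selected dangling edges; the gap condition then restricts $k$ to $\{0, d+1\}$ in every solution that keeps all $\HWeq[2]{2}$ gadgets intact, yielding exactly $\EQ{d+1}$. For any bad $D' \notin \EQ{d+1}$, the degree $a+k$ of $c$ lies in the gap, so at least one forced edge must be unselected, which breaks at least one $\HWeq[2]{2}$ gadget and costs $\ge \beta$. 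Since $0 \in B$, $c$ can always drop enough forced edges to reach a valid degree, so a valid extension always exists.

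The main obstacle is the propagation argument for $\HWeq[2]{2}$: one must argue that a cascade of unselected edges cannot close up cheaply through short cycles, which is exactly where the high-girth hypothesis is used, combined with Moore-bound counting on the tree-like neighbourhood. Minor technicalities include handling the parity of $a$ in the $\EQ{d+1}$ construction (absorbed by a small constant-size tweak, possible since $B$ is fixed), verifying that the resulting graphs are simple (the portals of distinct $\HWeq[2]{2}$ gadgets are distinct internal vertices, and $H$ is simple to begin with), and checking that every vertex has degree at most $a+d+1 \le \max B$ and that the total number of vertices is $\beta^{O(1)}$ for fixed $B$, which follows from $g = O(\log \beta)$ in \cref{cor:opt:highGirth}.
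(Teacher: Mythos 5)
Your $\HWeq[2]{2}$ gadget is exactly the paper's: split an edge of an $(a+d+1)$-regular high-girth graph and propagate the deficiency. The paper simply takes girth $\beta$ and observes that the chain of unselected edges (each affected vertex must drop to degree $\le a$, hence has $\ge 2$ unselected incident edges) closes into a closed walk containing a cycle of length $\ge \beta$; your Moore-bound refinement with girth $O(\log\beta)$ is a valid quantitative strengthening but is not needed, since the lemma permits an arbitrary function $f(\beta)$. That half of your proof is fine.

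The $\EQ{d+1}$ construction has a genuine gap. By attaching \emph{both} portals of every $\HWeq[2]{2}$ gadget to the single centre vertex $c$, you can only force an \emph{even} number of edges at $c$: with $\lceil a/2\rceil$ gadgets you force $2\lceil a/2\rceil$ edges, which equals $a+1$ when $a$ is odd. Then even the intended good state $k=0$ gives $\deg(c)=a+1\notin B$, and $k=d+1$ gives $a+d+2$, which need not lie in $B$; the gadget simply does not realize $\EQ{d+1}$ for odd $a$. You flag this as a ``small constant-size tweak,'' but the obvious tweaks fail: you cannot force a single edge with one of your gadgets (both of its portals must land somewhere), and routing the spare portal to a fresh pendant vertex leaves that vertex with degree $1$, which need not be in $B$, so it would itself need legalizing. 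The paper's resolution is to use \emph{two} centre vertices $u$ and $v$ and make each of $a$ common $\HWeq[2]{2}$ gadgets contribute one portal to each, so that $u$ gets exactly $a$ forced edges plus the $d+1$ dangling edges while $v$ ends with degree exactly $a\in B$; this works for every parity of $a$. Your penalty accounting for the bad cases $0<k<d+1$ is otherwise correct once the forcing is repaired.
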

\begin{proof}
  \begin{enumerate}
    \item
    For ${\HWeq[2]{2}}$ we use \cref{cor:opt:highGirth}
    to get an $a+d+1$-regular graph $G_{a+d+1,\beta}$ of girth at least $\beta$.
    Split an arbitrary edge $(u,v)$
    into two dangling edges for $u$ and $v$ each and assign the set $B$ to every vertex.

    The graph has the claimed properties:
    If both dangling edges are selected, then we can use the set of all edges in the graph as a solution since $a+d+1\in B$.

    It remains to check the case when at least one dangling edge is not selected,
    let it w.l.o.g.\ be the one incident to $u$.
    Assume $S$ is the optimal solution.
    We show that this solution does not contain more than $\abs{E_{a+d+1,\beta}}-\beta$ edges.

    By assumption $\deg_S(u) \le a$.
    Hence, there must be at least one other incident edge to $u$ that is not in the solution, because $a+d-1 \ge a+1 \notin B$.
    Then we can apply this argument always to the next vertex.
    Observe that this sequence can only stop if we reach another vertex $w$ we have already visited
    because for this vertex we already know that two incident edges were not selected in the solution.
    The length of this path, i.e.\ the number of not selected edges from $w$ to $w$, is at least the girth of the graph.
    Hence the number of edges that are not selected in the solution
    is at least the girth of the graph which is at least $\beta$.

    \item
    For $\EQ{d+1}$ we start with two vertices $u$ and $v$
    and make both adjacent to $a$ many common $\HWeq[2]{2}$ nodes with penalty $\beta+2$.
    Add $d+1$ dangling edges to one of the vertices and zero to the other.
    As the vertices connected to the dangling edges in the realization of the $\HWeq[2]{2}$ nodes are different, the graph is simple.

    If $d+1$ or $0$ dangling edges are selected,
    we can choose all internal edges for the optimal solution $S$,
    as $a + d+1,a \in B$.
    The size of the solution is in both cases optimal.

    Now assume $0<\ell<d+1$ dangling edges are selected.
    From $a+\ell \notin B$ it follows that
    there must be at least one $\HWeq[2]{2}$ node where at least one incident edge is not selected.
    By this we lose at least $\beta+1$ compared to the optimal solution.
      \qedhere
  \end{enumerate}
\end{proof}
As for the decision version,
we now realize relations used for the general construction.
\begin{lemma}\label{lem:opt:all:negation}
  There is a $f:\SetN^2\to\SetN$ such that the following holds.
  We can $B$-realize a pair of $\HWeq[k]{1}$ and $\HWeq[\ell]{1}$ for any $k, \ell\ge 1$
  with arbitrary penalty $\beta$
  with simple graphs
  using at most $f(k+\ell, \beta)$ vertices
  of degree at most $\max B+2$.
\end{lemma}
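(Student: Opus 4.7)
The plan is to adapt the three-case construction from \cref{lem:dec:even,lem:dec:odd,lem:dec:general} of the decision version to the optimization setting. The relevant gadgets $\HWeq[2]{2}$ and $\EQ{d+1}$ have already been provided with arbitrary penalty by \cref{lem:opt:all}, so I can treat them as black boxes with tunable penalty $\beta'$. Following the strategy of \cref{lem:dec:generalNegation}, I will first realize pairs $(\HWeq[k]{1}, \HWeq[\ell]{1})$ for small $k, \ell \in \{1,2,3\}$ and then chain them together inductively to obtain arbitrary $k, \ell$; the chain construction and its analysis transfer essentially verbatim from the decision version and preserve penalty up to constants.

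The base construction takes two fresh vertices $u, v$ and connects them to an appropriate number $c$ of common $\HWeq[2]{2}$ gadgets from \cref{lem:opt:all}, each instantiated with penalty $\beta' \deff \beta + \O(k+\ell)$. I then attach $k$ dangling edges at $u$ and $\ell$ at $v$ and set the target value $\alpha$ to the sum of the individual gadget targets plus $2$, corresponding to one selected dangling edge at each of $u, v$ in a valid configuration. The choice of $c$ depends on the parity structure of $B$, mirroring the decision version: when all elements of $B$ share one parity, $\max B - 1 \notin B$ and I take $c = \max B - 1$, so that the only feasible in-$B$ degrees at $u$ and $v$ force exactly one dangling edge to be selected. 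When $B$ contains both parities, $\max B - 1$ may itself lie in $B$, so I instead work with $c = a + d$ where $[a, a+d+1] \cap B = \{a, a+d+1\}$ realizes the maximum gap.

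The penalty analysis is the heart of the argument. In any bad configuration some vertex among $u, v$, say $u$, has $j \neq 1$ selected dangling edges; the induced degree constraint then forces at least one common $\HWeq[2]{2}$ gadget into a violated state, incurring loss at least $\beta'$. Writing the number of violated gadgets as $b$, the bound $\deg(u) \leq \max B$ yields $b \geq \max(1, j-1)$, so the extra selected dangling edges (contributing at most $j-1$ over the target) are dominated by the penalty $b \beta'$ as soon as $\beta' \geq \beta + 1$. Summing symmetrically over $u$ and $v$ gives the required total penalty of $\beta$. Simplicity of the resulting graph is inherited from the distinct portal vertices of the $\HWeq[2]{2}$ realization in \cref{lem:opt:all}, and the degree bound $\max B + 2$ is met because $u$ and $v$ each have at most $c \leq \max B - 1$ gadget-edges plus at most $3$ dangling edges in the base case.

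The main obstacle is the mixed-parity case, since the decision-version \cref{lem:dec:general} relies on strictly forcing single edges via $\EQ{1}$ gadgets, which is impossible once $0 \in B$ because the all-empty configuration is always feasible. I plan to handle this by thickening each ``logical'' dangling edge into a bundle of $d+1$ parallel edges tied together by an internal $\EQ{d+1}$ gadget from \cref{lem:opt:all}, so that toggling a bundle shifts the degree of $u$ by $d+1$ rather than by $1$ and thereby skips over the gap $[a+1, a+d]$, guaranteeing that every spurious degree assignment lands inside the gap and triggers a penalty. Verifying that this thickening correctly interacts with the degree-counting bookkeeping while preserving both simplicity and the degree bound $\max B + 2$ is the most delicate technical step of the proof.
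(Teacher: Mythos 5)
Your base construction for the case $\max B - 1 \notin B$ (two vertices $u,v$ attached to $\max B - 1$ common $\HWeq[2]{2}$ gadgets, one dangling edge at each forced by the degree constraint landing outside $B$) and the chaining argument via the analogue of \cref{lem:dec:generalNegation} are both faithful to the paper. However, there are two genuine problems.

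First, the correct case split is \emph{$\max B - 1 \in B$ versus $\max B - 1 \notin B$}, not parity. You observe correctly that same-parity $B$ implies $\max B - 1 \notin B$, but a mixed-parity $B$ can also satisfy $\max B - 1 \notin B$; your proposal would then route such a set to the harder construction with $c = a + d$, whereas the simple $c = \max B - 1$ gadget already works. The distinction you should draw is exactly whether $\max B - 1$ lies in $B$.

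Second, and more seriously, the ``thickening'' idea for the case $\max B - 1 \in B$ does not yield $\HWeq{1}$, and I see no way to repair it in this form. If you give $u$ some number $c$ of gadget-forced edges plus $k$ thickened dangling edges each contributing $d+1$ to $\deg(u)$, then: with $c = a$, the empty selection gives degree $a \in B$, so nothing prevents zero dangling edges from being selected (you get at best $\HWin{\{0,1\}}$); with $c = a+d$, the empty selection is penalized but selecting one bundle gives degree $a + 2d + 1$, which can overshoot $\max B$ (indeed must, whenever $a+d+1 = \max B$), so the accepting case itself fails. Thickening only rescales the step size between consecutive states; it does not simultaneously rule out too few and too many selections. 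The paper's \cref{lem:opt:negation:inTheSet} solves this with a qualitatively different gadget: it first realizes $\HWeq[1]{1}$ by exploiting that one of $\max B - 1$, $\max B$ is odd (so $o - 1$ edges can be forced with $\HWeq[2]{2}$ gadgets and one extra $\HWeq[2]{2}$ node supplies the dangling edge), and then for $k \in \{2,3\}$ uses $k$ auxiliary $\EQ{d+1}$ vertices $s_1, \ldots, s_k$ each adjacent to $d-1$ bottleneck vertices $u_1, \ldots, u_{d-1}$ (with $\max B - 1$ forced edges, enforcing \emph{at most one} $s_i$ fires) and to a vertex $v$ with $a + d$ forced edges (enforcing \emph{at least one} $s_i$ fires). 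Notice also that in this case the paper realizes a \emph{single} $\HWeq[k]{1}$ node rather than a pair, which is a slightly stronger conclusion. Your proposal stops short of this and explicitly flags the remaining verification as ``the most delicate technical step'', which is exactly where the argument is missing.
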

For the proof of the lemma we reuse the construction we have already seen for the decision version,
by which an explicit construction of these nodes for $k,\ell \in \{1,2,3\}$ is sufficient.
To realize these relations we distinguish between the cases $\max B-1 \in B$ or not.
By this the proof of \cref{lem:opt:all:negation} follows from
\cref{lem:opt:generalNegation,lem:opt:negation:notInTheSet,lem:opt:negation:inTheSet}.
\begin{lemma}\label{lem:opt:generalNegation}
  If we can $B$-realize $\HWeq[a]{1}$ together with $\HWeq[b]{1}$ for $a,b\in \{1,2,3\}$,
  with arbitrary penalty $\beta$
  using simple graphs
  with at most $N$ vertices of degree at most $D$,
  then we can $B$-realize $\HWeq[k]{1}$ together with $\HWeq[\ell]{1}$ for all $k,\ell\ge 1$
  with arbitrary penalty $\beta$
  using simple graphs
  using $\O((k+\ell)N)$ vertices of degree at most $D$.
\end{lemma}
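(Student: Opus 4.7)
The plan is to follow the inductive construction used in the decision-version analogue \cref{lem:dec:generalNegation}, adapted so that the target value and penalty required by \cref{def:opt:realization} are tracked carefully through the induction. Without loss of generality it suffices to show how to pass from a $B$-realization of the pair $(\HWeq[k]{1}, \HWeq[\ell]{1})$ to one of $(\HWeq[k+1]{1}, \HWeq[\ell]{1})$; starting from the base cases $k,\ell \in \{1,2,3\}$ supplied by hypothesis and growing each side independently then yields every pair with $\O((k+\ell)N)$ vertices.

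For the inductive step, let $G_1$ be the inductively constructed realization of $(\HWeq[k]{1}, \HWeq[\ell]{1})$ with penalty $\beta$ and target $\alpha_1$, and let $G_2$ be a fresh base-case realization of $(\HWeq[2]{1}, \HWeq[3]{1})$ with the same penalty $\beta$ and target $\alpha_2$. Pick any one portal $q$ of the $\HWeq[k]{1}$ side of $G_1$, any one portal $r$ of the $\HWeq[3]{1}$ side of $G_2$, and let $p_1, p_2$ be the two portals of the $\HWeq[2]{1}$ side of $G_2$. Delete the four dangling edges at $q, p_1, p_2, r$, and instead add two new edges $(q,p_1)$ and $(p_2,r)$. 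The resulting gadget has $(k-1) + 0 + 2 = k+1$ dangling edges on the chained side and the untouched $\ell$ dangling edges on the other side. Since $G_1$ and $G_2$ are realized on disjoint vertex sets, the two new edges cannot duplicate existing ones, so the graph remains simple; the degree bound $\max B + 2$ is inherited from $G_1$ and $G_2$, and the step adds only $\O(1)$ vertices.

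For correctness, let $x$ be the number of selected dangling edges inherited from the $\HWeq[k]{1}$ side of $G_1$, let $y$ be the number selected from the $\HWeq[3]{1}$ side of $G_2$, and let $a,b\in\{0,1\}$ indicate selection of the edges $(q,p_1)$ and $(p_2,r)$. The three local relations impose $x+a = a+b = b+y = 1$, which forces $x+y = 1$, exactly the desired $\HWeq[k+1]{1}$ constraint on the combined side. Conversely, if $x+y \neq 1$ then no assignment of $a,b$ can simultaneously satisfy all three relations, so at least one of $\HWeq[k]{1}, \HWeq[2]{1}, \HWeq[3]{1}$ is violated and, by the inductive hypothesis together with the base case, any completion loses at least $\beta$ edges compared to its own target.

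The main obstacle, and the reason the naive chaining used in \cref{lem:dec:generalNegation} needs to be re-examined, is defining a single new target value $\alpha$ that is constant across all valid selections and is missed by at least $\beta$ on every invalid selection. Set $\alpha \deff \alpha_1 + \alpha_2 - \kappa$, where $\kappa$ is a fixed correction accounting for the two edges $(q,p_1)$ and $(p_2,r)$: each such edge was counted once in $\alpha_1$ and once in $\alpha_2$ as a dangling edge, but exists only once in the combined graph, and its contribution to a size-maximizing completion is determined by $a,b$ which in turn are forced by the relations. Because both $\alpha_1, \alpha_2$ and this correction are independent of the choice of valid external selection, $\alpha$ is well-defined and efficiently computable; the penalty $\beta$ is preserved because any failure is charged to whichever of $G_1$ or $G_2$ is violated. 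This completes the induction.
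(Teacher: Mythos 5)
Your construction is the same chaining as the paper's (\cref{lem:dec:generalNegation} applied with a fresh $(\HWeq[2]{1},\HWeq[3]{1})$ pair per step), and the relation analysis $x+a=a+b=b+y=1\Rightarrow x+y=1$ is correct. The gap is in the penalty bookkeeping. You instantiate $G_1$ and $G_2$ with penalty exactly $\beta$ and claim the combined gadget inherits penalty $\beta$ because ``any failure is charged to whichever of $G_1$ or $G_2$ is violated.'' This does not go through, because your correction term $\kappa$ (which accounts for the connecting edges being double-counted in $\alpha_1+\alpha_2$) is computed under the assumption that $a,b$ are forced by the relations --- but in an invalid configuration the relations are precisely \emph{not} all satisfied, so $a$ and $b$ are not forced. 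Concretely, take $x=y=1$ and $a=b=0$: both chaining edges are unselected, $G_1$ and the $\HWeq[3]{1}$ side are satisfied, only the $\HWeq[2]{1}$ constraint of $G_2$ fails, and the total size is bounded only by $\alpha_1+(\alpha_2-\beta)-0=\alpha+\kappa-\beta>\alpha-\beta$ (with $\kappa=1$ for the valid count of one selected connecting edge). The slack in the number of selected connecting edges (up to $2$ per $\HWeq[2]{1}$ link) eats into the penalty, so the combined gadget only certifies penalty $\beta-\O(1)$ per violated ingredient, not $\beta$.

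The fix is exactly what the paper does and is the actual content of its proof: realize the ingredient gadgets with an inflated penalty (the paper uses $\beta+3$, justified by noting that each chained node has degree at most $3$ and exactly one selected incident edge in a valid state, so an invalid node can contribute at most two ``extra'' selected connecting edges, which its surplus penalty absorbs). With that change, and with your $\alpha=\alpha_1+\alpha_2-\kappa$ where $\kappa$ counts one selected connecting edge per $\HWeq[2]{1}$ link in the valid case, the argument closes. Everything else in your write-up (simplicity, degree bound, vertex count, base cases) is fine.
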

\begin{proof}
  We use the same construction as for the decision version as stated in \cref{lem:dec:generalNegation}.
  We use $\beta+3$ as penalty for the realizations.
  The target value is the sum of the $\alpha$s of all involved nodes
  minus the number of $\HWeq[2]{1}$ nodes used to connect the other nodes.
  This negative term takes care that we do not count edges twice.

  For the correctness we observe that there could be solutions that do not satisfy the relations of all nodes.
  Note that each node has degree at most 3 and is incident to exactly one selected edge for a valid solution.
  Hence, for every such pair of additionally selected edges,
  there is at least one node not in a valid state.
  By its penalty of $\beta+3$, it can compensate for these additional edges.
  Hence, these solutions are smaller by at least $\beta$.
\end{proof}
For the case $\max B-1 \notin B$,
we use the same approach as for the decision version
when all elements of $B$ are even.
\begin{lemma}\label{lem:opt:negation:notInTheSet}
  There is a $f:\SetN\to\SetN$ such that the following holds.
  If $\max B -1 \notin B$,
  we can $B$-realize a pair of $\HWeq[k]{1}$ and $\HWeq[\ell]{1}$ for $k,\ell\in \{1,2,3\}$
  with arbitrary penalty $\beta$
  with simple graphs
  using at most $f(\beta)$ vertices
  of degree at most $\max B+2$.
\end{lemma}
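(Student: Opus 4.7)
The plan is to port the decision-version construction from \cref{lem:dec:even} to the optimization setting, replacing the unconditional $\HWeq[2]{2}$ building blocks with the penalty-carrying $B$-realizations from \cref{lem:opt:all}. As in the decision case, the assumption $\max B - 1 \notin B$ is exactly what makes a pair of vertices joined by $\max B - 1$ ``forced'' common neighbors behave like an $\HWeq{1}$ pair with respect to their remaining dangling edges.

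First I would build the gadget. Take two vertices $u, v$ labeled with $B$. For each of $\max B - 1$ parallel copies, insert an internal $B$-realization of $\HWeq[2]{2}$ with penalty $\beta' \deff \beta + 4$ (supplied by \cref{lem:opt:all}), connecting one of its two distinct portals to $u$ and the other to $v$. Finally, attach $k$ dangling edges at $u$ and $\ell$ dangling edges at $v$. Because the two portals of every $\HWeq[2]{2}$ gadget are distinct, the resulting graph is simple, and both $u$ and $v$ have degree at most $(\max B - 1) + 3 = \max B + 2$. The target value is $\alpha \deff \sum_{i=1}^{\max B - 1} \alpha_i + 2$, where $\alpha_i$ is the target value of the $i$-th $\HWeq[2]{2}$ gadget and the $+2$ accounts for one selected dangling edge at each of $u$ and $v$ in the valid configuration.

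For the valid direction, if exactly one dangling edge is selected at each of $u$ and $v$, selecting every internal edge of every $\HWeq[2]{2}$ realization together with the two chosen dangling edges yields a solution of size exactly $\alpha$: the degree of each of $u, v$ becomes $(\max B - 1) + 1 = \max B \in B$. For the penalty direction, suppose WLOG that $u$ has $k' \neq 1$ selected dangling edges. If every one of $u$'s forced edges were selected, then $\deg_S(u) = (\max B - 1) + k'$ would have to lie in $B$; since $\max B - 1 \notin B$ by hypothesis and $\max B$ is the maximum of $B$, only $k' = 1$ is feasible, contradicting our assumption. Hence at least one forced edge at $u$ is unselected, which forces the corresponding $\HWeq[2]{2}$ gadget to be violated (as $\HWeq[2]{2}$ accepts only the all-ones input). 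The total solution size is then at most $\sum_i \alpha_i - \beta' + (k' + \ell') = \alpha - 2 - \beta' + (k' + \ell') \le \alpha - \beta$, using $k' + \ell' \le 6$ and $\beta' = \beta + 4$.

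The one subtle design choice, and the only place the proof requires care, is the calibration of $\beta'$: even after the guaranteed $\HWeq[2]{2}$ violation, a bad configuration may still gain up to four extra dangling-edge contributions relative to the valid pattern $k' + \ell' = 2$, so the penalty of one $\HWeq[2]{2}$ violation must dominate this potential gain. Taking $\beta' = \beta + 4$ achieves this uniformly across all bad patterns, and the vertex count $f(\beta)$ is bounded by $(\max B - 1)$ times the vertex count of the $\HWeq[2]{2}$ realization at penalty $\beta + 4$ from \cref{lem:opt:all}.
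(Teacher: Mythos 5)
Your construction is exactly the paper's: two vertices $u,v$ joined through $\max B-1$ common $\HWeq[2]{2}$ realizations from \cref{lem:opt:all}, with the target value $\sum_i\alpha_i+2$ and the observation that $\max B-1\notin B$ forces a violated sub-gadget whenever the number of selected dangling edges at $u$ or $v$ differs from one. The only cosmetic difference is your sub-gadget penalty $\beta+4$ versus the paper's $\beta+k+\ell$; both calibrations absorb the at most four extra dangling edges a bad configuration can gain, so the argument is correct as written.
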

\begin{proof}
  Start with two vertices $u, v$.
  Connect both to $\max B-1$ common nodes of type $\HWeq[2]{2}$.
  Add $k$ dangling edges to $u$ and $\ell$ dangling edges to $v$.
  Replace all nodes by their realization with penalty $\beta+k+\ell$.
  The target value $\alpha$ is the sum of target values of the $\HWeq[2]{2}$
  plus 2 for two selected dangling edges.

  Assume exactly one dangling edge to each $u$ and $v$ is selected.
  Then we can extend this to a solution for the $\HWeq[2]{2}$ nodes, by selecting all internal edges.

  If for $u$ or $v$ no dangling edge is selected,
  then there must be at least one $\HWeq[2]{2}$ node which is not incident to two edges in the optimal solution
  as $\max B-1 \notin B$.
  Hence, we lose a penalty of $\beta+k+\ell\ge\beta$.
  Now let $\hat k, \hat \ell$ dangling edges of $u,v$ be selected, respectively,
  such that $\hat k>1$ or $\hat \ell >1$.
  W.l.o.g.\ assume $\hat k>1$.
  Then at least one edge between $u$ and a $\HWeq[2]{2}$ node is not selected,
  as $\max B -1 + \hat k>\max B$.
  Hence, the $\HWeq[2]{2}$ node is not incident to two edges
  and we lose a factor of $\beta+k+\ell$.
  As $\hat k+\hat \ell \le k+\ell$,
  we still lose $\beta$ compared to the optimal solution in the good case.
\end{proof}
For the case $\max B-1 \in B$,
we use an approach that is similar to the decision version,
when $B$ contains even and odd numbers.
But as we cannot use an $\EQ{3}$ node,
the construction gets slightly more complicated
as we have to handle the additional edges of a $\EQ{d+1}$ node.
\begin{lemma}\label{lem:opt:negation:inTheSet}
  There is a $f:\SetN\to\SetN$ such that the following holds.
  If $\max B -1 \in B$,
  we can $B$-realize a node $\HWeq[k]{1}$ for $k\in\{1,2,3\}$
  with arbitrary penalty $\beta$
  with simple graphs
  using at most $f(\beta)$ vertices
  of degree at most $\max B+2$.
\end{lemma}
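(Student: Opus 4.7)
I plan to adapt the decision-version analogue \cref{lem:dec:general} to the optimization setting by substituting the hard-constraint forcing gadgets with the penalty-based $B$-realization of $\HWeq[2]{2}$ and the internal $B$-realization of $\EQ{d+1}$ from \cref{lem:opt:all}. Since $k$ ranges only over $\{1,2,3\}$, I give an explicit construction for each of the three small cases; the cases $k=1$ and $k\in\{2,3\}$ differ in whether the dangling edge needs to be duplicated before being routed through control vertices.

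For $\HWeq[1]{1}$ I introduce a vertex $w$ labeled $B$, force $a+d$ edges incident to $w$ using $\HWeq[2]{2}$-realizations with both portals identified with $w$ (each such gadget contributes two edges to $w$), and attach $d_1$ as one further incident edge, so $w$'s degree is either $a+d$ or $a+d+1$. Since $[a,a+d+1]\cap B=\{a,a+d+1\}$ and $a+d\notin B$, a valid solution is forced to select $d_1$; otherwise the only way to achieve a valid degree at $w$ is to drop its internal contribution to $a$ by violating at least $d$ of the forced $\HWeq[2]{2}$ gadgets, each of which costs its inner penalty. Any parity mismatch when $a+d$ is odd is absorbed by an additional single-portal $\HWeq[2]{2}$ whose second portal lands on a buffer vertex labeled $B$; both portal states are valid there because $\max B-1,\max B\in B$ by hypothesis.

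For $\HWeq[k]{1}$ with $k\in\{2,3\}$ I duplicate each dangling edge $d_i$ by attaching an internal $\EQ{d+1}$-realization $E_i$ with $d_i$ as one of its $d+1$ portals; the remaining $d$ portals become synchronized copies of $d_i$. I then introduce two control vertices $u$ and $v$ labeled $B$: vertex $u$ has $a+d$ forced internal edges and receives one copy from each $E_i$, forcing at least one $d_i$ to be selected (because $a+d\notin B$ while $a+d+1\in B$); vertex $v$ has $\max B-1$ forced internal edges and receives one copy from each $E_i$, forcing at most one $d_i$ to be selected (because $\max B+1\notin B$). Together exactly one $d_i$ is selected in any valid solution. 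The remaining $d-2$ copy portals of each $E_i$ are absorbed by individual ``consistency pendants'' labeled $B$ with $\max B-1$ forced internal edges, whose valid degrees $\max B-1$ and $\max B$ are both in $B$ by hypothesis; so the pendants accept both possible states of their incident copy edge.

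The target value $\alpha$ is the edge count of the canonical good configuration (all inner realizations in good state, exactly one $d_i$ selected). The hard part will be the penalty bookkeeping: for every dangling pattern $D'\notin\HWeq[k]{1}$, at least one of $u$, $v$, or an absorber pendant is driven to an invalid degree, which can only be repaired by violating one or more inner realizations, each paying its inner penalty. Choosing all inner penalties as a sufficiently large multiple of $\beta$ (with a factor depending only on $d$ and $\max B$) guarantees that the aggregate shortfall over any $D'\notin\HWeq[k]{1}$ exceeds $\beta$, fulfilling the outer-realization penalty condition. The degree bound $\max B+2$ holds because $a+d+1\le\max B$, so $u$'s degree $a+d+k\le\max B+k-1\le\max B+2$ for $k\le 3$, and $v$, the pendants, and the buffer are similarly within the bound.
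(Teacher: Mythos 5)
Your high-level approach — using the gap $[a,a+d+1]$ together with $\HWeq[2]{2}$-based soft forcing, control vertices with forced degree $a+d$ (``at least one'') and $\max B -1$ (``at most one''), and $\EQ{d+1}$ gadgets to duplicate edges — is the same strategy the paper takes, and your $k\in\{2,3\}$ layout (one $u$, one $v$, $d-2$ ``don't-care'' pendants attached to each $\EQ{d+1}$) is a mild and valid rearrangement of the paper's layout ($d-1$ copies of $u$, one $v$, with the $\EQ{d+1}$'s placed as $s_i$ nodes). However, there is a genuine gap in your base case $k=1$, and since your $k\in\{2,3\}$ step silently relies on being able to force single edges, that gap propagates.

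The problem is your parity fix. You force $a+d$ edges onto $w$ using $\HWeq[2]{2}$-realizations with both portals at $w$, which requires $a+d$ to be even; when $a+d$ is odd you propose one extra $\HWeq[2]{2}$ gadget whose second portal lands on a ``buffer'' vertex $b$ labeled $B$. But in the \emph{good} configuration (where you need a solution of size exactly $\alpha$) that $\HWeq[2]{2}$ gadget selects \emph{both} of its dangling edges, so $b$ ends with degree~$1$. Your justification ``both portal states are valid there because $\max B-1,\max B\in B$'' does not apply: $b$ has degree~$0$ or~$1$, not $\max B-1$ or $\max B$, and $1\in B$ is not guaranteed. A concrete failure: $B=\{0,4,5\}$, so $\max B-1=4\in B$, $\maxgap B = 3$ with $a=0$, $d=3$, and $a+d=3$ is odd, yet $1\notin B$. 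If you instead force $\max B-1$ further edges onto $b$ to make its degree land in $\{\max B-1,\max B\}$, you have recreated the original parity problem on $b$ (you cannot yet force single edges, which is what $k=1$ is supposed to give you). The paper sidesteps this entirely: it does not force $a+d$ edges; it lets $o$ be the \emph{odd} one of $\max B-1,\max B$, forces the \emph{even} number $o-1$ of edges onto $v$ (no parity issue), and then routes the dangling edge through a dedicated $\HWeq[2]{2}$ node $u$ whose other portal is $v$ — so the ``force'' is read off $u$'s penalty rather than off $w$'s degree. You would need to replace your $k=1$ gadget with something along these lines (or otherwise prove that a single edge can always be soft-forced) before the penalty bookkeeping you sketch for $k\in\{2,3\}$ can be carried out.
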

\begin{proof}
  We first prove the claim for $k=1$, which we need for the proof of the other cases.
  \begin{claim}
  There is a $f':\SetN\to\SetN$ such that 
  we can $B$-realize $\HWeq[1]{1}$
  with arbitrary penalty $\beta$
  with simple graphs
  using at most $f'(\beta)$ vertices
  of degree at most $\max B$.
  \end{claim}
  \begin{claimproof}
  We use the same approach as for the decision version
  and let $v$ be a new vertex.
  Let $o$ be the odd number of $\max B-1$ and $\max B$.
  We force $o-1$ edges to the vertex by adding $\frac{o-1}2$ $\HWeq[2]{2}$ nodes to it.
  We introduce one more $\HWeq[2]{2}$ node $u$.
  We add one edge between $u$ and $v$ and add one dangling edge to $u$.
  We replace all nodes by their realization according to \cref{lem:opt:all}
  with penalty $\beta+1$.
  The target value is the sum of all target values of the $\HWeq[2]{2}$ nodes.

  If the dangling edge is selected, then we can select all other edges.
  Vertex $v$ is then incident to $2 \cdot \frac{o-1}2+1 = o \in B$ edges.
  But if the dangling edge is not selected, the node $u$ cannot be in a valid state and we lose at least $\beta$ compared to the optimal solution.
  \end{claimproof}

  Now we prove the construction for $k=2, 3$.
  See \cref{fig:opt:negation:inTheSet} for an example of the construction.
  We create vertices $u_1,\dots,u_{d-1}$ and $v$.
  We force $\max B-1$ edges on $u_j$ for all $j\in[d-1]$
  and $a+d$ edges on $v$ with nodes from the previous claim.
  We further introduce nodes $s_1, \dots, s_k$ with relation $\EQ{d+1}$.
  We make each of the $s_i$ adjacent to the vertices $u_1,\dots,u_{d-1},v$.
  Finally, we add one dangling edge to each $s_i$.
  The vertices $u_j$ ensure that \emph{at most one} dangling edge is selected while $v$ makes sure that \emph{at least one} is selected.
  We replace all nodes by their realization with a penalty of $\beta+3$.
  The target values is the sum of all target values of the realizations for the nodes,
  plus $d+1$ accounting for the selected edges, when one dangling edge is selected. 
  Observe that the edges forced by the $\HWeq[1]{1}$ nodes are usually part of any solution.
  Hence, we only have to focus on the edges incident to the nodes with $\EQ{d+1}$,
  as they are only internally realized.
  Note, this simplification can lead to solutions of negative size,
  when considering the penalty.

  Assume only the dangling edge of $s_i$ is selected.
  Then we select the edges incident to $s_i$ as a solution of size $d+1$.

  If no dangling edge is selected, then the vertex $v$ cannot be in a valid state
  or one of the $s_i$ is not in a valid state.
  This bounds the solution size by $d-(\beta+3) \le d+1-\beta$.

  Now assume $\ell >1$ dangling edges are selected.
  W.l.o.g.\ we can assume that none of the $s_i$ is in an invalid state,
  as this would lead to a solution smaller than for some appropriate $\ell'$.
  As each of the $d-1$ vertices $u_j$ cannot be incident to more than $\max B$ edges in any solution,
  for each selected dangling edge except the first one
  there must be an edge to an adjacent $\HWeq[1]{1}$ node that is not in the solution,
  i.e.\ the $\HWeq[1]{1}$ node is not in a valid state.
  This bounds the solution size by:
  \(
    \ell (d+1) - (d-1)(\ell-1)(\beta+3)
  \).
  If we can bound this by $d+1-\beta$,
  the claim follows.
  But this is equivalent to show $(\ell-1)(d+1) + \beta \le (d-1)(\ell-1)(\beta+3)$.
  As $\ell>1$ and $d>1$, it is easy to check that this claim holds true.
\end{proof}
\begin{figure}
  \begin{subfigure}[t]{0.475\textwidth}
    \centering
    \includegraphics[]{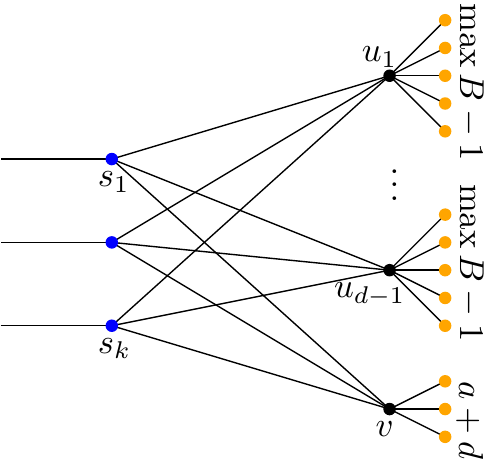}
    \caption{Example construction for the proof of \cref{lem:opt:negation:inTheSet}.}
    \label{fig:opt:negation:inTheSet}
  \end{subfigure}\hfill
  \begin{subfigure}[t]{0.475\textwidth}
    \centering
    \includegraphics[]{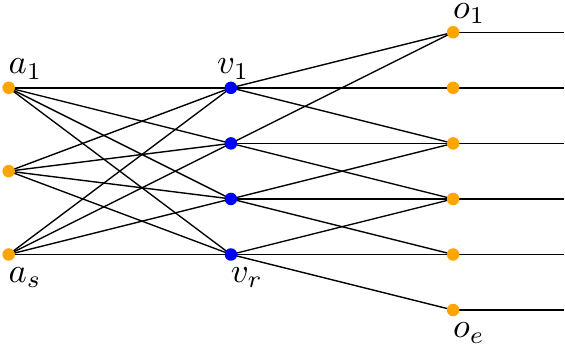}
    \caption{Graph for the relation $R$ with $R(000111)=R(010011)=R(110001)=R(111000)=1$ and zero otherwise.
    }
    \label{fig:opt:all:example}
  \end{subfigure}
  \caption{
  Blue nodes have $\EQ{k(d+1)}$ as relation,
  light blue nodes have $\EQ{1}$,
  black ones \HWin{B},
  and orange nodes have $\HWeq{1}$.
  }
\end{figure}
Following the approach from the decision version,
it remains to realize $\EQ{k}$ nodes for even $k$.
But the construction for the decision version cannot be transferred directly
as it does not give us a realization.
Recall, that we use a $d+2$-clique of $\EQ{d+1}$ nodes
and split one edge into two dangling edges to realize $\EQ{2}$.
But if both dangling edges are selected,
the solution has $\Theta(d^2)$ more edges compared to the solution when no dangling edge is selected.
But surprisingly for $\EQ{k(d+1)}$ nodes a realization can be constructed.
So we use them for our construction.
\begin{lemma}\label{lem:opt:all:equalities}
  There is a function $f:\SetN^2\to\SetN$ such that the following holds.
  For arbitrary $k \ge 1$ we can
  $B$-realize $\EQ{k(d+1)}$
  with arbitrary penalty $\beta$
  by simple graphs
  using $f(k, \beta)$ vertices
  of degree at most $\max B+2$.
\end{lemma}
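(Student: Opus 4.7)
The plan is to build $\EQ{k(d+1)}$ by combining $k$ copies of the internal $\EQ{d+1}$ realization from \cref{lem:opt:all} with coupling and balancing gadgets, producing a full realization whose solution size is constant across both accepting configurations.

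First, I would place $k$ modified copies $G_1, \ldots, G_k$ of the internal $\EQ{d+1}$ realization. Each $G_i$ consists of two vertices $u_i, v_i$ joined by $a$ common $\HWeq[2]{2}$ helpers (with their internal penalty boosted to $\beta + \O(1)$), and $u_i$ carries the $d+1$ external dangling edges of the $i$-th block. Each $G_i$ has two internally consistent states, ``on'' (all its $d+1$ dangling edges selected, so $\deg(u_i)=a+d+1\in B$) and ``off'' (none selected, so $\deg(u_i)=a\in B$); in both states the count of selected internal edges of $G_i$ is identical.

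Second, to force all $k$ blocks to be simultaneously ``on'' or ``off,'' I would synchronize them through a high-girth propagation chain analogous to \cref{lem:opt:all}. Concretely, I would introduce auxiliary vertices linking $v_1, \ldots, v_k$ cyclically, whose degree constraints (with label $B$) mirror the state of each block; using the gap $d>1$, any mismatch between adjacent blocks forces a long alternating chain of unselected edges, and choosing the girth of the coupling to be at least $\beta$ provides the required penalty on inconsistent configurations.

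Third, to convert the internal realization into a full realization (so that the target size $\alpha$ is the same whether all $k(d+1)$ dangling edges are selected or none), I would attach to each $G_i$ a balancing gadget that contributes $d+1$ selected internal edges in the ``off'' state and zero in the ``on'' state. This is implemented via $d+1$ pairs of $\HWeq[2]{1}$ couplers (from \cref{lem:opt:all:negation}) that tie each external dangling edge of $G_i$ to a complementary internal edge, forcing exactly one edge of each pair to be selected in any valid extension. Summing these contributions over all blocks and coupling gadgets yields an efficiently computable $\alpha$.

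The main obstacle is to show that no mixed solution reaches the target $\alpha$. Solutions deviating from a globally consistent accepting state -- some $G_i$ ``on'' and others ``off,'' partial selections on a single $u_i$, or violations of a $\HWeq[2]{2}$ or $\HWeq[2]{1}$ constraint -- must lose at least $\beta$ edges compared to the optimum. The synchronization chain handles cross-block deviations, while the gap-based argument reused from \cref{lem:opt:all} handles within-block deviations; fusing these into a uniform penalty $\beta$ while preserving the $\max B+2$ degree bound, simplicity, and $B$-homogeneity requires careful bookkeeping of the interaction between the balancing couplers, the $\HWeq[2]{2}$ helpers, and the high-girth synchronization structure.
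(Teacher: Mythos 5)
Your high-level ingredients match the paper's: internal $\EQ{d+1}$ realizations from \cref{lem:opt:all}, $\HWeq[2]{1}$ negation nodes to balance the edge count between the all-selected and none-selected states, a connectivity structure to enforce cross-block consistency, and penalties absorbing deviations. However, two steps have genuine gaps.

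First, your synchronization mechanism does not work as described. In the internal $\EQ{d+1}$ realization of \cref{lem:opt:all}, the vertex $v_i$ has degree exactly $a$ and all its incident edges are selected in \emph{both} the ``on'' and ``off'' states of the block; it carries no information about which state the block is in, so linking the $v_i$ cyclically cannot detect a mismatch between blocks. Exhibiting a gadget that compares the states of two all-or-nothing blocks is precisely the crux of the lemma, and you do not give one. The paper avoids block-wise decomposition entirely: it takes a \emph{connected} $d$-regular bipartite graph on $k(d+1)+k(d+1)$ nodes of type $\EQ{d+1}$, subdivides every edge by a $\HWeq[2]{1}$ node, hangs one dangling edge on each left node, and pads the right nodes to degree $d+1$ with $k$ extra $\EQ{d+1}$ nodes. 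In any mixed configuration where all left nodes are individually valid, connectivity guarantees some right node sees both a selected and an unselected subdivided edge and is therefore invalid. The subdivision also gives the balancing for free (each subdivided edge contributes exactly one selected half either way), whereas your balancing couplers, as described, sit \emph{on} the dangling edges, which would make them internal edges and destroy the realization interface.

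Second, your penalties are quantitatively too small. You boost the helpers to penalty $\beta+\O(1)$, but to rule out mixed solutions one must argue that a \emph{single} invalid node compensates for every extra edge the deviation might pick up elsewhere in the gadget, which can be as many as $\Theta(k(d+1)^2)$; the paper accordingly uses penalty $\beta+2k(d+1)^2$ on every node. With $\O(1)$ slack the final inequality $\abs{S}\le\alpha-\beta$ does not follow. You flag this bookkeeping as delicate but do not resolve it, and with the stated constants it fails.
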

\begin{proof}
  Construct a $d$-regular connected bipartite graph with $k(d+1)$ nodes of type $\EQ{d+1}$ on each side.
  Let $L$ be the set of nodes on the left side and $R$ of the right side.
  Subdivide each edge by placing one $\HWeq[2]{1}$ node on each edge.
  Add one dangling edge to each node in $L$.
  On the right side we introduce $k$ new nodes of type $\EQ{d+1}$.
  Make each node in $R$ adjacent to exactly one of these nodes such that every node has degree $d+1$.
  Finally, replace all nodes by their (internal) realization that are guaranteed to exist as $d(d+1)$ is always even.
  Use $\beta + 2k(d+1)^2$ as penalty for all nodes.
  The target value is the sum of all target values of the realizations
  plus $k(d+1)$.

  If all $k(d+1)$ dangling edges are selected,
  we can select all $k(d+1)^2$ edges on the left side of the graph
  and none of the edges on the right side as optimal solution.
  The converse holds for the case when no dangling edges are selected.
  No solution can select more edges than this,
  as each additionally selected edge,
  leads to one additional node in an invalid state,
  whose penalty can compensate that edge.

  Now assume $0<\ell<k(d+1)$ dangling edges are selected for some solution.
  Assume for contradictions sake that all nodes in $L$ are in a valid state.
  But then there is a node in $L$ that is in the 0-state and one node is in the $d+1$-state.
  As our bipartite graph is connected,
  there are two such nodes that are additionally connected to the same node in $R$ (by two subdivided edges).
  This implies that this node on the right side is not in a valid state.
  Hence, its penalty can even compensate the selection of all edges.
\end{proof}
Now we have everything together to realize almost arbitrary relations and prove the main theorem for the maximization version.
See \cref{fig:opt:all:example} for an example of the construction. 
\begin{proof}[Proof of \cref{thm:opt:realization}]
  We use a slight variation of our construction for the decision version in \cref{thm:dec:realization}
  and hence the construction in \cite{CurticapeanM16}
  as we cannot realize $\EQ{k}$ for arbitrary even $k$.
  We resolve this issue by using nodes with larger degree
  and add nodes that take care of the additional edges.

  Let $R = \{x_1, \dots, x_r\} \subseteq \SetB^e$ be a relation,
  $c_R \in 2\SetN$ be the constant such that
  for all $x \in R$ we have $\hw(x)=c_R$,
  and let $h \deff e - c_R$
  be the number of 0s of each $x_i$.
  There is a unique integer $\ell>0$ such that $2(\ell-1)(d+1) < h+1 \le 2\ell(d+1)$.
  Let $s \deff 2\ell(d+1)-h$.
  We define the graph realizing $R$ as follows:
  \begin{enumerate}
    \item Create nodes $o_k$ for all $k\in [e]$ and assign the relation $\HWeq{1}$ to them.
    Their degree is determined later, and by \cref{lem:opt:all:negation} not relevant for us.

    \item Create nodes $a_j$ for all $j \in [s]$ with relation $\HWeq{1}$.

    \item For all $i \in [r]$:
    \begin{enumerate}
      \item Create a node $v_i$ with relation $\EQ{2\ell(d+1)}$.
      \item Make $v_i$ adjacent to $a_j$ for all $j \in [s]$.
      \item Let $O_i = \{n^{(i)}_1, \dots, n^{(i)}_{h}\} = \{ k \in [e] \mid x_i[k] = 0 \}$
      and connect $v_i$ to $o_{n^{(i)}_j}$ for all $j \in [h]$.
    \end{enumerate}
  \end{enumerate}
  To get a $B$-homogeneous graph, we replace the nodes $v_i$ by the realization from \cref{lem:opt:all:equalities} which is possible since $h+s = 2\ell(d+1)$.
  All $\HWeq{1}$ nodes, i.e.\ $a_i$ and $o_k$, are replaced by their realization from \cref{lem:opt:all:negation}.
  This is possible as $s+e = 2\ell(d+1)-h+e = 2\ell(d+1)+c_R$ is an even number.
  For all realizations we define the penalty to be $\beta$.
  Observe that we can see the nodes with $\HWeq{1}$ as internal realizations.
  That is the target value for their realization does not include the incident edges in the count.
  Then the target value is the sum of all target values of the realizations
  plus $c_R$ to account for the dangling edges.

  When the selected dangling edges correspond to some element $x_i \in R$,
  we select all edges incident to $v_i$
  and the extension to the realizations of the nodes.
  Then all nodes are in a valid state.
  There is no larger solution, as then one of the $\HWeq{1}$ nodes would be in an invalid state.
  Further, for the $\EQ{2\ell(d+1)}$ nodes it makes no difference
  if all incident edges are selected or not, as they are realized.
  As the number of selected dangling edges is always the same,
  the solution size is always the same.

  Now assume we are given a solution, where the dangling edges do not corresponds to some $x\in R$.
  We show that this solution must be smaller by at least $\beta$ compared to the optimal solution size.
  We can assume that all $v_i$ nodes are in a valid state.
  Otherwise we would directly lose a factor of $\beta$ for each of these node in an invalid state,
  as these are realized.
  Assume there is no $v_i$ where all dangling edges are selected.
  Then all nodes $a_j$ are in an invalid state,
  as each $a_j$ is connected to all $v_i$.
  Hence, we lose a factor of at least $\beta$.
  If there is more than one $v_i$ where all incident edges are selected,
  then by the same argument all $a_j$ are in an invalid state
  as they are connected to at least two edges
  and we lose $\beta$.
  Hence, there must be exactly one $v_i$ where all incident edges are selected.
  Then all nodes $a_j$ are in a valid state.
  Let $O'$ be the set of the remaining nodes adjacent to $v_i$ in the solution.
  By design, this corresponds to a $x\in R$ with $x[k]=0$ iff $k\in O'$.
  Let $O''$ be the nodes incident to a dangling edge in the solution.
  Assume there is a $o \in O' \cap O''$.
  Then this node would be incident to two edges and we lose $\beta$.
  The same argument holds for nodes $o \notin O'\cup O''$.
  Hence, $O'$ is the complement of $O''$ and the dangling edges correspond to an element of $R$,
  a contradiction.
  Thus, some node must be in an invalid state
  and the solution is smaller by at least $\beta$.
\end{proof}

\section{Counting Version}\label{sec:count}
From a certain perspective the optimization version can be seen as a relaxation of the decision version:
The assumption $\min B >0$ is dropped
while still assuming $\maxgap B >1$.
For the counting version we now even drop this last assumption
such that there might be no gap at all in $B$.
Thus the only polynomial-time solvable cases for the counting version are $B=\{0\}$ and $B=\emptyset$ with one and zero solutions, respectively.
This implies that we additionally must realize equality relations.
Surprisingly this also reduces to realizing \HWeq[1]{1} nodes in the end,
i.e.\ forcing edges.

We use the Holant framework and lemmas and definitions analogous to those from \cite{CurticapeanM16}.
A signature graph $\Omega$ is a graph with weights $w_e$ for all edges $e$
and all vertices are labeled by \emph{signatures} $f_v:\SetB^{I(v)} \to \mathbb{Q}$,
which are rational functions on the incidence vector $I(v)$ of the edges adjacent to $v$.
We define $\Hol{\Omega}$ to be the quantity
\begin{align*}
	\sum_{x\in \{0,1\}^{E(\Omega)}} \prod_{e\in x} w_e \prod_{v\in V(\Omega)} f_v(x|_{I(v)}).
\end{align*}
The Holant framework can be seen as a natural generalization of \GenFac.
If each signature $f_v$ is a symmetric Boolean function
and each edge weight is 1,
then it is exactly \CountGenFac.
If additionally each vertex has signature $\HWin{B}$,
this corresponds to \CountBFactor.
\begin{definition}[$\hol{F}$]
	If $F$ is a set of rational functions,
	we say that $\hol{F}$ is the set of all Holant problems
	where the signature graph has signatures only from $F$.
\end{definition}
\begin{definition}[Gate]
	A gate is a signature graph $\Gamma$, possibly containing a set $D\subseteq E(\Gamma)$ of dangling edges, all of which have edge weight 1. The signature realized by $\Gamma$ is the function $\Sig(\Gamma):\SetB^D\to \mathbb{Q}$ that maps an assignment of dangling edges $x\in \SetB^{D}$ to 
	\begin{align*}
	\Sig(\Gamma,x) =
	\sum_{y\in \SetB^{E(\Gamma)\setminus D}}
		\left(
			\prod_{e\in E(\Gamma)} w(e)
			\prod_{v\in  V(\Gamma)} f_v\left ((x\cup y)|_{I(v)}\right )
		\right)
	\end{align*}
\end{definition}
	Note that unless mentioned otherwise,
	we restrict ourselves to signature graphs with unit edge weights and hence they are usually omitted.

In essence, gates in the Holant framework play the role of realizations in the previous sections. 
Given these definitions, we are now ready to state our main theorem,
which can then be used to prove \cref{thm:lower:counting}.
\begin{theorem} \label{thm:count:bfr-to-gen-matching}
	For all fixed, finite $B\subseteq \SetN$ with $B\neq\{0\}$
	there is a $f:\SetN\to\SetN$ such that the following holds.
	Let $G=(V_S\dotcup V_C, E)$ be an instance of \CountBFR
	with a path decomposition of width $\pw$
	such that $\Delta^* = \max_{\text{bag } X} \sum_{v \in X\cap V_C} \deg(v)$.
	Then there is a $\Ostar{f(\Delta^*)}$ time Turing reduction from \CountBFR to \CountBFactor
	such that for every constructed instance of \CountBFactor pathwidth and cutwidth increase at most by $f(\Delta^*)$.
\end{theorem}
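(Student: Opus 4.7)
The plan is to follow the same two-stage pattern used for the decision and optimization versions: build \emph{gates} (the counting analogue of realizations, phrased in the Holant framework already introduced) for the symmetric relations appearing at complex nodes, then substitute these gates for the complex nodes of the $\CountBFR$ instance. For the Turing reduction, it suffices that the $\Sig(\Gamma)$ of the substituted gate equals the desired relation up to a nonzero rational multiplier that we can compute and divide out at the end. Throughout, every complex node has degree at most $\Delta^\ast$ by construction of the $\CountBFR$ instance from Section~\ref{sec:lower}, so each gate we insert will have size bounded by a function of $\Delta^\ast$; by the same bag-replacement argument used in the proof of Theorem~\ref{thm:lower:dec}, both pathwidth and cutwidth grow additively by at most $f(\Delta^\ast)$.

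The easy half of the proof handles the case $\maxgap B > 1$. There, the gadgets from Section~\ref{sec:dec} (in particular the $B$-realizations of $\HWeq[k]{1}$, $\EQ{k}$ and, via Theorem~\ref{thm:dec:realization}, of every even symmetric relation) realize the target relation exactly: every bad assignment of the dangling edges admits \emph{zero} completions in the gadget, while every good assignment admits exactly the same nonzero number of completions, so the overall Holant value of the substituted graph equals the Holant value of the original instance times a uniform factor. This immediately yields the reduction in this regime, with no interpolation and only a single oracle call.

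The genuinely new work is the case $\maxgap B \le 1$, which includes the troublesome sets such as intervals $\{\ell,\ldots,r\}$ where the decision version is polynomial-time solvable and no forcing gadget is available. Here the plan is to invoke polynomial interpolation in the style of Valiant~\cite{Valiant79}, as previewed in Section~\ref{sec:intro}. The strategy is to construct a one-parameter family of simple $B$-homogeneous gates $\Gamma_k$ (for $k=0,1,\dots,K$) whose signatures satisfy
\[
 \Sig(\Gamma_k) \;=\; \sum_{j=0}^{d} c_j(k)\, \HWeq[d]{j},
\]
where $d$ is the number of dangling edges and $c_j(k)$ is a computable expression in $k$. Replacing a single complex node $v$ of degree $d$ by $\Gamma_k$ yields a $\CountBFactor$ instance whose value is $\sum_j c_j(k)\, N_j$, with $N_j$ the number of extensions of any fixed weight-$j$ assignment of the dangling edges at $v$. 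Running the oracle on $\Gamma_0,\ldots,\Gamma_d$ produces a linear system in the unknowns $N_0,\ldots,N_d$; inverting it (once the matrix $[c_j(k)]$ is shown to be nonsingular) recovers every $N_j$ and therefore lets us evaluate the Holant with \emph{any} symmetric signature at $v$, in particular $\HWin{B}$ itself, $\HWeq{1}$, and the equality gates needed to mimic the proof of Theorem~\ref{thm:dec:realization}. Iterating, we realize every even-weight relation that appears as an $R_v$ in a $\CountBFR$ instance.

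The main obstacle, and where the bulk of the work sits, is designing the family $\Gamma_k$ so that the interpolation matrix $[c_j(k)]_{j,k}$ is invertible while keeping $\abs{V(\Gamma_k)}$ bounded by $f(\Delta^\ast)$ and keeping $\Gamma_k$ simple and $B$-homogeneous. The natural choice is to take a small base gadget with $d$ dangling edges and ``pump'' it by chaining $k$ copies of a $B$-homogeneous subgadget whose transfer matrix has distinct eigenvalues; the eigenvalue structure forces $c_j(k)$ to be a linear combination of $k$th powers of those eigenvalues, and distinctness of eigenvalues turns the interpolation matrix into a Vandermonde-type matrix that is invertible over $\mathbb{Q}$. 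The delicate point is proving that for every fixed $B \neq \{0\}$ such a subgadget with distinct eigenvalues exists; this is where we rely on the assumption $B\neq \{0\}$, since it guarantees at least one nontrivial way for edges to be selected through the subgadget, allowing the required spectral separation. With those gates in hand, the overall reduction makes $O(\Delta^\ast)$ oracle calls per complex node, each to a $\CountBFactor$ instance whose pathwidth and cutwidth exceed those of the original by at most $f(\Delta^\ast)$, yielding the stated $\Ostar{f(\Delta^\ast)}$ Turing reduction.
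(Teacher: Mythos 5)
There is a genuine gap, on two fronts. First, your case split does not cover the theorem's full range. The realizations of \cref{sec:dec} that you invoke for the ``easy half'' require \emph{both} $\maxgap B>1$ \emph{and} $0\notin B$ (\cref{thm:dec:realization}); a set such as $B=\{0,3\}$ has $\maxgap B>1$ but admits no decision realization, so it falls into neither of your halves. Even where those gadgets exist, a decision realization only guarantees existence or non-existence of a completion, not that every accepted assignment of the dangling edges has the \emph{same number} of completions, which is what a counting (gate) substitution needs; this has to be argued separately. Second, and more fundamentally, the relations $R_v$ at the complex nodes of the \BFR instance from \cref{sec:lower} are \emph{not symmetric} (they distinguish, e.g., the top edge from the bottom edge and require selected parallel edges to precede unselected ones). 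Your interpolation family with $\Sig(\Gamma_k)=\sum_j c_j(k)\,\HWeq[d]{j}$ can only ever recover the Hamming-weight aggregates $N_j=\sum_{\abs{D'}=j}(\text{\#extensions of }D')$, which suffices to emulate an arbitrary \emph{symmetric} signature at $v$ but not $R_v$ itself. Your escape hatch --- first interpolate to obtain $\HWeq{1}$ and $\EQ{k}$ signatures and then rebuild $R_v$ as in \cref{thm:dec:realization} --- is where the real work lies, and the proposal leaves it unexamined: for sets like $B=\{0,2,4\}$ parity kills the natural pumping gadget (the odd-weight completion count vanishes identically, so the obvious transfer matrix is degenerate), and showing that \emph{some} invertible interpolation system exists for every $B\neq\{0\}$ requires a case analysis you have not supplied.

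For comparison, the paper sidesteps the non-symmetry problem by first invoking the Curticapean--Marx gate construction (\cref{lem:count:simulate-relations}), which realizes \emph{arbitrary} even relations exactly using only $\HWeq{1}$ vertices and edge weights from $\{-1,\tfrac12,1\}$, and then spends the interpolation budget on two narrower tasks: stripping the edge weights (\cref{lem:count:hol-edge-weights-to-hol1}) and converting $\HWeq{1}$ vertices into $B$-homogeneous forced-edge gadgets (\cref{lem:count:hol1-to-hol01,lem:count:hol-to-genfactor-forced,lem:count:genfactor-forced-to-genfactor-forced2,lem:count:genfactor-forced2-to-genfactor}), with the parity-degenerate cases handled explicitly (e.g.\ interpolating on $P_2/P_0$ rather than $P_1/P_0$ when $B$ contains only even numbers). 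Your transfer-matrix/distinct-eigenvalue idea is exactly the mechanism the paper formalizes in \cref{prop:count:distinct-ratio}, so the interpolation machinery is right in spirit; what is missing is the reduction of the problem to a point where that machinery applies uniformly to every $B\neq\{0\}$.
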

We postpone the proof of this reduction
and first show the lower bound for the counting version.
\begin{proof}[Proof of \cref{thm:lower:counting}]
	Let $H$ be a \CountBFR instance with $n_H$ nodes,
	pathwidth $\pw_H$,
	and $\Delta^* = \max_{\text{bag }X}\sum_{v\in V_C\cap X}\deg(v)$.
	Then by \cref{thm:count:bfr-to-gen-matching},
	we get polynomially many instances of \CountBFactor
	such that $n_G\in (n_H+\Delta^*+f(\Delta^*))^{\O(1)}$
	with the $f$ from the above theorem
	where $\pw_G \leq \pw_H +f(\Delta^*)$.
	Now, suppose that we can solve \CountBFactor in the claimed running time.
	Then we can solve \CountBFR in time
	\begin{align*}
		 (\max B+1-\epsilon)^{\pw_G} \cdot n_G^{\O(1)} 
		&\le (\max B+1-\epsilon)^{\pw_H+f(\Delta^*)}
		\cdot (n_H+f(\Delta^*))^{\O(1)} \\
		&\le (\max B+1-\epsilon)^{\pw_H+f(\Delta^*)}
		\cdot f'(\Delta^*)
		\cdot n_H^{\O(1)} \\
		&\le (\max B+1-\epsilon)^{\pw_H+f''(\Delta^*)}
		\cdot n_H^{\O(1)}
	\end{align*}
	for some $f'$ and $f''$.
	But this immediately contradicts \#SETH by \cref{corr:lower:construction:parsimonious}.
\end{proof}
We can think of \CountBFR as a Holant problem
where the allowed signatures are either $\HWin{B}$ or restricted even relations.
We first use a lemma from \cite{CurticapeanM16} to realize these relations through nodes with signature $\HWeq{1}$.
Since their constructions are in the perfect matching setting,
they can equivalently be seen as gates that use vertices with signature $\HWeq{1}$.
After using this lemma to reduce from \CountBFR to a Holant problem,
we give a chain of reductions (see \cref{fig:count:chain-of-reductions})
that ends at \CountBFactor and preserves the pathwidth up to an additive constant. 

\begin{figure}[t]
	\centering
	\includegraphics{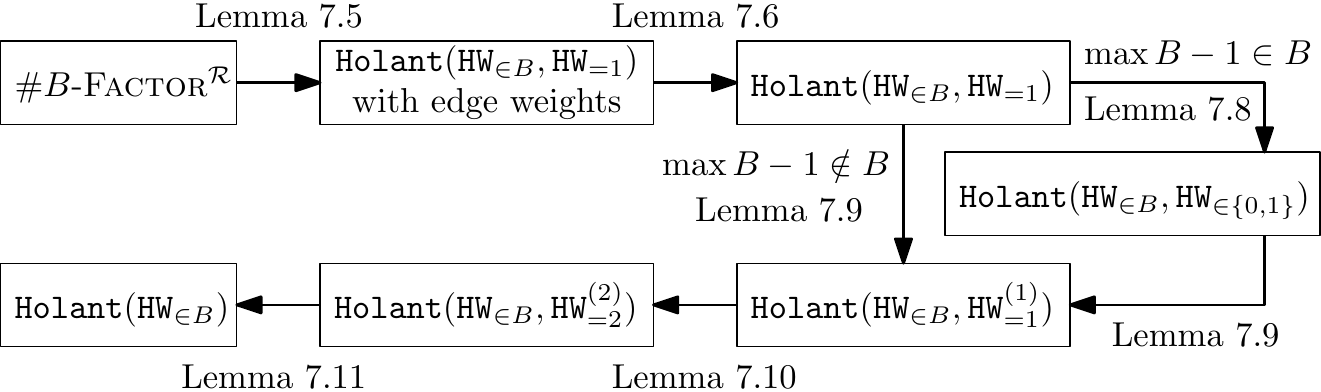}
	\caption{The chain of reductions that starts with \CountBFR
	and ends at \CountBFactor (i.e.\ $\hol{\HWin{B}}$).
	Arrows show the direction of Turing or many-one reductions.}
	\label{fig:count:chain-of-reductions}
\end{figure}

\begin{lemma}[Lemma~3.3 from \cite{CurticapeanM16}]\label{lem:count:simulate-relations}
	Let $R\subseteq \SetB^{e}$ be an even relation.
	Then there is a gate $\Gamma$ that can realize it such that
	\begin{itemize}
		\item $\Gamma$ only uses vertices with signature $\HWeq{1}$.
		\item $\Gamma$ has $\O(\abs{R}\cdot e)$ vertices and edges.
		\item $\Gamma$ has maximum degree at most $\abs{R}+\O(1)$.
		\item the edges of $\Gamma$ have weights from $\{-1,\frac{1}{2}, 1\}$.
		\item given $R$ as input, we can construct $\Gamma$ in time $\O(\abs{R}\cdot e)$.
	\end{itemize} 
\end{lemma}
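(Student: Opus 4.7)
The plan is to mirror the structure of the $B$-realization from \cref{thm:dec:realization} while replacing the $B$-labeled internal vertices by gadgets built exclusively from $\HWeq{1}$ vertices and weighted edges. Since $\HWeq{1}$ is the perfect-matching signature on a vertex, a gate on $\HWeq{1}$-vertices evaluates to the weighted sum over perfect matchings of the internal part that agree with a fixed dangling-edge pattern; our task is to tune the edge weights so that this sum equals the indicator of $R$ evaluated at the pattern.

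First I would construct an auxiliary equality gate $\EQ{k}$ of arbitrary arity $k \ge 2$ using only $\HWeq{1}$ vertices and edges whose weights lie in $\{-1,\tfrac{1}{2},1\}$. The idea is a small ``fan'' of $\HWeq{1}$ vertices glued to a central backbone such that, when all $k$ dangling edges are selected, exactly one internal perfect matching of weight $1$ extends the pattern; when none are selected a second extension of weight $1$ is available; and every intermediate pattern has its weighted extensions cancel to $0$. The $\tfrac{1}{2}$ factors account for symmetric internal choices and the $-1$ factors supply the cancellation on intermediate Hamming weights. This is essentially a restricted matchgate interpolation applied to symmetric signatures.

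Given $\EQ{k}$, the translation of the decision-version construction is direct. I introduce $e$ port vertices $o_1,\ldots,o_e$ with signature $\HWeq{1}$ and one dangling edge each, a pool of $s$ parity-pad vertices $a_j$ of signature $\HWeq{1}$, and for every $x_i \in R$ a selector vertex $v_i$ connected through an $\EQ{h_i + s}$ gate to the ports $\{o_k : x_i[k] = 0\}$ of cardinality $h_i = e - \hw(x_i)$ together with all pads $a_j$. The pad size $s$ is chosen so that $h_i + s$ is even simultaneously for every $i$, which is possible precisely because $R$ is even and hence all $h_i$ share a common parity. In any consistent global matching exactly one selector $v_i$ can be ``fully on'': the pads enforce uniqueness (each $a_j$ must be matched to a unique $v_i$), and the port connections then force the dangling pattern to equal $x_i$. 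All other configurations fail the $\HWeq{1}$ constraint at some $o_k$ or $a_j$ and contribute $0$ after the equality-gate cancellations.

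The main obstacle is verifying the $\EQ{k}$ gate itself: one must check by direct matchgate combinatorics that the weighted sum over its internal perfect matchings gives $1$ on the all-ones and all-zeros dangling patterns and exactly $0$ on every proper nonempty subset, using only weights from $\{-1,\tfrac{1}{2},1\}$. Once this equality lemma is in hand, the quantitative bounds follow by inspection: the construction has $e + s = e + O(1)$ port/pad vertices and $|R|$ selector packages of size $O(e)$ each, yielding $O(|R| \cdot e)$ vertices and edges in total; the maximum degree is attained at the pads $a_j$ (each connected to one $v_i$ per tuple) and equals $|R| + O(1)$; and the whole construction is produced in time $O(|R| \cdot e)$ by a single pass over $R$.
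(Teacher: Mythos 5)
Your outer architecture is exactly the one in the cited construction (and the one this paper re-uses for the decision version in \cref{thm:dec:realization}): ports $o_1,\dots,o_e$ with one dangling edge each, a constant number of parity pads $a_j$, and one equality-gate ``selector'' per tuple $x_i\in R$ attached to the zero-positions of $x_i$ and to all pads. The uniqueness argument via the pads, the parity bookkeeping ($h_i\equiv e\pmod 2$ because $R$ is even, so a single pad count $s\in\{1,2\}$ makes every $h_i+s$ even), and the size/degree/time accounting are all correct, provided the equality gadget has $\O(k)$ vertices and constant internal degree.

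The genuine gap is the equality gate itself, which you explicitly defer --- but that gate \emph{is} the technical content of the lemma; once it exists, the rest is a two-line verification. Two concrete problems. First, your claim that $\EQ{k}$ can be built from $\HWeq{1}$ vertices for \emph{arbitrary} $k\ge 2$ is false: in any gate whose vertices all carry $\HWeq{1}$, a dangling pattern $x$ extends to a nonzero term only if $|V(\Gamma)|-\hw(x)$ is even, so the support of the realized signature has constant parity; $\EQ{k}$ accepts weights $0$ and $k$, hence only even $k$ is realizable. This does not sink the construction (your padding guarantees every arity $h_i+s$ used is even), but the statement as you wrote it cannot be proved. Second, for even $k$ you give no construction and no cancellation computation; ``restricted matchgate interpolation applied to symmetric signatures'' is not an argument. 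The actual gadget is nontrivial: $\EQ{2}$ is just a single unit-weight edge between two $\HWeq{1}$ vertices with one dangling edge each (patterns $11$ and $00$ get weight $1$, pattern $10$ has no perfect-matching extension), but already $\EQ{4}$ requires the weights $-1$ and $\tfrac12$ to make the weight-$2$ dangling patterns, which \emph{do} admit internal perfect matchings, cancel to $0$; general even $k$ is then obtained by chaining these constant-degree pieces so that the internal degree stays $\O(1)$ and the size stays $\O(k)$ (a star-shaped ``central backbone'' of degree $k$ would violate the $|R|+\O(1)$ degree bound when $e\gg|R|$). Until you exhibit such a gadget and verify the cancellation identity on every intermediate Hamming weight, the proof is incomplete at its only non-routine step.
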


\begin{lemma}\label{lem:count:BFR-to-hol-edge-weights}
	\CountBFR reduces in polynomial time to $\hol{\HWin{B},\HWeq{1}}$
	with edge weights from $\{-1,\frac{1}{2}, 1\}$
	such that the pathwidth and cutwidth increase at most by a constant $c_{\Delta^*}$ depending only on $\Delta^*$,
	where ${\Delta^*}$ is the maximum total degree of the complex nodes in any bag of the path decomposition,
	and the maximum degree is unaffected, or increases to 6.
\end{lemma}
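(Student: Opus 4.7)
The plan is to apply \cref{lem:count:simulate-relations} locally at every complex node of the \CountBFR instance, and then massage the resulting gates so that degree and decomposition widths stay under control.

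Concretely, fix an input $G=(V_S\dotcup V_C,E)$ of \CountBFR together with a path decomposition of width $\pw$. For every complex node $v\in V_C$ with even relation $R_v\subseteq\SetB^{\deg(v)}$, invoke \cref{lem:count:simulate-relations} to obtain a gate $\Gamma_v$ with $\O(|R_v|\cdot\deg(v))$ vertices and edges, only $\HWeq{1}$-vertex signatures, maximum internal degree $|R_v|+\O(1)$, and edge weights in $\{-1,\tfrac{1}{2},1\}$. Inserting $\Gamma_v$ at $v$ (identifying the dangling edges of $\Gamma_v$ with the edges of $G$ incident to $v$) produces a signature graph $\Omega$ whose vertex signatures are only $\HWin{B}$ (at the unchanged simple nodes) and $\HWeq{1}$ (inside the gates). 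Since $\Sig(\Gamma_v)=R_v$, summing out the internal assignments of every gate exactly recovers the contribution of the corresponding relation, so $\Hol{\Omega}$ equals the number of \CountBFR solutions of $G$, and the edge weights of $\Omega$ lie in $\{-1,\tfrac{1}{2},1\}$ as required.

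For the decomposition bound, observe that an even relation $R_v$ on $\deg(v)$ bits satisfies $|R_v|\le 2^{\deg(v)}$, hence $|V(\Gamma_v)|\in\O(\deg(v)\cdot 2^{\deg(v)})$. We build the new path decomposition by replacing every complex node $v$ appearing in a bag $X_t$ with all vertices of $\Gamma_v$, and by splitting $t$ into a short path of intermediate bags that introduce the internal edges of $\Gamma_v$ one by one. Each new bag grows by at most $\sum_{v\in X_t\cap V_C}|V(\Gamma_v)|$; the constraint $\sum_{v\in X_t\cap V_C}\deg(v)\le\Delta^*$ together with convexity of $x\mapsto x\cdot 2^x$ yield the bound $c_{\Delta^*}\in\O(\Delta^*\cdot 2^{\Delta^*})$, which depends only on $\Delta^*$. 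An analogous accounting controls the cutwidth.

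To enforce the maximum-degree bound we post-process each gate $\Gamma_v$ with a degree-reduction step: every $\HWeq{1}$ vertex of arity larger than $6$ is replaced by a small arrangement of $\HWeq{1}$ vertices of arity at most $6$, linked by edges of the allowed weights, so that the overall signature and the Holant value are preserved. This blows up each gate by only a constant factor, so the pathwidth and cutwidth bound is unaffected. Because the simple nodes are never touched, the maximum degree of $\Omega$ is exactly the maximum of the original simple-node degrees and $6$, matching the claim.

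The main obstacle is the correctness of the degree-reduction gadget. A naive split of a $\HWeq[d]{1}$ signature along an auxiliary edge does \emph{not} preserve it: summing the two $\HWeq{1}$ pieces over the auxiliary edge produces either an ``exactly two inputs selected'' relation (when the edge is unselected) or a ``no input selected'' relation (when selected), rather than ``exactly one''. The fix is to route the ``chosen input'' information through a tree-like arrangement of bounded-arity $\HWeq{1}$ nodes and to use the $-1$ and $\tfrac{1}{2}$ edge weights to cancel out the spurious parity contributions. Once this gadget is verified, the remainder of the argument is routine bookkeeping on the path and cut decompositions.
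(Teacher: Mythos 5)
Your overall route is the paper's: insert the Curticapean--Marx gates from \cref{lem:count:simulate-relations} at the complex nodes, observe that the Holant of the resulting signature graph equals the \CountBFR count, and charge the growth of each bag (and each cut) to $\sum_{v\in X\cap V_C}\deg(v)\le\Delta^*$, which gives a width increase bounded by a function of $\Delta^*$ alone. That part is fine.

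The gap is the degree-reduction gadget, which you yourself flag as ``the main obstacle'' and then leave unverified. You correctly diagnose that a naive two-way split of a $\HWeq[d]{1}$ node fails (the auxiliary edge unselected forces one selection on \emph{each} side, and selected forces none), but the fix you sketch --- a ``tree-like arrangement'' whose spurious contributions are cancelled using the $-1$ and $\tfrac12$ edge weights --- is never constructed, and no cancellation is needed. The correct and much simpler repair is the purely combinatorial chain already used for the decision version (\cref{lem:dec:generalNegation}, \cref{fig:dec:generalNegation}): split the inputs among several low-arity pieces and join consecutive pieces through an arity-$2$ $\HWeq{1}$ node, which acts as a negation. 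If a dangling edge of one piece is selected, its connector edge is off, so the middle node forces the next connector on, which saturates the neighbouring piece and blocks all its dangling edges; if no dangling edge of any piece were selected, the middle node would need both of its edges on, a contradiction. This realizes $\HWeq[k]{1}$ exactly, with a unique internal extension of weight $1$ for every accepted boundary assignment, so the Holant value is preserved without any weighted cancellation. Until you either adopt this chain or actually exhibit and verify your weighted gadget, the proof is incomplete at precisely the step the lemma's degree bound depends on.
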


\begin{proof}
We replace the complex nodes with the appropriate gates from \cref{lem:count:simulate-relations}.
This gives us an instance of $\hol{\HWin{B},\HWeq{1}}$ with edge weights from $\{-1,\frac{1}{2}, 1\}$.
Further, this only alters the pathwidth and cutwidth by an additive factor depending only on $\Delta^*$.
However, the construction via \cref{lem:count:simulate-relations} can result in large degree $\HWeq{1}$ nodes. 
These can be replaced by smaller degree $\HWeq{1}$ nodes
(see the construction for the decision version in \cref{fig:dec:generalNegation}).
\end{proof}
Since \CountBFactor does not have edge weights,
removing them is the next step in our chain of reductions.
We do this through polynomial interpolation. 
This technique was first used by Valiant \cite{Valiant79}. 
We will also use this idea further down our chain of reductions, 
and give a general statement \cref{prop:count:distinct-ratio} to aid us in this process. 
The idea is that we can recover the coefficients of a polynomial $P(\cdot)$ 
if we know the value of $P(x)$ for sufficiently many $x$. 
We represent the solution of one problem as the value of a polynomial $P(\cdot)$ and 
represent the value of a second problem as some function $f(P)$ of the polynomial itself. 
Then, we can recover the value of the second problem through 
sufficiently many invocations to an oracle of the first problem. 
This gives us a Turing reduction from the second problem to the first. 

\begin{lemma}\label{lem:count:hol-edge-weights-to-hol1}
	There is a polynomial time Turing reduction from $\hol{\HWin{B},\HWeq{1}}$ with edge weights from $\{-1,\frac{1}{2}, 1\}$
	to $\hol{\HWin{B},\HWeq{1}}$ on unweighted graphs such that the
	pathwidth and cutwidth increase only by a fixed constant and the maximum degree increases to at least three.
\end{lemma}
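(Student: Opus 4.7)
The plan is polynomial interpolation, a technique going back to Valiant~\cite{Valiant79}. Weight-$1$ edges need no modification, so the task is to eliminate the ``bad'' edges of weight $-1$ and $\tfrac{1}{2}$; let $m$ denote their total number in the input signature graph $\Omega$. My approach is to replace each bad edge $e=(u,v)$ by an unweighted gadget $\Gamma_e(k)$ parameterized by a positive integer $k$, constructed as a chain of $k$ constant-size cells assembled from $\HWeq{1}$ and $\HWin{B}$ nodes with unit edge weights. The cells are designed so that $\Gamma_e(k)$, viewed as a two-dangling-edge gate, has a signature whose $(\text{not selected}, \text{selected})$ entries are polynomials in $k$, and so that $\Gamma_e(k)$ contributes only an additive constant to the pathwidth and cutwidth of the ambient decomposition (chains of bounded-size cells admit path decompositions that ``slide'' along the chain, independent of $k$).

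After substituting $\Gamma_e(k)$ for every bad edge we obtain an unweighted signature graph $\Omega'(k)$. Expanding the Holant sum over the $2^m$ joint selection patterns of the bad-edge dangling edges and factoring out the contribution of each $\Gamma_e(k)$ shows that $\Hol{\Omega'(k)}$ equals a polynomial $P(k)$ of degree $d=O(m)$ whose coefficients are linear combinations of the partial Holants corresponding to each selection pattern. A fixed linear combination of these coefficients equals the target $\Hol{\Omega}$: namely, the one whose weighting is $\prod_i w_i^{x_i}$ over the selection pattern $x\in\SetB^m$. Hence I would query the $\hol{\HWin{B},\HWeq{1}}$ oracle on $\Omega'(k)$ for $d+1$ distinct positive integers $k$, recover $P$ by Lagrange interpolation, and extract $\Hol{\Omega}$ by inverting a Vandermonde-type system. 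Polynomially many oracle calls suffice, and every $\Omega'(k)$ differs from $\Omega$ only by constant-size local substitutions, so pathwidth and cutwidth increase by a fixed constant and the maximum degree stays bounded by a constant (three suffices when the cells are chosen minimally).

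The main obstacle is to design the basic cell so that (i)~the effective signatures produced by $k$-fold composition span a two-dimensional space as $k$ varies (otherwise the linear system to recover $\Hol{\Omega}$ is degenerate), and (ii)~a specific coefficient of $P$ really equals the desired weighted Holant. Both conditions reduce to a routine algebraic check on the $2{\times}2$ transfer matrix $T$ of a single cell, acting on the two-dimensional state ``is the current dangling edge selected or not''. It suffices to choose a cell whose $T$ has two eigenvalues with a non-trivial ratio, for then $T^k$ sweeps out sufficiently many effective signatures to make the Vandermonde system invertible; the target coefficient of $P$ then decouples cleanly. A small amount of case analysis on the structure of $B$ (for instance, whether $B$ contains small values that allow very simple cells, or whether a slightly larger cell with a $\HWin{B}$ hub is needed) is the only nontrivial part of the construction.
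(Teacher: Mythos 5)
Your plan breaks down when both weight types $-1$ and $\tfrac12$ appear, as the lemma allows. Write $F(u,v)=\sum_{z}c_z\,u^{p(z)}v^{q(z)}$, where $z$ ranges over selection patterns of the $m$ bad edges, $c_z$ is the partial Holant of the remainder of $\Omega$, and $p(z),q(z)$ count the selected $(-1)$- and $(\tfrac12)$-edges in $z$. The quantity you need is $F(-1,\tfrac12)$. But your oracle queries only yield values $F(r_-(k),r_+(k))$, where $r_-(k),r_+(k)$ are the signature ratios of your two unweighted gadget families --- that is, values of $F$ along a single curve in the nonnegative quadrant (signature entries of unweighted gadgets are nonnegative counts). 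Since $(-1,\tfrac12)$ does not lie on that curve and $F$ is an otherwise arbitrary bivariate polynomial, $F(-1,\tfrac12)$ is simply not determined by the data: no linear combination of query answers, and no Vandermonde inversion, can recover it. Condition~(ii) therefore does not reduce to a check on a single $2\times 2$ transfer matrix; the obstruction is genuinely two-dimensional.

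The paper's proof sidesteps this by introducing two \emph{independent} formal variables $x,y$ for the weights $-1$ and $\tfrac12$, interpolating the resulting bivariate polynomial over a product grid $\{2^i\}\times\{2^j\}$ (using a separate constant-size path gadget for each grid value), and finally evaluating the reconstructed polynomial at $(x,y)=(-1,\tfrac12)$. Two independent parameters supply the missing degree of freedom; a workable variant of your idea would be two \emph{sequential} univariate interpolations (first eliminate all $(-1)$-edges, then all $(\tfrac12)$-edges), but a single joint parameter $k$ as you describe does not suffice. A secondary imprecision: a chain of $k$ constant-size cells has signature entries growing like the $k$th power of a transfer matrix, so $\Hol{\Omega'(k)}$ is not a degree-$\O(m)$ polynomial in $k$; the correct framing is that it is a polynomial in the signature ratio $P_1(k)/P_0(k)$, which must take sufficiently many distinct values (cf.\ \cref{prop:count:distinct-ratio}).
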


\begin{proof}
	We follow the arguments from Theorem~4.1 of \cite{CurticapeanM16}.
	We replace $-1,\frac{1}{2}$ weight edges with weights $x,y$
	and treat \Hol{\Omega} as a polynomial in $x,y$.
	Then, we can recover this polynomial by interpolating with various values of $x,y$.
	We can choose the values for $x,y$ from $\{2^i \mid i\in \SetN \}$ through the constructions from \cite{CurticapeanM16}:
	We replace the edges with weight $2^i$ by a path with $2i+3$ edges
	where edges number $3,5,\dots,2i+1$ have weight $2$ and the other edges have weight $1$.
	Then we replace these edges with weight $2$ by two parallel edges
	and divide both edges by two nodes.
	All new nodes are labeled with $\HWeq{1}$.
	Observe that this construction increases pathwidth and cutwidth only by a constant factor
	and the degree increases to at least $3$.
\end{proof}

	\begin{proposition}\label{prop:count:distinct-ratio}
	Suppose we have two non-zero sequences $\{A_n\}_{n\in \SetN}, \{B_n\}_{n\in \SetN}$ that are related as
	\begin{align*}
		\begin{bmatrix}
			A_n\\
			B_n
		\end{bmatrix}
		= M \begin{bmatrix}
			A_{n-1}\\
			B_{n-1}
		\end{bmatrix}
		= M^{n} U \qquad \text{, where }U= \begin{bmatrix}
			A_{0}\\
			B_{0}
		\end{bmatrix}
	\end{align*}
	and $M$ is a symmetric and invertible $2\times 2$ matrix such that
	$U$ is not an eigenvector of $M$.
	Then $\{\frac{B_n}{A_n}\}_{n\in \SetN}$ is a sequence which does not contain any repetitions. 
\end{proposition}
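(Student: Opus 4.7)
The first step is to diagonalize $M$. Since $M$ is real symmetric, it admits an orthogonal eigenbasis $(v_1,v_2)$ with real eigenvalues $\lambda_1,\lambda_2$, both nonzero because $M$ is invertible. I would first rule out $\lambda_1=\lambda_2$: in that case $M=\lambda_1 I$, every nonzero vector would be an eigenvector, and the hypothesis on $U$ is violated. Writing $U=c_1 v_1+c_2 v_2$ in the eigenbasis, the same hypothesis forces $c_1,c_2\neq 0$, since otherwise $U$ would be parallel to $v_1$ or $v_2$.

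Next, expanding $M^n U=c_1\lambda_1^n v_1+c_2\lambda_2^n v_2$ and writing $v_i=(\alpha_i,\beta_i)^{T}$ gives
\[
    A_n=c_1\lambda_1^n\alpha_1+c_2\lambda_2^n\alpha_2,\qquad
    B_n=c_1\lambda_1^n\beta_1+c_2\lambda_2^n\beta_2.
\]
Setting $t_n\deff(\lambda_2/\lambda_1)^n$ and dividing numerator and denominator by $\lambda_1^n$ yields
\[
    \frac{B_n}{A_n}=\frac{c_1\beta_1+c_2\beta_2\, t_n}{c_1\alpha_1+c_2\alpha_2\, t_n},
\]
a M\"obius transformation in the parameter $t_n$ whose discriminant equals $c_1 c_2(\beta_1\alpha_2-\alpha_1\beta_2)$. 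This discriminant is nonzero: $c_1,c_2\neq 0$ by the previous step, and $\beta_1\alpha_2-\alpha_1\beta_2\neq 0$ because $v_1,v_2$ are linearly independent. Hence the M\"obius map is injective on the points it is defined on, so distinct values of $t_n$ produce distinct ratios $B_n/A_n$.

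It then remains to show that $t_n=\rho^n$, with $\rho\deff\lambda_2/\lambda_1$, takes pairwise distinct values. Since $\lambda_1,\lambda_2$ are real, nonzero, and distinct, $\rho$ is a nonzero real number with $\rho\neq 1$; among real numbers, the only other root of unity is $-1$. The main obstacle is therefore ruling out $\rho=-1$: with that exceptional value the ratios would alternate between only two values and the conclusion would fail. I expect this case to be excluded through the hypothesis structure in the application (for instance, a positivity/sign argument on the sequences $A_n,B_n$ produced by the interpolation construction, where both entries of $U$ and all entries of $M$ are nonnegative, forcing $\rho>0$). Once $\rho\neq \pm 1$ is established, $n\mapsto\rho^n$ is strictly monotone on $\SetN$, and combined with the injectivity of the M\"obius map this yields the claim.
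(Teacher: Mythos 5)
Your argument is essentially correct and takes a genuinely different route from the paper's. The paper argues by contradiction: a repetition $\frac{B_n}{A_n}=\frac{B_{n+r}}{A_{n+r}}$ gives $M^{n+r}U=kM^nU$, hence $kU=M^rU$ by invertibility, so $U$ is an eigenvector of $M^r$; it then asserts that the two independent eigenvectors of the symmetric matrix $M$ are the \emph{only} eigenvectors of $M^r$, forcing $U$ to be an eigenvector of $M$. Your route via the spectral decomposition and the injectivity of the nondegenerate M\"obius map in $t_n=(\lambda_2/\lambda_1)^n$ is more explicit, and it buys you something important: it exposes exactly where both arguments break, namely $\lambda_2=-\lambda_1$. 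In that case $t_n=(-1)^n$ repeats, and correspondingly $M^r$ is a scalar matrix for even $r$, so the paper's claim that $v_1,v_2$ are the only eigenvectors of $M^r$ is simply false there.

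The residual case you flag is therefore not a defect of your write-up but of the proposition as stated: taking $M=\begin{smallbmatrix}1&0\\0&-1\end{smallbmatrix}$ (symmetric, invertible) and $U=\begin{smallbmatrix}1\\1\end{smallbmatrix}$ (not an eigenvector) gives $M^nU=\begin{smallbmatrix}1\\(-1)^n\end{smallbmatrix}$, two non-zero sequences with $\frac{B_n}{A_n}=(-1)^n$, which repeats. The statement needs the additional hypothesis that the two eigenvalues do not sum to zero (equivalently $\operatorname{tr}M\neq 0$, equivalently no power of $M$ is scalar), and the paper's own proof silently assumes this when it passes from ``eigenvector of $M^r$'' to ``eigenvector of $M$''. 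Your instinct to discharge $\rho=-1$ in the applications is the right fix: every matrix used downstream, $\begin{smallbmatrix}1&1\\1&0\end{smallbmatrix}$, $\begin{smallbmatrix}k&k\\k&k-1\end{smallbmatrix}$, $\begin{smallbmatrix}F_0&F_1\\F_1&F_2\end{smallbmatrix}$, and $\begin{smallbmatrix}F_0&F_2\\F_2&F_4\end{smallbmatrix}$, has positive trace, so $\lambda_1+\lambda_2>0$ and the exceptional case never arises. With that hypothesis added (or checked per application), your proof is complete, and unlike the paper's it handles the degenerate case explicitly rather than overlooking it.
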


\begin{proof}
	Suppose not. Then we have $\frac{B_n}{A_n} = \frac{B_{n+r}}{A_{n+r}}$ for some positive integers $n,r$. Then since $\{A_n\},\{B_n\}$, are both non-zero sequences we have
	\begin{align*}
		k\begin{bmatrix}
			A_{n}\\
			B_{n}
		\end{bmatrix} = 
		\begin{bmatrix}
			A_{n+r}\\
			B_{n+r}
		\end{bmatrix}
	\end{align*}
	for some $k$. Then we have
	\begin{align*}
		k\begin{bmatrix}
			A_{n}\\
			B_{n}
		\end{bmatrix} &= kM^n U = M^{n+r}U =
		\begin{bmatrix}
			A_{n+r}\\
			B_{n+r}
		\end{bmatrix}
		\qquad
		\overset{M \text{ invertible}}{\implies}
		\qquad kU = M^r U
	\end{align*}
	implying that $U$ is an eigenvector of $M^r$. Now, since $M$ is symmetric, it has two eigenvectors that are linearly independent. Since these are also eigenvectors of $M^r$ and they are linearly independent, these are the only eigenvectors of $M^r$. Thus, $U$ must also be an eigenvector of $M$, giving a contradiction.
\end{proof}
We will see why we need the following lemma in \cref{lem:count:hol-to-genfactor-forced}.
Observe that a $\HWeq{1}$ node can be thought of as a vertex with list $\{1\}$.
To see the problem, consider the case when the list does not have gaps, for example like in $\{0,1,2,3,4\}$.
Then it is not immediate how to get a node whose list has 1 but does not have 0.
We avoid this by doing interpolation.

\begin{lemma}\label{lem:count:hol1-to-hol01}
	There is a polynomial time Turing reduction from $\hol{\HWin{B},\HWeq{1}}$ to $\hol{\HWin{B},\HWin{\{0,1\}}}$
	such that pathwidth and cutwidth increase at most by 1.
\end{lemma}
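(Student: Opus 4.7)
The plan is to use polynomial interpolation with a chain-shaped gadget built entirely from $\HWin{\{0,1\}}$ vertices. For a parameter $k \ge 0$ and an $\HWeq{1}$ vertex $v$ of degree $d$, let $G_k$ consist of a vertex $u$ (which takes the place of $v$ and inherits its $d$ external edges) together with a simple path $u - p_1 - p_2 - \cdots - p_k$, where $u$ and every $p_i$ carries the signature $\HWin{\{0,1\}}$. A routine counting argument for binary strings on the $k$ chain edges that avoid two consecutive ones shows that the signature realized by $G_k$ on the $d$ edges of $u$ equals $F_{k+2}\cdot \HWeq{0} + F_{k+1}\cdot \HWeq{1}$, where $F_i$ denotes the $i$-th Fibonacci number.

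Given an input instance $\Omega$ with $n$ many $\HWeq{1}$ vertices, I would query the $\hol{\HWin{B},\HWin{\{0,1\}}}$ oracle on the $n+1$ instances $\Omega_0, \dots, \Omega_n$, where $\Omega_k$ is obtained from $\Omega$ by simultaneously replacing every $\HWeq{1}$ vertex with $G_k$. Multilinearity of Holant in the vertex signatures yields
\[
  \Hol{\Omega_k} = \sum_{j=0}^{n} C_j\, F_{k+2}^{\,n-j}\, F_{k+1}^{\,j},
\]
where $C_j$ collects the contributions of all ways of interpreting each original $\HWeq{1}$ vertex as either $\HWeq{0}$ or $\HWeq{1}$, keeping exactly $j$ of them equal to $\HWeq{1}$; in particular, $C_n = \Hol{\Omega}$, since only the all-ones interpretation contributes to $C_n$. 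Dividing each equation by $F_{k+2}^{\,n}$ and setting $y_k = F_{k+1}/F_{k+2}$ turns the $n+1$ queries into evaluations of a degree-$n$ polynomial in $y$ at the pairwise distinct points $y_0, \dots, y_n$ (the ratios of consecutive Fibonacci numbers are all distinct, converging to the golden ratio alternately from above and below). Vandermonde inversion then recovers $C_n$, and hence $\Hol{\Omega}$.

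For the structural bounds, once the bag of the path decomposition containing $u$ has processed all of $u$'s original edges, I would introduce $p_1$, add the edge $(u,p_1)$, forget $u$, then introduce $p_2$, add $(p_1,p_2)$, forget $p_1$, and so on, so that the bag size grows by at most one throughout. For a linear layout, I would insert $p_1, \dots, p_k$ in consecutive positions immediately after $u$; every cut internal to the chain then gains exactly one edge compared with the original cut at $v$, while the cut just after $p_k$ coincides with the original cut just after $v$. The main obstacle is ensuring the interpolation is compatible with the cutwidth bound: the more obvious construction with $k$ parallel $\HWin{\{0,1\}}$ pendants at $u$ realizes the same linear combinations of $\HWeq{0}$ and $\HWeq{1}$ but inflates the cut immediately after $u$ by $k$, whereas the chain spreads its $k$ extra edges along consecutive positions of the layout, guaranteeing that no single cut absorbs more than one new edge.
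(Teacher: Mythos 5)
Your proposal is correct and follows essentially the same route as the paper: replace each $\HWeq{1}$ vertex by an $\HWin{\{0,1\}}$ vertex with an attached path of $\HWin{\{0,1\}}$ vertices, observe that the gadget's signature is a Fibonacci-weighted combination of $\HWeq{0}$ and $\HWeq{1}$, and interpolate over the path length to extract the all-$\HWeq{1}$ term. The only (immaterial) difference is that you verify distinctness of the interpolation points directly from the behaviour of consecutive Fibonacci ratios, whereas the paper derives it from its general Proposition~\ref{prop:count:distinct-ratio} applied to the matrix $\begin{smallbmatrix}1&1\\1&0\end{smallbmatrix}$.
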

\begin{proof}
	
	Suppose we have an instance $\Omega$ of $\hol{\HWin{B},\HWeq{1}}$.
	We construct a new signature grid $\Omega'$ by replacing all $\HWeq{1}$ vertices with $\HWin{\{0,1\}}$ vertices.
	Let this set of $\HWin{\{0,1\}}$ vertices be $U$.
	We note that $\Hol{\Omega'}$ can be thought of as the summation of $2^{\abs{U}}$ Holants, where each vertex in $U$ is assigned to either $\HWeq{0}$ or $\HWeq{1}$.
	Define $A_i$ as the partial summation of the Holants on $\Omega'$ where $i$ vertices from $U$ are assigned $\HWeq{1}$
	and the rest are assigned $\HWeq{0}$.
	That is, we have that
	\(
		\Hol{\Omega'} = \sum_{i=0}^{\abs{U}} A_i.
	\)
	
	Now for an integer $d\geq 1$, we construct a new graph $\Omega'_d$ by attaching a length-$d$ path of $\HWin{\{0,1\}}$ to every node in $U$.
	Let $P_1(d)$ be the number of solutions when the leading edge of the path is selected,
	and similarly let $P_0(d)$ be the number of solutions when the leading edge of the path is not selected.
	Then we have
	\[
		\Hol{\Omega'_d} = \sum_{i=0}^{\abs{U}} A_i (P_0(d))^i (P_0(d)+P_1(d))^{\abs{U}-i}
		= (P_0(d))^{\abs{U}}\sum_{i=0}^{\abs{U}} A_{\abs{U}-i} \biggl( \frac{P_0(d)+P_1(d)}{P_0(d)}\biggr)^{\!\!i}.
	\]
	We claim that by interpolation on $d$, we can recover the values of $A_i$ for any $i$.
	In particular, $A_{\abs{U}}$ will correspond to $\Hol{\Omega}$.
	To show this, we only need to argue that
	$\frac{P_0(d)+P_1(d)}{P_0(d)}$ will take at least $\abs{U}$ unique values,
	and that these are computable in polynomial time.
	Now, since $P_0,P_1$ can be defined for any integral path length $d$, we have the relation
	\begin{equation*}
		P_0(d) = P_1(d-1) + P_0(d-1)
		\qquad
		\text{and}
		\qquad
		P_1(d) = P_0(d-1).
	\end{equation*}
	Then, applying \cref{prop:count:distinct-ratio} with $M=\begin{smallbmatrix}
		1 & 1\\
		1& 0
	\end{smallbmatrix}$ and $U=\begin{smallbmatrix}
		1\\
		1
	\end{smallbmatrix}$,
	and since $P_0(d), P_1(d) >0$ for every $d$, we get the required statement.
	Since we are only attaching a path to certain vertices,
	this does not alter the pathwidth or cutwidth by more than 1. 
\end{proof}

\begin{lemma}\label{lem:count:hol-to-genfactor-forced}
	For every finite list $B\subseteq \SetN$,
	there is a polynomial time many-one reduction from
	\begin{itemize}
		\item $\hol{\HWin{B},\HWin{\{0,1\}}}$ if $\max B -1 \in B$, 
		\item $\hol{\HWin{B},\HWeq{1}}$ if $\max B -1 \not\in B$, 
	\end{itemize}
	to $\hol{\HWin{B},\HWeq[1]{1}}$ increasing pathwidth and cutwidth by a fixed constant
	and increasing the degrees of the $\HWin{\{0,1\}}$ or $\HWeq{1}$ nodes by at most $\max B$.
\end{lemma}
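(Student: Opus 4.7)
The plan is to use a single, very simple gadget in both cases, obtained by attaching $\max B - 1$ forced edges to a vertex with list $B$. Concretely, for each helper node $u$ (of signature $\HWin{\{0,1\}}$ in Case 1, or $\HWeq{1}$ in Case 2) of degree $d$ with incident edges $e_1,\dots,e_d$, I would replace $u$ by the gadget $G_u$ consisting of a single vertex $v$ with signature $\HWin{B}$ that keeps the $d$ original edges as dangling edges, plus $\max B - 1$ new edges connecting $v$ to fresh leaves $\ell_1,\dots,\ell_{\max B-1}$, each carrying the signature $\HWeq[1]{1}$. Since each $\HWeq[1]{1}$ leaf has degree one, its signature forces the incident edge to be selected in every surviving assignment.

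Next I would verify that $G_u$ realizes the intended signature. For any assignment $x\in\SetB^E$ of the original edges, the contribution of $G_u$ to $\Hol{\Omega'}$ is the sum over extensions to the $\max B-1$ new edges; only the all-selected extension survives, and it contributes $\HWin{B}(w + \max B - 1)$, where $w$ is the number of $e_i$ selected by $x$. Since $B\subseteq[0,\max B]$, this factor can be nonzero only for $w\in\{0,1\}$: $w=1$ always works because $\max B\in B$, while $w=0$ works iff $\max B-1\in B$. Hence $G_u$ exactly realizes $\HWin{\{0,1\}}$ in Case 1 and $\HWeq{1}$ in Case 2, and the Holants of the transformed and original instances coincide, making the reduction many-one.

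Finally I would check the parameter bookkeeping. The gadget adds $\max B - 1$ new vertices and edges per replaced helper node, all a constant since $B$ is fixed. Any path or linear decomposition of the input instance extends to one of the output by inserting, whenever the bag (or cut) contains $v$, the $\max B-1$ leaves $\ell_i$ and their incident forced edges and then immediately forgetting them; this increases pathwidth and cutwidth by at most the fixed constant $\max B-1$. The degree of $v$ becomes $d+\max B-1$, so the degrees of the former helper nodes grow by at most $\max B$, as required, while degrees of the $\HWin{B}$ vertices in the original graph are untouched.

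The only subtlety — and the reason two cases appear in the statement — is exactly the side condition $\max B - 1 \in B$ versus $\max B - 1\notin B$: this binary choice is what determines whether the gadget picks up the value $w=0$ or not, and hence whether it realizes $\HWin{\{0,1\}}$ or $\HWeq{1}$. No interpolation is needed here (unlike in \cref{lem:count:hol1-to-hol01}), because the gadget's signature is exactly the target signature, not a linear combination of several target values, so the entire argument reduces to the Hamming-weight accounting above.
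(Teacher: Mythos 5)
Your proposal is correct and matches the paper's proof essentially verbatim: both replace each $\HWeq{1}$ (or $\HWin{\{0,1\}}$) node by a $\HWin{B}$ node with $\max B - 1$ pendant $\HWeq[1]{1}$ leaves, observe that the resulting gadget accepts local weight $w=1$ always and $w=0$ precisely when $\max B - 1\in B$, and note that the pathwidth/cutwidth and degree changes stay within the claimed bounds. Your Hamming-weight accounting just spells out what the paper leaves implicit.
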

\begin{proof}
	Consider any instance $\Omega$ of $\hol{\HWin{B},\HWeq{1}}$ or $\hol{\HWin{B},\HWin{\{0,1\}}}$.
	It suffices to alter the nodes labeled with $\HWeq{1}$ or $\HWin{\{0,1\}}$:
	We replace each of them by a new node with relation $\HWin{B}$
	and force $\max B-1$ edges by adding the same number of pendant nodes with relation $\HWeq[1]{1}$.
	Depending on whether $\max B-1 \in B$ or not, we get a $\HWin{\{0,1\}}$ or a $\HWeq{1}$ node.
	Since we are adding $\max B-1$ pendant nodes to certain vertices, the pathwidth increases at most by 1 and the cutwidth increases by at most $\max B-1$.  
\end{proof}
At this stage, we only need to realize $\HWeq[1]{1}$ nodes.
We will see that this is possible, but that there are some caveats.
Like before, in the case where $B$ only has even integers,
some parity issues force us to be able to realize only an even number of $\HWeq[1]{1}$ nodes.
This can be seen as realizing $\HWeq[2]{2}$ nodes instead. 

\begin{lemma}\label{lem:count:genfactor-forced-to-genfactor-forced2}
	For every fixed, finite list $B\subseteq \SetN$,
	there is a polynomial time many-one reduction
	from $\hol{\HWin{B},\HWeq[1]{1}}$
	to $\hol{\HWin{B},\HWeq[2]{2}}$
	increasing pathwidth and cutwidth only by a fixed constant
	and leaving the maximum degree unaffected,
	or increasing it to at most $\max B$.
\end{lemma}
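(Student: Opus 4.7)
The plan is to eliminate every $\HWeq[1]{1}$ pendant by pairing pendants two at a time and replacing each pair with a single $\HWeq[2]{2}$ vertex. The key observation is that a $\HWeq[2]{2}$ vertex $w$ with its two incident edges going to vertices $u_1$ and $u_2$ forces both edges to be selected, contributing exactly one forced incident edge to each of $u_1$ and $u_2$---precisely the same effect as two $\HWeq[1]{1}$ pendants attached to $u_1$ and $u_2$. Removing a pair of pendants and inserting such a $\HWeq[2]{2}$ between their parents therefore preserves the Holant value exactly. We would carry out the pairing in two stages. First, a local step: for each $\HWin{B}$ vertex $u$, pair up pendants attached to $u$ by introducing $\HWeq[2]{2}$ vertices whose two edges both attach to $u$; after this every vertex has at most one leftover pendant. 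Second, we pair these leftovers by sweeping the given path decomposition (or linear layout, for cutwidth) from left to right while maintaining a single ``carry'' vertex $u^*$ that is kept alive until it meets the next leftover, at which point we insert one fresh $\HWeq[2]{2}$ vertex with edges to $u^*$ and the new parent.

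If the total pendant count $N_p$ is odd, the sweep would leave one unpaired pendant and we need a parity fix. The handshake lemma applied to the original instance shows that the sum of $\deg_S(v)$ over all $\HWin{B}$-vertices $v$ is congruent to $N_p$ modulo $2$. Hence, if $B$ contains only even integers this sum is always even and $\Hol{\Omega}=0$ whenever $N_p$ is odd; in that case we output an arbitrary trivial $0$-Holant instance. Otherwise $B$ contains an odd element $o$, and before running the sweep we prepend a dummy component consisting of a fresh $\HWin{B}$ vertex with $o$ attached $\HWeq[1]{1}$ pendants. This dummy vertex has forced degree $o\in B$, contributes Holant weight exactly $1$, and flips the parity of leftover pendants so that the pairing can finish.

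For the width bounds, the carry sweep is designed so that at any moment at most one vertex is kept alive outside its original range, and at most one new $\HWeq[2]{2}$ vertex is introduced in the bag where a pairing takes place, so both pathwidth and cutwidth grow by at most an additive constant. The greedy left-to-right pairing also guarantees that the pairs are non-nested along the linear order, so at any cut only one active pair contributes additional crossing edges. The degree of each $\HWin{B}$ vertex is unchanged by both local and cross-vertex pairing, since a pendant edge is merely traded for an edge to the new $\HWeq[2]{2}$ vertex, and the single dummy vertex introduced for parity has degree $o\le\max B$. The most delicate part of the argument is verifying that the carry-sweep invariant really keeps the width increase bounded by a fixed constant regardless of how many pendants are present; once this is confirmed, the reduction is clearly polynomial-time many-one, and the lemma follows.
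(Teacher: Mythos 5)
Your proposal is correct and follows essentially the same route as the paper: handle parity by adding a star of $o$ pendants on a fresh $\HWin{B}$ vertex when $B$ contains an odd element $o$ (and observe via the handshake lemma that the Holant is $0$ when $B$ is all-even and the pendant count is odd), then pair the pendants consecutively along the path decomposition / linear layout so that only one carry is alive at a time. The only cosmetic difference is that you merge each pair into a single $\HWeq[2]{2}$ vertex, while the paper turns both pendants into $\HWeq[2]{2}$ nodes joined by an edge; the two gadgets are interchangeable.
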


\begin{proof}
	We claim that without loss of generality, we can assume that we have an even number of $\HWeq[1]{1}$ nodes. Suppose not. Then we consider two cases.
	\begin{itemize}
		\item If $B$ has some odd number $k$,
		then we add a separate component with one $\HWin{B}$ node connected to $k$ $\HWeq[1]{1}$ nodes.
		This brings the total number of $\HWeq[1]{1}$ nodes to be even.
		As this component is a star graph,
		pathwidth does not change.
		Cutwidth increases by at most $\max B$.
		\item If $B$ only has even numbers, then we claim that $\Hol{G}$ is zero.
		Consider the sum of the degrees of the nodes in any solution.
		The nodes with relation $\HWeq[1]{1}$ contribute an odd number,
		whereas the other nodes with relation $\HWin{B}$ can only contribute an even number.
		Thus, their sum must be odd.
		This contradicts the fact that the sum of the degrees of the vertices in any graph must be even.
	\end{itemize}

	\newextmathcommand{\intr}{\operatorname{int}}
	Let $X_1,X_2,X_3,\dots$ be the bags of some nice path decomposition of $G$,
	i.e.\ there is especially a unique introduce node for every vertex.
	Let $U=\{v_1,\dots,v_{\abs{U}}\}$ be the set of $\HWeq[1]{1}$ nodes of $G$,
	such that $\intr(i) < \intr(i+1)$ for all $i$,
	where $\intr(i)$ is the index of the bag introducing $v_i$.
	For all $i\in[\abs{U}]$,
	replace $v_i$ by a $\HWeq[2]{2}$ node keeping the incident edge.
	Then connect $v_{2i-1}$ and $v_{2i}$ by an edge.
	We add $v_{2i-1}$ to the bags
	$X_{\intr(2i-1)+1},\dots,X_{\intr(2i)}$.
	By assumption, there is no $v_\ell$ introduced in these bags
	and hence, the pathwidth increases by at most 1.
	Further, the degree of the graph does not increase
	as we can assume w.l.o.g.\ that it is already at least 2.

	The result directly transfers to cutwidth
	when using a linear layout instead.
\end{proof}
The following lemma completes the chain of reductions by handling $\HWeq[2]{2}$ relations. For the proof, we need to consider different cases depending on whether $B$ contains $0$, $1$, or some odd number. 

\begin{lemma}\label{lem:count:genfactor-forced2-to-genfactor}
	Let $B\subseteq \SetN$ be a fixed finite set.
	There is a polynomial-time Turing reduction from $\hol{\HWin{B},\HWeq[2]{2}}$ to $\hol{\HWin{B}}$ increasing pathwidth and cutwidth only by a fixed constant and leaving the max degree unaffected, or increasing it to $2\max B+6$.
\end{lemma}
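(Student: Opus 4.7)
The plan is to apply polynomial interpolation with a recursive gadget, in the spirit of Valiant's technique already invoked in \cref{lem:count:hol-edge-weights-to-hol1} and \cref{lem:count:hol1-to-hol01}. Given an instance $\Omega$ of $\hol{\HWin B,\HWeq[2]{2}}$ with $k$ nodes of type $\HWeq[2]{2}$, I would construct, for each integer $d\ge 1$, a gate $\Gamma_d$ built entirely from $\HWin B$-vertices, and write $[a_d,b_d,c_d]$ for its symmetric arity-$2$ signature. Replacing every $\HWeq[2]{2}$-node of $\Omega$ by $\Gamma_d$ produces an instance $\Omega_d$ of $\hol{\HWin B}$ whose Holant expands as
\[
  \Hol{\Omega_d}
  \;=\;\sum_{n_0+n_1+n_2=k} T(n_0,n_1,n_2)\,a_d^{n_0}\,b_d^{n_1}\,c_d^{n_2},
\]
where the coefficients $T(\cdot,\cdot,\cdot)$ do not depend on $d$ and $\Hol\Omega=T(0,0,k)$. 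Calling the oracle for $\hol{\HWin B}$ on $\Omega_d$ for polynomially many values of $d$ then yields a polynomial-size linear system whose solution includes $T(0,0,k)$.

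The gate $\Gamma_d$ itself would be recursive: I fix a small arity-$2$ base gate $\Gamma$ of $\HWin B$-vertices and take $\Gamma_d$ to be $d$ copies of $\Gamma$ concatenated in series by identifying successive dangling edges. Then $[a_d,b_d,c_d]$ is obtained by iterating a fixed transfer matrix $M$ from the base signature, so its entries are linear combinations of the $d$-th powers of the eigenvalues of $M$. As long as $M$ has two distinct non-zero eigenvalues and the initial vector avoids degenerate subspaces---precisely the kind of non-degeneracy captured by \cref{prop:count:distinct-ratio}---standard Vandermonde-type interpolation recovers all $T(n_0,n_1,n_2)$. Since each inserted $\Gamma_d$ is a path of constant-pathwidth base gates attached at a single (replaced) vertex, a nice path decomposition of $\Omega$ can be extended to one of $\Omega_d$ while raising the width by only an additive constant, and the analogous argument works for a linear layout (cutwidth).

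The main technical obstacle is designing the base gate $\Gamma$ so that $M$ has these spectral properties for every admissible $B$. I anticipate a short case analysis on the structure of $B$: whenever $B$ contains elements of both parities, or contains two elements differing by exactly $2$, a two-vertex base gate joined by a few parallel edges already yields distinct non-zero eigenvalues; in the borderline cases (for example $B$ consisting only of consecutive integers starting at~$0$, or $B$ containing only even elements) one attaches a small number of pendants of degree $1$ or $2$ to break the parity or symmetry that would otherwise collapse one eigenvalue to zero or to the other. The degree bound of $2\max B + 6$ then falls out naturally from this blueprint: two central vertices of degree at most $\max B$ contribute $2\max B$, and the additive $6$ absorbs the two dangling edges together with the handful of auxiliary pendants needed to guarantee non-degeneracy of $M$.
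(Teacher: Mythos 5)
Your high-level plan — realize forced edges by a chain gadget of $\HWin{B}$-vertices and recover the target Holant via polynomial interpolation, invoking \cref{prop:count:distinct-ratio} for non-degeneracy — matches the paper's strategy. However, the specific interpolation scheme you set up has a rank deficiency that the paper is designed to avoid. You replace each of the $k$ nodes $\HWeq[2]{2}$ by an arity-$2$ symmetric gate with signature $[a_d,b_d,c_d]$ and aim to recover all $T(n_0,n_1,n_2)$ with $n_0+n_1+n_2=k$. But if $\Gamma_d$ is built by iterating a fixed $2\times2$ transfer matrix $M$ with eigenvalues $\lambda_1,\lambda_2$, then each of $a_d,b_d,c_d$ is a fixed linear combination of $\lambda_1^d$ and $\lambda_2^d$, so every monomial $a_d^{n_0}b_d^{n_1}c_d^{n_2}$ is a fixed linear combination of the $k+1$ quantities $\lambda_1^{d(k-i)}\lambda_2^{di}$ for $i=0,\dots,k$. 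Hence, as $d$ varies, the column of monomials lives in a $(k+1)$-dimensional space: the interpolation matrix has rank at most $k+1$, while you have $\binom{k+2}{2}=\Theta(k^2)$ unknowns $T(n_0,n_1,n_2)$. No choice of base gate, eigenvalue separation, or Vandermonde trickery fixes this; the curve $(a_d,b_d,c_d)$ is one-dimensional and simply cannot support a bivariate interpolation of degree $k$, so $T(0,0,k)$ is not recoverable.

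The paper sidesteps this precisely by keeping the interpolation univariate. In its Cases~1 and~2 the gadget has a \emph{single} dangling edge, realizing $\HWeq[1]{1}$ (not $\HWeq[2]{2}$) — the signature then has only two components $(P_0(d),P_1(d))$, the Holant is a genuine degree-$t$ polynomial in the ratio $P_1(d)/P_0(d)$ with only $t+1$ unknowns, and \cref{prop:count:distinct-ratio} supplies the $t+1$ distinct sample points. Two such gadgets are then plugged into the two ports of each $\HWeq[2]{2}$. Only in Case~3 (when $B$ has no odd element) does the paper use an arity-$2$ chain, and there a parity argument forces $b_d=0$ identically, collapsing the expansion to $\sum_{n_0+n_2=k}T(n_0,0,n_2)\,a_d^{n_0}c_d^{n_2}$ with exactly $k+1$ unknowns — a univariate problem again. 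Beyond this structural gap, you also leave untouched the easy case $0\notin B$ (the paper resolves it without interpolation via \cref{lem:dec:all}), and the ``short case analysis on the structure of $B$'' that you defer is where nearly all the actual work lies: choosing the pendant count $m-2$ when $1\in B$, constructing $\EQ{2}$ via an $(m+1)$-clique minus an edge when $1\notin B$, defining the binomial sums $F_0,F_1,F_2$ and selecting an explicit multiplicity $x\le\max B+1$ to make the transfer matrix invertible and keep $U$ off the eigenvector cone. Without confronting these, the degree bound $2\max B+6$ you propose is unjustified.
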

\begin{proof}
	 If $0 \notin B$,
	 we can use the construction from \cref{lem:dec:all} to get a $\HWeq[2]{2}$ node. 
	 For the case when $0\in B$, we do a case-by-case analysis depending on $B$.
	 In either case, we attach a subgraph with a constant pathwidth and cutwidth to vertices. This does not affect either of them by more than $2\max B+ 6$, a fixed constant. 
	\paragraph*{Case 1: $B$ contains 1}
	Define $m\ge 2$ to be the smallest integer not in $B$.
	Consider the gadget in \cref{fig:count:path-interpolaton}.
	\begin{figure}
		\centering
		\includegraphics{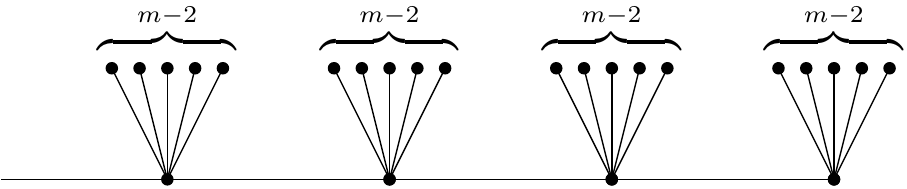}
		\caption{The gadget for case 1: Black nodes are $\HWin{B}$ nodes.}
		\label{fig:count:path-interpolaton}
	\end{figure}
	Suppose there are $d$ such vertices with $m-2$ pendant nodes each.
	Let all of them have the relation $\HWin{B}$.
	Let $P_1(d)$ be the number of solutions of the gadget where the dangling edge is selected in the solution.
	Similarly define $P_0(d)$ when the dangling edge is not selected.
	We claim that the gadget described can be effectively used to force edges,
	i.e.\ a $\HWeq[1]{1}$ node.
	Two such gadgets will give us a $\HWeq[2]{2}$ node.
	Suppose any graph $G$ contains $t$ such gadgets.
	We have
	\begin{align*}
		\Hol{G} = \sum_{i=0}^{t} A_i (P_0(d))^{t-i} (P_1(d))^{i} = (P_0(d))^{t} \sum_{i=0}^{t} A_i  \biggl( \frac{P_1(d)}{P_0(d)}\biggr)^{\!\!i}
	\end{align*}
	where $A_i$ is the number of ways of extending the solution in $G$ when $i$ of the gadgets choose to match their dangling edge. Through standard interpolation techniques, we can recover the $A_i$s, and thus $A_t$ will give us the solution where each gadget behaves like a $\HWeq[1]{1}$. Now, we can replace $\HWeq[2]{2}$ nodes in the $\hol{\HWin{B},\HWeq[2]{2}}$ instance with pairs of $\HWeq[1]{1}$ nodes. 
	
	To argue that we can do the interpolation, we need to show
	that $\frac{P_1(d)}{P_0(d)}$ will take at least $t$ unique values,
	and that these are computable in polynomial time. Since we can define such a gadget for any integer $d$ we have
	\[
		P_0(d)= kP_0(d-1) + kP_1(d-1)
		\qquad
		\text{and}
		\qquad
		P_1(d)= kP_0(d-1) + (k-1)P_1(d-1)
	\]
	for $k=2^{m-2}$.
	We now apply \cref{prop:count:distinct-ratio} with $M=\begin{smallbmatrix}
		k & k\\
		k& k-1
	\end{smallbmatrix}$ and $U=\begin{smallbmatrix}
		k\\
		k
\end{smallbmatrix}$.

\paragraph*{Case 2: $B$ does not contain 1, but contains some odd number} 
	In this case, we first observe that we can realize $\EQ{2}$ nodes.
	Let $m$ be the smallest non-zero number in the list $B$.
	Then we make an $m+1$-clique of vertices with signature $\HWin{B}$ and remove one edge, say $(u,v)$.
	We add one dangling edge each to $u,v$.
	This gives us a $\EQ{2}$ node.
	We construct the gadget in \cref{fig:count:path-interpolaton-general} to get a $\HWeq[1]{1}$ node. 
	Note that any double edges can be removed by placing $\EQ{2}$ nodes on them.

	\begin{figure}
		\centering
		\includegraphics{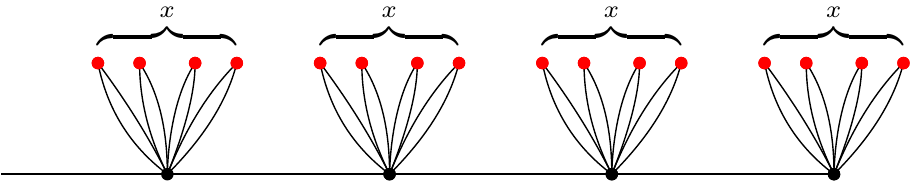}
		\caption{Gadget for case 2:
		Red nodes are $\EQ{2}$ nodes and black nodes are $\HWin{B}$ nodes.}
		\label{fig:count:path-interpolaton-general}
	\end{figure}

	 We will fix the value of $x$ later. Suppose there are $d$ $\HWin{B}$ nodes. Let $P_1(d)$ be the number of solutions of the gadget where the dangling edge is selected in the solution. Similarly define $P_0(d)$ when the dangling edge is not selected. We claim that this can be used to force edges. Suppose any graph $G$ contains $t$ such gadgets. Then like before, we have
	\begin{align*}
		\Hol{G} = \sum_{i=0}^{t} A_i (P_0(d))^{t-i} (P_1(d))^{i} = (P_0(d))^{t} \sum_{i=0}^{t} A_i  \biggl( \frac{P_1(d)}{P_0(d)}\biggr)^{\!\!i}
	\end{align*}
	where $A_i$ is the number of ways of extending the solution in $G$ when $i$ of the gadgets choose to match their dangling edge.
	We need to show that $\frac{P_1(d)}{P_0(d)}$ will take at least $t$ unique values,
	and that these are computable in polynomial time. 
	
	We first define the following quantities. 
	\begin{align*}
		F_0 = \sum_{i:2i\in B} \binom{x}{i} \qquad
		F_1 = \sum_{i:2i+1\in B} \binom{x}{i} \qquad
		F_2 = \sum_{i:2i+2\in B} \binom{x}{i}.
	\end{align*}
	Note, when these quantities are interpreted as a polynomial in $x$,
	each has degree at most $\lfloor \frac{\max B}{2} \rfloor$.
	For $j\in \{0,1,2\}$, $F_j$ can be interpreted as the number of ways a $\HWin{B}$ node can choose $\EQ{2}$ nodes
	given that $j$ of its adjacent edges in the horizontal path are selected.
	Since $B$ contains some odd integer and some even integer (since $0\in B$), we immediately get that $F_0,F_1\geq 1$.
	This ensures that $P_0(d),P_1(d)>0$ for every $d$.
	Then we have that the gadget follows the recurrence relation
	\[
		P_0(d) = F_0P_0(d-1) +F_1 P_1(d-1)
		\qquad
		\text{and}
		\qquad
		P_1(d) = F_1P_0(d-1) + F_2P_1(d-1).
	\]
	To apply \cref{prop:count:distinct-ratio}, we need to show that $M=\begin{smallbmatrix}
		F_0 & F_1\\
		F_1 & F_2
	\end{smallbmatrix}$ is invertible and that $U = \begin{smallbmatrix}
	F_0\\
	F_1
\end{smallbmatrix}$ is not an eigenvector of $M$.
To argue that $M$ is invertible, we can consider 
\(
	F_0F_2 - (F_1)^2= 0
\)
as a polynomial in $x$.
Since it has degree at most $\max B$, it can have at most that many solutions.
We pick $x$ to be the smallest positive integer that is not a solution to the above polynomial.
Then we have that $x\leq \max B +1$.  
Now, to show that $U$ is not an eigenvector of $M$, we show a contradiction. Consider the equation 
\[
	\begin{bmatrix}
		F_0 & F_1\\
		F_1 & F_2
	\end{bmatrix} \begin{bmatrix}
		F_0\\
		F_1
	\end{bmatrix} = \lambda \begin{bmatrix}
	F_0\\
	F_1
  \end{bmatrix}
	\qquad
	\implies
	\qquad
	\begin{matrix}
		(F_0)^2 + (F_1)^2 &= \lambda F_0\\
		(F_0 + F_2)F_1 &= \lambda F_1
	\end{matrix}
\]
Since $F_1>0$, we get $\lambda = F_0+F_2$ from the last equation. This gives
\[
 (F_0)^2+ (F_1)^2 = F_0(F_0+F_2)
 \qquad
 \implies
 \qquad
 (F_1)^2 =F_0F_2.
\]
This is a contradiction since we already chose $x$ such that $F_0F_2 - (F_1)^2 \neq 0.$ 

\paragraph*{Case 3: $B$ does not contain any odd number}
We can use similar arguments as for the previous case,
except that we modify the gadget.
We replace the single edges with double edges to get the gadget from \cref{fig:count:path-interpolaton-even}. 
Like before, we can construct a  $\EQ{2}$ node in this case and we can 
similarly remove any double edges by placing $\EQ{2}$ nodes on them.
\begin{figure}
	\centering
	\includegraphics{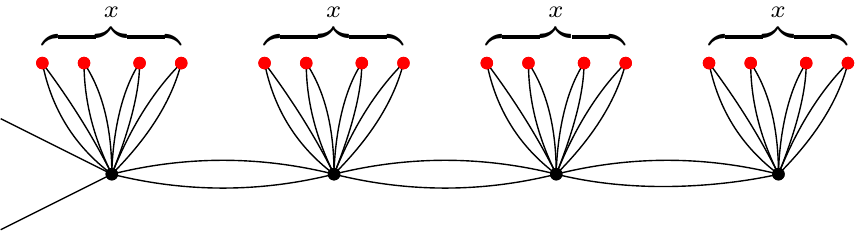}
	\caption{Gadget for case 3:
	Red nodes are $\EQ{2}$ nodes constructed like before and black nodes are $\HWin{B}$ nodes.}
	\label{fig:count:path-interpolaton-even}
\end{figure}

The only thing that needs to be taken care of
is the possibility that a solution might select one of the double edges of the horizontal path.
But this is not possible
since $B$ does not contain any odd number
and thus the terminal node must be incident to an even number of selected edges.
Similarly, there is no solution where exactly one of the dangling edges is selected.

Let $P_2(d)$ be the number of solutions of the gadget where both the dangling edge are selected in the solution. Defining $P_1(d),P_0(d)$ the same way as before, we can argue that $P_1(d)=0$ for any $d$. We can interpolate on $\frac{P_2(d)}{P_0(d)}$. To show that we can do this, define $F_0, F_2$ similarly. Let
\begin{align*}
F_4 &= \sum_{i:2i+4\in B} \binom{x}{i}.
\end{align*}
Then we have
	\[
	P_0(d) = F_0P_0(d-1) +F_2 P_2(d-1)
	\qquad
	\text{and}
	\qquad
	P_2(d) = F_2P_0(d-1) + F_4P_2(d-1).
\]
We can apply the same arguments as before with $M=\begin{smallbmatrix}
	F_0 & F_2\\
	F_2 & F_4
\end{smallbmatrix}$ and $U = \begin{smallbmatrix}
	F_0\\
	F_2
\end{smallbmatrix}$. 
\end{proof}

\begin{proof}[Proof of \cref{thm:count:bfr-to-gen-matching}]
	We prove this through a chain of reductions.
	Given any instance of \CountBFR,
	we can sequentially apply \cref{lem:count:BFR-to-hol-edge-weights,lem:count:hol-edge-weights-to-hol1,lem:count:hol1-to-hol01,lem:count:hol-to-genfactor-forced,lem:count:genfactor-forced-to-genfactor-forced2,lem:count:genfactor-forced2-to-genfactor}
	to get a polynomial number of instances of $\hol{\HWin{B}}$
	such that the pathwidth is affected only by some function of $\Delta^*$.
	As $\hol{\HWin{B}}$ is exactly \CountBFactor the theorem follows.
\end{proof}

\section{Lower Bound when Parameterizing by Cutwidth}\label{sec:lowerCW}
The algorithmic result from \cref{thm:algo:cutwidth:main}
shows that the pathwidth lower bound breaks when parameterizing by cutwidth.
Nevertheless, we can show that this ``improved'' running time is the best we can hope for assuming SETH and \#SETH.
For this we use the same high level ideas Curticapean and Marx presented in Figure~6 of \cite{CurticapeanM16}
where they reduce from \#SAT to computing the Holant and then reduce to counting perfect matchings.
But the construction can also be seen as a modification of our reduction for the pathwidth lower bound.
We again first reduce to the intermediate problem \BFR and then to \BFactor.
By this we can reuse the results of realizing relations that we have seen in the previous sections.

\begin{theorem}\label{thm:lowerCW:satToBFR}
  Let $B \subseteq \SetN$ be a fixed set of finite size.
  Given a CNF-formula $\phi$ with $n$ variables and $m$ clauses.
  We can construct a (simple) \BFR instance $G$ with $\O(nm)$ vertices,
  bounded degree
  and a linear layout of width $\cutw \le n+\O(1)$
  in time linear in the output size.
  Further, the number of solutions for $\phi$ is equal to the number of solutions for $G$.
\end{theorem}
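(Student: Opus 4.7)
The plan is to adapt the grid construction from \cref{thm:lower:satToBFR} to produce a ``thin'' grid rather than one that groups variables. Given a CNF formula $\phi$ with variables $x_1,\ldots,x_n$ and clauses $C_1,\ldots,C_m$, I would build a grid of complex nodes $r_i^j$ for $i\in[n]$ and $j\in[m]$. Nodes in the same row $i$ are linked by horizontal edges, and nodes in the same column $j$ by vertical edges; a selected horizontal edge in row $i$ will represent $x_i=1$, and a selected vertical edge below level $i$ in column $j$ will certify that some variable $x_k$ with $k\le i$ already satisfies clause $C_j$. Boundary nodes with signatures such as $\HWeq{0}$ and $\HWeq{1}$ close each column, forcing every clause to start unsatisfied at the top and to be satisfied at the bottom; analogous boundary nodes close each row to propagate a single assignment.

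At each interior node $r_i^j$ I would choose a relation over its four incident (left, right, top, bottom) edges encoding two rules simultaneously: left $=$ right (propagate the assignment of $x_i$ along the row), and bottom $=$ top $\lor$ ``under the value encoded by the row, the literal of $x_i$ occurring in $C_j$ is satisfied'' (the literal information is hard-wired into the relation from $\phi$). The first obstacle is the requirement in \cref{def:bfr} that each complex relation be supported on a single even Hamming weight; I would handle this exactly as in \cref{sec:lower}, by pairing each vertical edge with a parallel ``negated'' copy so that the total weight at every node is constant, adding a dummy parallel horizontal edge when needed, and using a small boundary gadget of constant size to absorb global parity. Because these additions do not depend on $n$ or $m$, the arity and degree of every complex node remain bounded by a fixed constant.

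For the linear layout I would list the vertices column by column, placing all nodes of column $j$ (together with the associated boundary nodes) before those of column $j+1$. A cut between two adjacent columns is crossed precisely by the $n$ horizontal edges joining them plus a constant number of boundary edges, yielding $\cutw\le n+\O(1)$. Parsimony then follows by construction: fixing the selection of the horizontal edges of each row uniquely determines an assignment $\sigma$, and the vertical edges of each column are \emph{forced} by the relations to be the monotone ``satisfied-so-far'' indicator for $C_j$ under $\sigma$, with the bottom boundary accepting iff $\sigma\models C_j$. Hence solutions of $G$ correspond bijectively to satisfying assignments of $\phi$. The main technical hurdle is therefore not the combinatorial skeleton of the grid but the routine parity bookkeeping needed to fit the arity-$4$ relations into the \BFR framework, which follows the same pattern as the final modifications in the proof of \cref{thm:lower:satToBFR}.
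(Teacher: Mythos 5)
Your proposal follows essentially the same construction as the paper: an $n\times m$ grid of arity-$4$ complex nodes where rows carry variable assignments and columns carry "satisfied-so-far'' flags, boundary nodes $\HWeq{0}/\HWeq{1}$ close each column, the interior relations combine left$=$right propagation with the biconditional bottom $=$ top $\lor$ (literal of $x_i$ in $C_j$ is satisfied), parity is patched by paired negated edges and small boundary gadgets, and a column-by-column layout yields cutwidth $n+\O(1)$ with parsimony because the vertical edges are forced. The plan is correct and matches the paper's proof in all essential respects; the only omissions are the concrete bookkeeping details (the explicit relations $R^0,R^+,R^-$, the loop/merge tricks, and the one extra edge from $r_n^0$ to $r_n^{m+1}$), which you correctly identify as routine.
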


\subsection{High Level Construction}\label{sec:lowerCW:construction}
Recall, that for the pathwidth lower bound we grouped variables together.
This was needed to keep the pathwidth of the construction low.
But this increased the cutwidth of the graph.
Now, we do not group variables together but encode each variable on its own.
See \cref{fig:lowerCW:construction} for an example of the construction we describe formally in the following.

\begin{figure}
    \centering
    \includegraphics{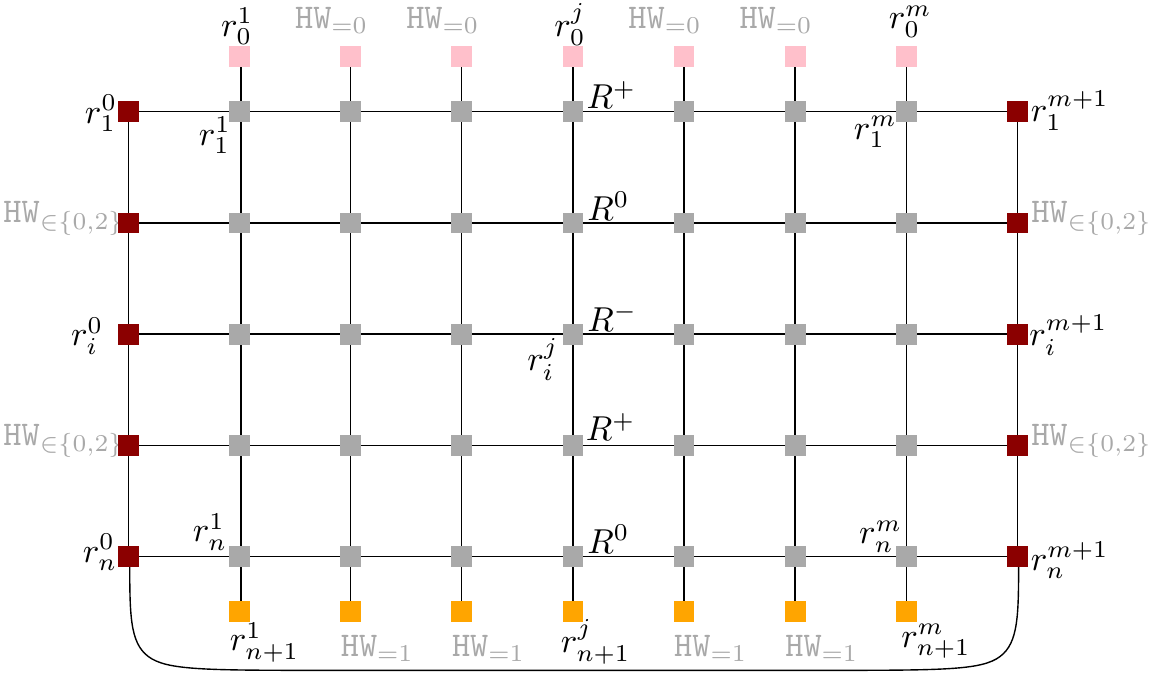}
    \caption{The example graph for a formula containing the clause $(x_1 \lor \bar x_3 \lor x_4)$.}
    \label{fig:lowerCW:construction}
\end{figure}
Let $x_1,\dots,x_n$ be the variables
and $C_1,\dots,C_m$ the clauses of $\phi$.
For each $i\in[n]$ and every $j\in[m]$ we create a vertex $r_i^j$.
We assign the relation $R^+$ to $r_i^j$ if $x_i$ appears positively in $C_j$,
$R^-$ if it appears negatively,
and otherwise $R^0$,
where $R^0$, $R^+$, and $R^-$ are defined later.
Additionally add vertices $r_i^0$ and $r_i^{m+1}$
with relation $\HWin{\{0,2\}}$ for all $i\in[n]$.
We say that the vertices $r_i^0,\dots,r_i^{m+1}$ form the $i$th \emph{row},
i.e.\ the row of variable $x_i$.
Create new nodes $r_0^j$ and $r_{n+1}^j$
and assign the relations $\HWeq{0}$ and $\HWeq{1}$ to them
for all $j\in[m]$, respectively.
We say the vertices $\{r_i^j\}_i$ form the $j$th \emph{column}.
We connect two nodes $r_i^j$ and $r_{i'}^{j'}$ by an edge
if $\abs{i-i'}\le1$ and $\abs{j-j'}\le 1$
for all $i,i'\in[0,n+1]$ and $j,j'\in[0,m+1]$,
i.e.\ if they are neighbors in the grid.

The idea is the same as for the pathwidth construction,
except that selecting the edges of the $i$th row
corresponds to setting the variable $x_i$ to true.
The edges between the nodes of a column represent
if a clause is already satisfied.
The relation $\HWeq{0}$ ensures that we start with an initially unsatisfied clause.
At each node $r_i^j$ we check
whether the assignment to this variable $x_i$ satisfies the clause $C_j$
and then force the output edge (i.e.\ the bottom edge) to be selected.
Otherwise we propagate the current state (i.e.\ the selection of edges).
Eventually we reach $r_{n+1}^j$ with relation $\HWeq{1}$
where the edge has to be selected and thus the clause must be satisfied.

The relations $R^0$, $R^+$, and $R^-$ accept exactly those inputs that satisfy all of the following conditions:
\begin{enumerate}
  \item
  The left edge is selected if and only if the right edge is selected.

  \item
  If the top edge is selected, the bottom edge is selected.

  \item
  Only for $R^+$:
  If the top edge is unselected and the left edge is selected,
  then the bottom edge is selected.

  \item
  Only for $R^-$:
  If the top edge is selected and the left edge is not selected,
  then the bottom edge is selected.
\end{enumerate}

\subparagraph*{Final Modifications of the Construction.}
As for the pathwidth lower bound this construction does not fulfill all requirements of a \BFR instance.
To avoid these problems, we first apply the same set of modifications we had for the pathwidth bound.
That is, adding a negated edge between $r_i^j$ and $r_{i+1}^j$ for all $i$ and $j$.
Further, we introduce the nodes $\hat r_0^j$ and $\hat r_{n+1}^j$ and connect them to the first and last node of a column.
Then we modify the relations that always the negated edge is selected if and only if the positive edge is not selected.
The last step merges the nodes $r_0^j$, $r_{n+1}^j$, $\hat r_0^j$, and $\hat r_{n+1}^j$ together into a node $r^j$ with appropriate relation.

But still the Hamming weight of the accepted inputs of the relations $R^0$, $R^+$, $R^-$, and $\HWin{\{0,2\}}$ is not equal.
Strictly speaking,
the Hamming weights of the inputs of these relations differs by exactly 2.
Hence, we add a loop to each node with such a relation.
This loop is counted as two inputs that are set to true
if the Hamming weight is too small.
One can use $\HWeq[4]{2}$ nodes to replace (parallel edges and) loops
to get a simple graph.

These modifications increase cutwidth only by additive terms.
As they are mostly only needed for technical reasons
we usually ignore them and only consider them when relevant for the proof.
It remains to show the correctness of the construction
and the bounds for the size and the cutwidth.
\begin{lemma}\label{lem:lowerCW:construction:completeness}
  If $\phi$ is satisfiable, then there is a solution for the \BFR-instance.
\end{lemma}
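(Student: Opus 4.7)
The plan is to construct a solution $S$ for the \BFR instance directly from a satisfying assignment $\sigma$ of $\phi$, paralleling the approach of \cref{lem:lower:construction:completeness}. For each $i\in[n]$, I select every horizontal edge of row $i$ (the edges between consecutive $r_i^j$ in the same row) if and only if $\sigma(x_i)$ is true, so the horizontal edges of any given row are either all selected or all unselected. Since $\sigma\models\phi$, for each clause $C_j$ there is a smallest index $k_j\in[n]$ such that the literal of $x_{k_j}$ occurring in $C_j$ is satisfied under $\sigma$. I then select the vertical edge between $r_i^j$ and $r_{i+1}^j$ exactly when $i\ge k_j$. For each negated edge introduced in the final modifications I take the opposite selection of its positive counterpart, and the mandatory loops contribute their two forced inputs to the Hamming weight of every complex node.

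Next, I would verify that every relation is satisfied. The row caps $r_i^0$ and $r_i^{m+1}$ with relation $\HWin{\{0,2\}}$ are in state $0$ or $2$ consistently with $\sigma(x_i)$, since the incident horizontal edges of the row share a uniform selection status. The column caps $r_0^j$ and $r_{n+1}^j$ (with $\HWeq{0}$ and $\HWeq{1}$) are satisfied because $1\le k_j\le n$, and after merging their endpoints into $r^j$ with the corresponding degree-$4$ relation the unique accepting input is produced. For an interior node $r_i^j$, the ``left iff right'' condition holds by row uniformity, the ``top selected implies bottom selected'' condition holds because $\{i : i\ge k_j\}$ is upward-closed, and the triggering rule for $R^+$ (resp.\ $R^-$) fires precisely at $i=k_j$ when the positive (resp.\ negative) literal of $x_i$ in $C_j$ is satisfied by $\sigma$, which is exactly when the rule demands that the bottom edge be selected.

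The main obstacle will be the bookkeeping around the parity modifications: I need to verify that the merged degree-$4$ relation at $r^j$ accepts the single configuration that $S$ induces at its four endpoints, and that the extra negated-edge and loop inputs added to each $R^0$, $R^+$, and $R^-$ node shift the Hamming weight of the accepted configurations to the common even value required by \cref{def:bfr}. Beyond this, the argument is a direct verification analogous to \cref{lem:lower:construction:completeness}, and no new conceptual ingredient is needed.
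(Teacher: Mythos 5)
Your proposal is correct and follows essentially the same route as the paper's (very terse) proof: select the horizontal edges of row $i$ according to $\sigma(x_i)$, and switch each column's vertical edges from unselected to selected at the first variable whose literal satisfies the clause, with the negated edges and loops absorbing the parity adjustments. Your more explicit verification that the implication-style conditions of $R^0$, $R^+$, $R^-$ hold at every grid node (using minimality of $k_j$ for the nodes above the switch point) is exactly the detail the paper leaves implicit.
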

\begin{proof}
  By the construction and the definition of the relations it directly follows that there is a solution for the \BFR-instance.
  For this we choose for each clause the first variable that can satisfy it
  to select the bottom edge.
  Further observe that we put the edge from $r_n^0$ to $r_n^{m+1}$ in the solution
  if the number of variables set to true is odd.
\end{proof}
\begin{lemma}\label{lem:lowerCW:construction:correctness}
  If there is a solution to the \BFR instance, then $\phi$ is satisfiable.
\end{lemma}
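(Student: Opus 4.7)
The plan is to mirror the proof of \cref{lem:lower:construction:correctness}, but in the simpler grid-per-variable setting: given a solution $S$, I would read off an assignment $\sigma$ from the horizontal edges of the rows and then verify clause satisfaction by a ``change of state'' argument along each column.

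First I would set $\sigma(x_i)=\text{true}$ iff some horizontal edge in row $i$ lies in $S$. Well-definedness is ensured by condition~1 in the definitions of $R^0,R^+,R^-$ (``left selected iff right selected''): walking across $r_i^1,\dots,r_i^m$, the selection status of consecutive horizontal edges must agree, so the pattern is constant along the interior of row $i$. This gives a truth value to every variable $x_i$.

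Next, I would argue that each clause $C_j$ is satisfied. Fix column $j$ and consider the positive vertical edges $e_0,e_1,\dots,e_n$ with $e_i$ joining $r_i^j$ to $r_{i+1}^j$. The merged boundary node $r^j$ (coming from $r_0^j$ with relation $\HWeq{0}$ and $r_{n+1}^j$ with $\HWeq{1}$, together with their $\hat r$ counterparts) forces $e_0\notin S$ and $e_n\in S$. Hence there is a smallest $i^\ast\in[n]$ with $e_{i^\ast-1}\notin S$ and $e_{i^\ast}\in S$; at the node $r_{i^\ast}^j$ the top vertical edge is unselected while the bottom one is selected. Now distinguish by the relation at $r_{i^\ast}^j$. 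If it were $R^0$ (the case $x_{i^\ast}\notin C_j$), then the only way to turn the bottom edge on is via condition~2 (top selected), so the above transition is forbidden; hence $x_{i^\ast}$ occurs in $C_j$. If it is $R^+$, then only condition~3 can force the bottom edge in this situation, so the horizontal edge at $r_{i^\ast}^j$ must lie in $S$, i.e.\ $\sigma(x_{i^\ast})=\text{true}$, satisfying the positive literal. Symmetrically, under $R^-$ the horizontal edge must be outside $S$, giving $\sigma(x_{i^\ast})=\text{false}$ and satisfying the negative literal. In either case $C_j$ is satisfied, and since $j$ was arbitrary $\sigma$ satisfies $\phi$.

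The main subtlety I expect is justifying that the stated conditions are the \emph{only} ways the bottom edge can be selected: this is the implicit reading of $R^0,R^+,R^-$ as saying ``bottom is selected iff one of the listed triggers fires.'' A secondary (essentially bookkeeping) step is to check that the ``Final Modifications'' paragraph does not break anything: the added negated edges and loops, and the merging of $r_0^j,r_{n+1}^j,\hat r_0^j,\hat r_{n+1}^j$ into $r^j$, only adjust Hamming weights to render the relations even and of equal weight, and preserve all of the implications between left/right/top/bottom edges used above. Once this is observed, the change-of-state argument on the positive vertical edges of each column yields the satisfying assignment and completes the proof.
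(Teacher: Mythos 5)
Your proof is correct and follows essentially the same route as the paper's: read the assignment off the (uniformly selected or unselected) horizontal edges of each row, locate the off-to-on transition of the vertical edges in each column using the boundary relations $\HWeq{0}$ and $\HWeq{1}$, and case-analyse the relation $R^0$, $R^+$, or $R^-$ at the transition node. The subtlety you flag --- that the bottom edge may be selected \emph{only if} one of the listed triggers fires --- is exactly the reading the paper's own proof relies on implicitly when it asserts that the transition node cannot carry $R^0$.
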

\begin{proof}
  Observe that the edges in each row are either all selected
  or all not selected by the definition of $R^0$, $R^+$, and $R^-$.
  Hence, we can define a consistent assignment to the variables as follows:
  if the edges of the $i$th row are selected, set $x_i$ to true and otherwise to false.
  For each $j$ especially the relations at the vertices $r_0^j$ and $r_{n+1}^j$ are satisfied by the solution.
  Hence, the edges between the nodes of this column
  are neither all selected nor unselected.
  This implies that there is some $i\in[n]$
  such that the top edge of $r_i^j$ is not selected,
  but the bottom edge is selected.
  By definition, $r_i^j$ cannot be labeled with $R^0$.
  If the vertex is labeled with $R^+$, this change can only happen if the edges of the row are selected.
  But then our assignment satisfies clause $j$.
  The same argument holds for the case when the vertex has relation $R^-$ and the edges of the row are not selected.
  Since this holds for all $j$, all clauses and hence $\phi$ is satisfied.
\end{proof}
To obtain a tight lower bound we need to analyze the cutwidth of the graph.
\begin{lemma}\label{lem:lowerCW:construction:bounds}
  The graph has $\O(nm)$ nodes of constant degree.
  The graph has cutwidth at most $n + \O(1)$.
\end{lemma}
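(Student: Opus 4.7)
The plan is to verify each part of the statement by direct inspection of the construction in \cref{sec:lowerCW:construction}. For the vertex count and degree bound, I would note that the grid contributes $(n+2)(m+2) = \O(nm)$ vertices $r_i^j$, the final merge step adds $\O(m)$ further boundary vertices $r^j$, and the parity fixes (loops, parallel-edge replacements by $\HWeq[4]{2}$ gadgets) add only a constant number of auxiliary vertices per affected edge and only a constant number of affected edges per column. Every $r_i^j$ is incident to at most a constant number of grid edges (plus the negated sibling edge, its loop, and their small replacement gadgets); every merged $r^j$ absorbs four simple nodes each of constant degree. Hence the total vertex count is $\O(nm)$ and every vertex has constant degree.

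For the cutwidth bound I would use the column-by-column linear layout obtained by listing
\(
  r_0^1, r_1^1, \dots, r_{n+1}^1, r^1, r_0^2, \dots, r_{n+1}^2, r^2, \dots, r_{n+1}^{m+1}, r^{m+1}
\)
in this order, inserting the constantly many auxiliary gadget vertices of each column immediately next to that column. Because grid edges only connect vertices in the same column or in two adjacent columns, a cut placed between columns $j-1$ and $j$ crosses only the at most $n+2$ horizontal edges between them plus $\O(1)$ edges incident to the two neighbouring merged $r^j$ nodes. A cut placed inside column $j$ after its top $k$ vertices have been processed decomposes into (a) the horizontal edges from column $j-1$ to the unprocessed bottom of column $j$, (b) the horizontal edges from the processed top of column $j$ to column $j+1$, and (c) the $\O(1)$ vertical and auxiliary edges that remain inside column $j$ (the positive and negated vertical edges, the edge to $r^j$, and the loop/gadget replacements). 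Groups (a) and (b) together comprise at most $n+2$ edges, so the total cut size is bounded by $n + \O(1)$.

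The main obstacle will be the bookkeeping imposed by the parity modifications: the negated vertical edges, the loops, the $\HWeq[4]{2}$ replacements that remove parallel edges and loops, and the merger of the four boundary vertices into a single $r^j$. Each of these only contributes a constant number of edges per column, so they only inflate the $\O(1)$ term, but the effect of each modification on both the endpoints' degrees and on every cut of the layout must be checked case by case. Once these constants are absorbed into the additive $\O(1)$, the three-part argument above applies uniformly to every cut of the layout, giving the claimed bound.
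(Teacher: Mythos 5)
Your proposal is correct and follows essentially the same route as the paper: the same column-by-column, top-to-bottom linear layout, with each cut charged at most one crossing horizontal edge per row plus $\O(1)$ vertical and auxiliary edges from the parity modifications. The only detail worth naming explicitly when you do the bookkeeping is the single long edge from $r_n^0$ to $r_n^{m+1}$ introduced for parity, which crosses almost every cut but contributes only $1$ to the additive $\O(1)$ term.
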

\begin{proof}
  To bound the cutwidth it suffices to give a linear layout with the stated width.
  For this we go through the graph column by column
  and enumerate the vertices from the top to the bottom.
  By this only the vertices of two columns have to be considered
  to determine the width of the cut.
  Each vertex has (at most) one edge to the other side of the cut
  (i.e.\ the edge to the next column).
  Only the last visited vertex can contribute more edges to the cut,
  but only constantly many.
  The merging of the four nodes into $r^j$ increases cutwidth at most by 4.
  Including the additional edge from $r_n^0$ to $r_n^{m+1}$
  that is present in almost every cut, we get the claimed bound.
\end{proof}
As we have seen there is a one-to-one correspondence between satisfying assignment to the formula and selection of edges in the graph.
For this to work,
we crucially need that we cannot choose which variable is ``responsible'' for satisfying the clause
as we always must choose the first such possibility in each column.
Hence we get the following corollary.
\begin{corollary}\label{corr:lowerCW:construction:parsimonious}
  The reduction is parsimonious,
  i.e.\ it preserves the number of solutions.
\end{corollary}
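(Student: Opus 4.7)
The plan is to turn the two implications proved in Lemmas~\ref{lem:lowerCW:construction:completeness} and \ref{lem:lowerCW:construction:correctness} into a bona fide bijection between satisfying assignments of $\phi$ and solutions of the constructed \BFR instance $G$. First I would define a decoding map $\pi$ from solutions to assignments: given a solution $S$, set $x_i$ to true iff all horizontal edges of row $i$ lie in $S$, and to false otherwise. The condition ``left iff right'' present in every relation $R^0, R^+, R^-$ guarantees that within a row the horizontal edges are either all selected or all unselected, so $\pi$ is well-defined; Lemma~\ref{lem:lowerCW:construction:correctness} already implies that $\pi(S)$ satisfies $\phi$.

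Next I would show that for every satisfying assignment $\sigma$ there is \emph{exactly one} solution $S$ with $\pi(S)=\sigma$. Existence is precisely Lemma~\ref{lem:lowerCW:construction:completeness}. For uniqueness, fix $\sigma$: the row edges of $S$ are then determined, and the vertical edges of each column $j$ are analysed independently. The top vertical edge $e_0^j$ is pinned to $0$ by the $\HWeq{0}$ constraint on $r_0^j$ (retained by the corresponding coordinate of the merged vertex $r^j$), and the bottom vertical edge $e_n^j$ is pinned to $1$ by the $\HWeq{1}$ constraint on $r_{n+1}^j$. Walking down column $j$, at each vertex $r_i^j$ the outgoing bottom edge is a deterministic function of the incoming top edge and the row value at position $i$: $R^0$ copies top to bottom, $R^+$ outputs the logical OR of top and left, and $R^-$ outputs the logical OR of top and the negation of left. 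Hence the sequence $e_0^j,\dots,e_n^j$ is monotone non-decreasing and its unique transition from $0$ to $1$ takes place at the smallest index $i$ at which the literal of $x_i$ in $C_j$ is satisfied by $\sigma$; since $\sigma$ satisfies $\phi$, such an index exists in $[n]$, so the boundary condition $e_n^j=1$ is automatically met, and the solution is fully forced by $\sigma$.

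The main obstacle I expect is justifying this deterministic reading of $R^0, R^+, R^-$ from the necessary-condition form in which they are stated in the construction; in particular one has to rule out a ``spontaneous'' switch of $e_i^j$ from $0$ to $1$ at an $R^0$ vertex, which would otherwise double-count solutions. This determinism can be extracted from the phrase ``accept exactly those inputs'' together with the proof of Lemma~\ref{lem:lowerCW:construction:correctness}, which explicitly uses that the switch cannot occur at an $R^0$ vertex; a small amount of care is then needed to record that the final parity-fix modifications (the negated parallel edges, the auxiliary $\hat r_0^j,\hat r_{n+1}^j$, the merging into $r^j$, and the loops replacing the remaining parity defects) are in one-to-one correspondence with the original accepting configurations, so they neither create nor destroy solutions. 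Once these points are pinned down, $\pi$ is a bijection onto the set of satisfying assignments and the reduction is parsimonious.
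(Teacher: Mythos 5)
Your proof is correct and follows essentially the same reasoning as the paper, which establishes parsimoniousness only via the informal observation immediately preceding the corollary (``we always must choose the first such possibility in each column''). You have turned that observation into a bona fide bijection: rows are pinned down by the assignment via the left-iff-right constraint, the column edges are then forced top-down starting from the $\HWeq{0}$ boundary, and the unique $0\to1$ transition occurs at the smallest $i$ whose literal satisfies $C_j$. This matches the paper's intent exactly.

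One remark worth flagging: you correctly noticed that the conditions listed for $R^0$, $R^+$, $R^-$, taken literally as necessary conditions, do not by themselves determine the bottom edge (e.g., for $R^0$ with top unselected, nothing in conditions 1--2 as written rules out bottom being selected; and condition 4 for $R^-$, as printed, is subsumed by condition 2 and is presumably a typo for ``top unselected''). The proof of \cref{lem:lowerCW:construction:correctness} asserts ``$r_i^j$ cannot be labeled with $R^0$'' when top is unselected and bottom is selected, which only holds under the intended deterministic (iff) reading that you adopt. So your resolution — reading the relations as computing the bottom edge deterministically from the top and row edges — is the correct one and is needed for the corollary to hold, and you were right to record that the parity-fixing gadgetry (negated edges, $\hat r_0^j,\hat r_{n+1}^j$, merging into $r^j$, loops) is itself deterministic and therefore does not spoil the count.
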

In the remaining part of the section we show the proof of the lower bounds
by combining this construction with the previous results from \cref{sec:dec,sec:opt,sec:count}.

\subsection{Decision Version}\label{sec:lowerCW:dec}
It suffices to combine the above reduction
with the realization results from \cref{sec:dec}.
\begin{proof}[Proof of \cref{thm:lowerCW:main}~(1)]
  We use the construction from \cref{thm:lowerCW:satToBFR} to transform any formula into a \BFR instance.
  Then we use \cref{thm:dec:realization} to replace every relation by its realization.
  This increases cutwidth at most by a constant factor.
  Executing the claimed algorithm with running time $\Ostar{(2-\epsilon)^{\cutw}}$ 
  on this graph
  directly contradicts SETH.
\end{proof}
This directly implies the analog lower bound for \MinBFactor.

\subsection{Optimization Version}\label{sec:lowerCW:opt}
Once more it suffices to only consider \MaxBFactor if $0\in B$.
For the case $0\notin B$
the hardness follows from the decision version.
We use the same approach as before.
Let $G$ be a \BFR instance obtained from \cref{thm:lowerCW:satToBFR}.
Replace all relations by their realization according to \cref{thm:opt:realization}
with penalty being twice the degree
and let the resulting \BFactor instance be $G'$.
We make use of the following lemma:
\begin{lemma}\label{lem:lowerCW:opt:helper}
  For all pairs $G,G'$ of graphs resulting from the above modification,
  there is an efficiently constructible constant $\gamma$ such that
  $G$ has a solution if and only if the largest solution for $G'$ has size $\gamma$.
\end{lemma}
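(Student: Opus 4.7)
The plan is to mimic almost verbatim the argument used in \cref{lem:opt:targetValue} for the pathwidth construction, adapting the bookkeeping to the grid-like structure produced by \cref{thm:lowerCW:satToBFR}. Concretely, for every complex node $v$ of $G$ let $\alpha_v$ denote the target value of the realization inserted at $v$ by \cref{thm:opt:realization}, using penalty $\beta_v \deff 2\deg(v)$. Every edge of $G$ between two complex nodes is then counted twice when we naively sum the $\alpha_v$, so we set
\[
\gamma \deff \sum_{v\in V_C} \alpha_v - Z,
\]
where $Z$ is the number of edges of $G$ that join two complex nodes (equivalently, the number of pairs/positive-negative edges in the column paths and the neighbouring-row edges after the parity modifications). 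Since $V_C$ covers all edges of $G$ (every vertex of $G$ obtained from \cref{thm:lowerCW:satToBFR} is complex apart from those later absorbed into realizations), this value is efficiently computable.

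For the ``only if'' direction, given a solution of $G$ simply extend it inside each realization to a solution attaining size exactly $\alpha_v$ at the node $v$; summing and correcting for double counted edges yields a solution of $G'$ of size exactly $\gamma$. For the ``if'' direction I would take a solution $S$ of $G'$ of size $\gamma$ and let $W\subseteq V_C$ be the set of complex nodes whose relation is \emph{not} realized by $S$. If $W=\emptyset$ then the dangling edges selected at each realization witness a valid assignment for all complex relations of $G$, hence a solution to $G$. So it suffices to derive a contradiction when $W\neq\emptyset$.

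For each $v\in W$ the realization contributes at most $\alpha_v-\beta_v$ internal edges, and at most $\deg(v)$ further edges can be collected through the dangling incidences, so the total contribution of $v$ is at most $\alpha_v-\beta_v+\deg(v)=\alpha_v-\deg(v)$ by our choice $\beta_v=2\deg(v)$. As in \cref{lem:opt:targetValue}, partition the edges of $G$ between complex nodes according to how many endpoints lie in $W$, writing $Z=Z_0+Z_1+Z_2$ and letting $Y_0,Y_1,Y_2$ denote the numbers of such edges that are actually selected in $S$. Valid realizations force $Y_0=Z_0$ and $Y_1=Z_1$ (otherwise the non-$W$ endpoint would also fail to be realized). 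Plugging everything in, the size of $S$ is at most
\[
\sum_{v\in V_C\setminus W}\alpha_v + \sum_{v\in W}(\alpha_v-\deg(v)) - Z_0 - Z_1 - Y_2,
\]
and comparing this with $\gamma=\sum_{v\in V_C}\alpha_v-Z_0-Z_1-Z_2$ the desired contradiction $|S|<\gamma$ reduces to showing $Z_2 < \sum_{v\in W}\deg(v)+Y_2$. This holds because every edge counted by $Z_2$ has both endpoints in $W$, so charging each such edge to one of its endpoints gives at most $\deg(v)/2$ contributions per $v\in W$, which is strictly less than $\deg(v)$.

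The main obstacle is exactly this last combinatorial inequality: one needs to make sure that the penalty $2\deg(v)$ truly dominates the double-counting introduced by edges between two failing complex nodes. The grid structure of the cutwidth construction makes the incidences between complex nodes denser than in the pathwidth construction, but each complex node still has only $\O(1)$ neighbours among complex nodes (horizontal, vertical, and the positive/negative parity edges), so the inequality goes through with room to spare. Everything else is a direct re-run of the pathwidth argument.
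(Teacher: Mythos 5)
There is a genuine gap, and it sits exactly where you flagged the "main obstacle" — but the problem is not the final combinatorial inequality, it is the accounting that precedes it. Your definition $\gamma=\sum_{v}\alpha_v-Z$ and the claim $Y_0=Z_0$, $Y_1=Z_1$ both rest on the assumption that every edge of $G$ joining two complex nodes whose relations are satisfied is actually \emph{selected} in a valid solution. That was true in the pathwidth construction of \cref{lem:opt:targetValue}, where edges between complex nodes occur only as positive/negated \emph{pairs} with exactly one edge of each pair selected (so subtracting the number of pairs $X$ exactly cancels the double count). In the cutwidth construction it is false: the horizontal edges inside a row are all selected or all unselected depending on whether the corresponding variable is set to true, so the number of selected edges between complex nodes is not $Z$, an edge with both endpoints satisfied can be unselected (breaking $Y_0=Z_0$), and even your "only if" direction produces a solution of size $\sum_v\alpha_v-(\text{selected edges of }G)>\gamma$ whenever some row encodes a false variable. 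You also do not account for the loops added in the final modifications, each of which contributes $2$ to the Hamming weight of its node but only one edge to the solution.

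What saves the lemma — and what the paper's proof actually does — is that after the modifications \emph{every} vertex of $G$ is complex and every relation accepts only inputs of one fixed Hamming weight. One can therefore split each edge of $G$ into two half-edges, assign each half to its endpoint, and define a per-node value $\gamma_v=\alpha_v+\delta_v$ (with $\delta_v$ correcting for the loop) that is the same for every accepted input; then $\gamma=\sum_v\gamma_v$ with no global subtraction, and the $Z_i/Y_i$ case analysis disappears entirely: a single node $v$ with unsatisfied relation contributes at most $\alpha_v-\beta_v+\deg(v)=\alpha_v-\deg(v)<\gamma_v$, which already forces $\abs{S}<\gamma$. If you want to keep your formula, you must replace $Z$ by the (constant, by the fixed-Hamming-weight property) number of \emph{selected} $G$-edges in a valid solution and redo the failure analysis accordingly; as written, the transplanted pathwidth bookkeeping does not go through.
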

\begin{proof}
  As $G$ only consists of complex nodes,
  we split each edge into two half-edges
  and assign both halves to their corresponding endpoint,
  to avoid counting edges twice.

  Now let $v$ be an arbitrary complex node in $G$.
  By the definition of realizations,
  there is an $\alpha_v$ such that any partial solution in $G'$
  that respects the relation of $v$
  (i.e.\ the correct edges are selected)
  has size exactly $\alpha_v$
  (as the Hamming weight of all accepted inputs is equal,
  the $\alpha_v$ can take care of the half-edges).
  Otherwise the size of this partial solution is at most $\alpha_v-\beta_v$
  where $\beta_v$ is the penalty for this realization.
  Recall, that we added loops to most nodes
  such that we can replace the relations by their realization.
  We define $\delta_v$ for each node to take care of this.
  For the nodes $r^j$ we have $\delta_v=0$
  while we set $\delta_v=1$ for all other nodes.
  We define $\gamma_v = \alpha_v + \delta_v$
  and set $\gamma \deff \sum_{v \in V(G')}\gamma_v$.
  By assumption the $\alpha$s are efficiently constructible,
  so is $\gamma$.

  The ``only if'' direction of the claim follows directly by the definition of $\gamma$.
  Now assume there is no solution for $G$.
  The solution for $G'$ cannot be larger than $\gamma$
  because then for some node $v$ the partial solution would be larger than $\gamma_v$
  which is not possible by the definition of a realization and $\alpha_v$.

  As $G$ has no solution,
  for each subset $S$ of selected edges,
  there must be at least one node $v$ with relation $R$ that is not satisfied.
  Assume without loss of generality there is just one such node.
  By assumption we can extend $S$ to a set $S' \subseteq E(G')$ that satisfies the relations at all nodes except for $v$.
  By the definition of a realization we know
  that any partial solution for the graph realizing $R$ must have size at most $\alpha_v-\beta_v$
  as the selected dangling edges are not valid for $R$.
  Hence, we can bound the size of the partial solution by $\alpha_v-\beta_v+\deg(v)$
  as all dangling edges could be selected.
  But from $\beta_v \ge 2\deg(v)$ we get
  that this is strictly smaller than $\gamma_v$.
  Hence, the total solution size is strictly smaller than $\gamma$.
\end{proof}
Now we have everything ready to prove the $\Ostar{(2-\epsilon)^\cutw}$ lower bound
for \MaxBFactor when $0\in B$, assuming SETH.
\begin{proof}[Proof of \cref{thm:lowerCW:main}~(2)]
  Use the construction from \cref{thm:lowerCW:satToBFR} to get a \BFR instance.
  Use \cref{thm:opt:realization} to replace every relation by a graph,
  where the penalty of each realization is at least twice the degree of the relation.
  As each complex node has constant degree,
  this replacement increases the size only by a constant factor depending only on $B$.
  Assume the claimed algorithm with running time $\Ostar{(2-\epsilon)^\cutw}$ exists
  and run it on this graph to get the size of the largest solution.
  We compare this to the constant from \cref{lem:lowerCW:opt:helper}
  and can decide the satisfiability of the formula
  which directly contradicts SETH.
\end{proof}

\subsection{Counting Version}\label{sec:lowerCW:count}
For the counting version we again make use of the chain of Turing-reductions from \BFR to \CountBFactor.
\begin{proof}[Proof of \cref{thm:lowerCW:main}~(3)]
  Use the construction from \cref{thm:lowerCW:satToBFR} to transform any formula into a \BFR instance.
  Then we use \cref{thm:count:bfr-to-gen-matching} to get polynomially many instances of \CountBFactor.
  Assume the claimed algorithm with running time $\Ostar{(2-\epsilon)^\cutw}$ exists and run it on these graphs.
  This yields a faster algorithm for \#SAT contradicting \#SETH.
\end{proof}

\appendix
\bibliography{bib}

\end{document}